\documentclass[12pt]{article}
\usepackage[utf8]{inputenc}
\usepackage{graphicx}
\usepackage{natbib}
\usepackage{color}
\usepackage{amsmath}
\usepackage{amssymb}
\usepackage{amsfonts}
\usepackage[hidelinks]{hyperref}
\usepackage{amsthm}
\newtheorem{theorem}{Theorem}
\newtheorem{corollary}{Corollary}
\newtheorem{lemma}{Lemma}
\newtheorem{proposition}{Proposition}
\newtheorem{assumption}{}

\usepackage[algo2e]{algorithm2e} 
\usepackage{algorithm}
\usepackage{verbatim}
\usepackage{caption}
\usepackage{setspace}
\usepackage{subcaption}
\usepackage{multirow}
\usepackage{lineno}
\usepackage[left=1in,right=1in,top=1in,bottom=1in]{geometry}
\usepackage{cancel}
\usepackage{dsfont}

\newcommand{\bc}{\boldsymbol c}

\newcommand{\bu}{\boldsymbol{u}}
\newcommand{\bv}{\boldsymbol{v}}

\newcommand{\btheta}{\boldsymbol{\theta}}
\newcommand{\bpsi}{\boldsymbol{\psi}}

\newcommand{\blambda}{\boldsymbol{\lambda}}
\newcommand{\bA}{{\bf A}}
\newcommand{\bB}{{\bf B}}
\newcommand{\bX}{{\bf X}}
\newcommand{\bY}{{\bf Y}}
\newcommand{\bZ}{{\bf Z}}
\newcommand{\bP}{{\bf P}}
\newcommand{\bbeta}{\boldsymbol{\beta}}

\title{Universal Inference for model selection on networks}
\author{Eric Yanchenko\footnote{Human and AI Center, Akita International University. \url{eyanchenko@aiu.ac.jp}},  Jonathan P. Williams\footnote{Department of Statistics, North Carolina State University.} and Ryan Martin$^{\dagger}$}

\begin{document}

\maketitle
\thispagestyle{empty}


\begin{abstract}
\noindent
Model selection and hypothesis testing are important tasks on networks. A key challenge lies in the inherent dependence in network data, as well as the fact that typically only a single realization is observed. As a result, many existing methods must be carefully tailored to specific models and only come with asymptotic theoretical guarantees. In this work, however, we propose a general model selection framework using Universal Inference, making our method widely applicable to various testing scenarios. Since Universal Inference requires two sets of data, we employ edge sampling to obtain proper networks with tractable dependence. We prove that the proposed statistic is an e-value, thus controlling the type I error rate in finite samples under nearly any hypothesis test. To our knowledge, this is the first Universal Inference-type statistic constructed from dependent splits of data as well as the first finite-sample testing guarantee for hypothesis testing on networks. We also prove that the logarithm of the test statistic diverges to positive infinity under various alternative models. On simulated and real-world networks, the proposed method performs well on tasks such as choosing the random graph model and the number of communities.

\end{abstract}

\section{Introduction}\label{sec:intro}

Model selection and hypothesis testing are classical statistical inference tasks and have important applications in network-valued data. Consider the following motivating examples. First, numerous random graph models can be fit to an observed network. A non-exhaustive list includes: Erdos-Renyi \citep{Erdos1959}, Chung-Lu \citep{chung2002average}, stochastic block model \citep{holland1983stochastic}, degree-corrected block model \citep{karrer2011stochastic}, and popularity-adjusted block model \citep{senguptapabm}. Practitioners are thus faced with a difficult task: given a network, which model is most appropriate? For example, does the network have a statistically significant community structure, or is the Erdos-Renyi model appropriate? Compared with a stochastic block model, is the extra flexibility of the degree-corrected block model truly warranted? Each of these model selection tasks can be recast as a hypothesis testing problem. For example, we may be interested in testing
$$
    H_0:\text{stochastic block model vs. }H_1:\text{degree-corrected block model,}
$$
which is effectively testing if the degree parameters in the degree-corrected block model are all identical ($H_0$) or not ($H_1$). Once a random graph model has been chosen, most models still require the user to set other parameter values. For example, in block models, the number of communities, $K$, must be pre-specified. While there may be situations where $K$ is known based on domain-knowledge, in general, it is unknown. Thus, choosing the number of communities is another important model selection task that can also be framed as a hypothesis test, i.e., 
$$
    H_0:K=2\text{ vs. }H_1:K=3.
$$

Many existing works tackle only one of these tasks. For example, for selecting the number of communities $K$, \cite{wang2017likelihood} and \cite{hu2020corrected} propose likelihood-based approaches while \cite{bickel2016hypothesis} use the largest eigenvalue to recursively test against an Erdos-Renyi null model. For selecting a random graph model, \cite{yanchenko2024generalized} propose a model-agnostic parameter with analytic and bootstrap cut-off. For these methods, however, the solution is tailored to a specific test and cannot be easily generalized to other settings. Recently, more general model selection paradigms have been introduced as a starting point and/or when no other methods exist. \cite{li2020network} propose an edge cross validation approach by randomly sampling node pairs and using matrix completion. \cite{chakrabarty2025network} and \cite{bhadra2025unified} leverage loss functions with the former sub-sampling the network and the latter proposing a unified approach to select the best block model from a natural hierarchy. While these methods are more generally applicable across various testing settings, the theoretical results must be derived for each particular model and only apply in the asymptotic regime. Moreover, the loss-function-based methods \citep[e.g.,][]{li2020network, chakrabarty2025network, bhadra2025unified} simply provide a rejection decision rather than a statistical measure of evidence such as a p- or e-value.

One recent approach to handle difficult hypothesis testing settings is Universal Inference \citep{wasserman2020universal, dey2025generalized}. Given independent copies of the data, Universal Inference uses sample splitting to construct a likelihood ratio-like statistic with finite-sample control of type I errors for nearly any sets of hypotheses. Universal Inference is also a special case of e-values, an exciting alternative to p-values \citep{vovk2021values, wang2022false, grunwald2024safe}. Given the complexities of testing on networks, Universal Inference would appear to be a promising approach. The main problem, however, is that Universal Inference requires {\it independent copies} of the data. With network data, however, we typically only obtain a single observation, and even in situations where there are multiple copies, i.e., temporal networks \citep{holme2012temporal}, the successive network realizations are likely correlated. Prior approaches to split networks would also run into difficulties if they were to be used in Universal Inference. For example, \cite{li2020network} split the network via node-pairs, but this does not result in proper matrices, thus requiring an additional matrix completion step. Data Fission \citep{leiner2025data} has also been used to split networks, but the resulting pieces exhibit a complicated dependence structure \citep{leiner2024graph, ancell2025post}.

To overcome this challenge, we propose using {\it edge sampling}, randomly dividing the observed edges between two new networks. While the resulting networks exhibit some dependence, it is analytically manageable and still allows us to construct a Universal Inference testing statistic. We prove that our statistic is an e-value under nearly any null hypothesis, meaning that our test maintains a proper type I error rate even in finite samples. To our knowledge, this is the first Universal Inference-like statistic constructed from dependent data, as well as the first finite-sample guarantee for hypothesis testing on networks. We also prove that the statistic is theoretically guaranteed to diverge to infinity under various alternative models. See Figure \ref{fig:diagram} for an overview of the proposed method. 

\begin{figure}
    \centering
    \includegraphics[width=0.99\linewidth]{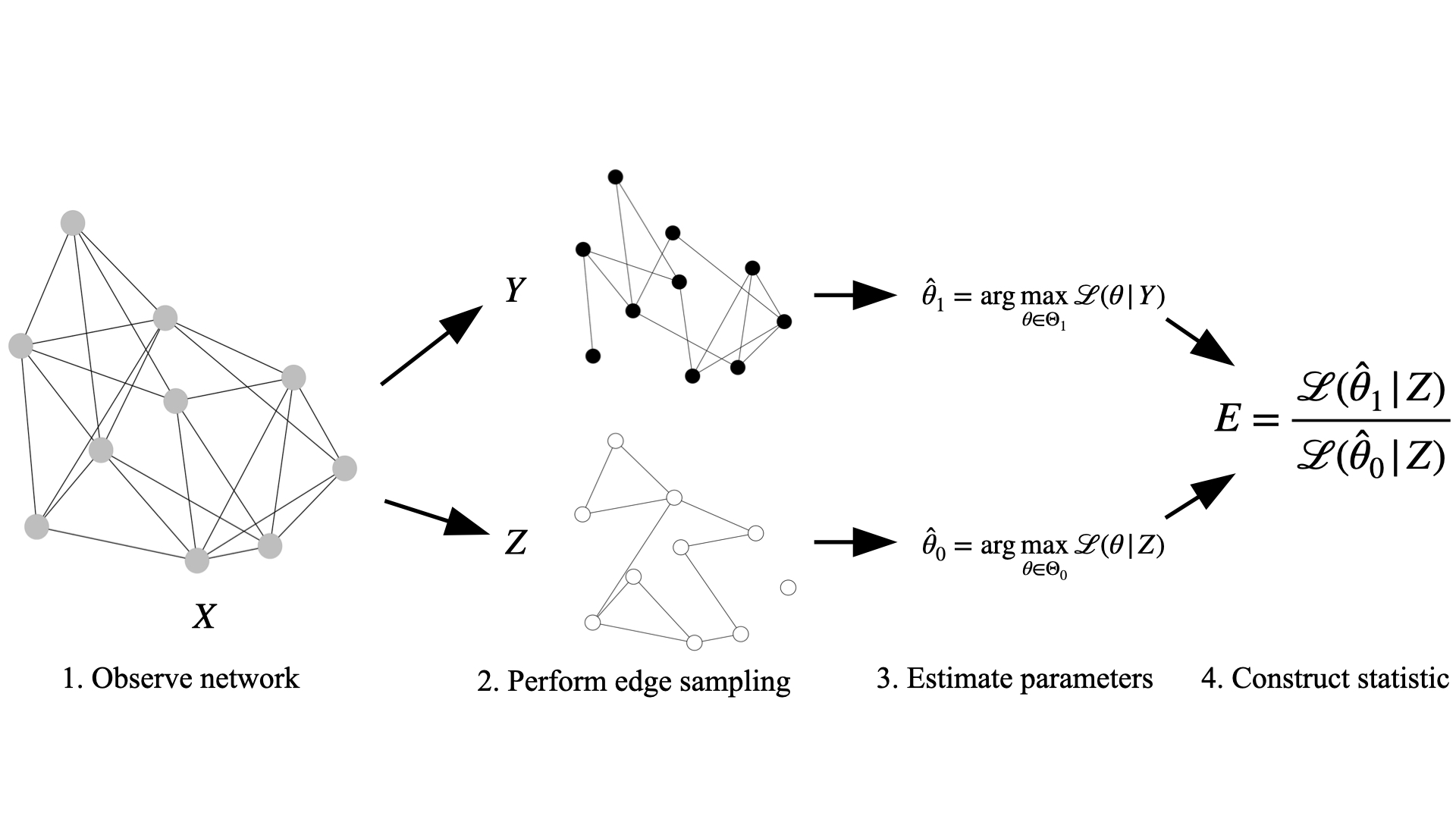}
    \caption{Visual schematic of the proposed method.}
    \label{fig:diagram}
\end{figure}

The remainder of this paper is structured as follows. In Section \ref{sec:back}, we give the necessary background on random graph models and Universal Inference. The proposed method is presented in Section \ref{sec:method} with theoretical results in Section \ref{sec:theory}. Experiments on synthetic and real-world data are the topic of Section \ref{sec:data}, and the conclusion can be found in Section \ref{sec:conc}.

\section{Background}\label{sec:back}
\subsection{Notation}
We begin by introducing notation before providing the necessary background for this work. We represent a network with $n$ nodes by an $n\times n$ adjacency matrix $\bA\in\{0,1\}^{n\times n}$ where $A_{ij}=1$ if nodes $i$ and $j$ have an edge, and 0 otherwise. For simplicity, we only consider undirected, unweighted networks with no self-loops, but the ideas in this work can easily be generalized. We let $\bP(\btheta)$ represent the $n\times n$ data-generating matrix defined by parameters $\btheta$ such that $\bA\sim \bP(\btheta)$ corresponds to
$$
    A_{ij}\mid\btheta\stackrel{\text{ind.}}{\sim}\mathsf{Bernoulli}(P_{ij}(\btheta))
$$
for all $1\leq i<j\leq n$. When we want to emphasize that the network was generated from the {\it true} model parameters, we will denote this with a star, i.e., $\btheta^*$. 

When describing asymptotic rates, we say that a deterministic sequence $a_n=\mathcal O(b_n)$ if $\lim_{n\to\infty}a_n/b_n= c$ for some constant $c$, and $a_n=o(b_n)$ if $\lim_{n\to\infty}a_n/b_n=0$. For a sequence of random variables, $X_n=\mathcal O_p(a_n)$ if for any $\epsilon>0$ and some finite $N,M>0$, $\mathbb P(|X_n/a_n|>M)<\epsilon$ for all $n>N$; and $X_n= o_p(a_n)$ if for any $\epsilon>0$, $\lim_{n\to\infty}\mathbb P(|X_n/a_n|>\epsilon)=0$. We also use $\sim$ as a shorthand whenever the result is $\mathcal O(\cdot)$ or $\mathcal O_p(\cdot)$. Additionally, $X_n\lesssim Y_n$ means $X_n=\mathcal O_p(Y_n)$ and $X_n\leq Y_n$ for all $n$.

\subsection{Random graph models}\label{sec:rgm}
We now discuss the form of $\bP(\btheta)$ for several popular random graph models. First, the stochastic block model (SBM) \citep{holland1983stochastic} is a popular model of meso-scale structures in networks. We assume that nodes are partitioned into $K$ groups such that $\bc\in\{1,\dots,K\}^n$ is the group assignment vector where $c_i=k$ means that node $i$ belongs to group $k$ for $i\in\{1,\dots,n\}$ and $k\in\{1,\dots,K\}$. Then the probability of two nodes forming an edge depends solely on their group membership through the $K\times K$ matrix ${\bf B}$. Specifically, $\btheta=\{{\bf B},\bc\}$ and $P_{ij}(\btheta)=B_{c_i,c_j}$. A special case of the SBM is the Erdos-Renyi (ER) model \citep{Erdos1959} where $K=1$ and $P_{ij}(\btheta)=p$ for all $(i,j)$. In other words, the probability of an edge is the same for all node pairs.

As an extension of the SBM, the degree-corrected block model (DCBM) \citep{karrer2011stochastic} allows for degree heterogeneity. In addition to the group labels $\bc$ and probability matrix $\bB$, this model introduces a node-specific weight parameter $\bpsi\in(0,1)^n$ where $\psi_i$ corresponds to the degree of node $i$, i.e., larger $\psi_i$ means larger expected degree. In this way, the DCBM can model both block structure (through $\bB$) and degree heterogeneity (through $\bpsi$) in a network. Then the probability of an edge between nodes is $P_{ij}(\btheta)=\psi_iB_{c_i,c_j}\psi_j$ where $\btheta=\{\bc,\bB,\bpsi\}$. If the $B_{jk}$'s are equal for all $j$ and $k$, then the DCBM reduces to the Chung-Lu (CL) model \citep{chung2002average}, i.e., $P_{ij}(\btheta)=\psi_i\psi_j$, and does not encode block structure.

Finally, as an extension to the DCBM, \cite{senguptapabm} propose the popularity-adjusted block model (PABM) which allows each node to have a different popularity parameter for each group. Specifically, the authors define $\blambda\in(0,1)^{n\times K}$ where $\lambda_{ik}$ corresponds to the propensity of node $i$ to form an edge with a node in group $k$. Then $\btheta=\{\bc,\blambda\}$ and $P_{ij}(\btheta)=\lambda_{i,c_j}\lambda_{j,c_i}$. For identifiability, we will always assume that $\Lambda_{k\ell}=\Lambda_{\ell k}$ for all $k,\ell\in\{1,\dots,K\}$ where $\Lambda_{k\ell}=\sum_{i=1}^n\lambda_{i\ell}\mathds {1}(c_i=k)$ and $\mathds{1}(\cdot)$ is the indicator function. In words, this condition requires that the sum of the parameters for nodes in group $k$ connecting with group $\ell$ must equal the sum of the parameters for nodes in group $\ell$ connecting with group $k$. Clearly, the ER, SBM, CL and DCBM are all special cases of the PABM.

\subsection{Universal Inference}
Universal Inference \citep{wasserman2020universal, dey2025generalized} was proposed as a general hypothesis testing framework with finite-sample type I error guarantees. Let $X_1,\dots,X_{2n}\stackrel{\text{iid.}}{\sim} P_{\btheta^*}$ where $\btheta^*$ is the true parameter value which generated the data and $P_{\btheta^*}\in \{P_{\theta}:\btheta\in\Theta\}$. Additionally, let $\mathcal L(\btheta|\bX)$ be the associated likelihood function, but mis-specified models \citep{park2025robust} and/or loss functions based on risk-minimizers also exist \citep{dey2025generalized}.

Suppose the goal is to test $H_0:\btheta^*\in\Theta_0$ vs.~$H_1:\btheta^*\in\Theta_1$. In a typical hypothesis testing setting, we would derive the distribution of some test statistic $T(\btheta)$ when $\btheta\in\Theta_0$. As discussed in Section \ref{sec:intro}, however, this can be quite difficult with network data due to the inherent correlation in the data. 

Instead, the Universal Inference procedure splits the data into two parts, $D_0$ and $D_1$. We calculate $\hat\btheta_1\in\Theta_1$ which is any estimator under the alternative hypothesis constructed using {\it only} the data from $D_1$. We then evaluate the likelihood function using the data from $D_0$ but the parameters estimated from $D_1$, i.e., $ L_1=\mathcal L(\hat\btheta_1|D_0)$. Next, we estimate the parameter value $\hat\btheta_0\in\Theta_0$ under the null hypothesis using only the data from $D_0$, and then evaluate the likelihood using this same data, i.e., $ L_0=\mathcal L(\hat\btheta_0|D_0)$. We stress that $ L_1$ is evaluated on $D_0$ but the parameter estimates come from $D_1$; on the other hand, $L_0$ is both evaluated and estimated from $D_0$. The ratio of the two likelihoods is then the test statistic, i.e., 
\begin{equation}\label{eq:eval}
    E = \frac{ L_1}{ L_0}.
\end{equation}
\cite{wasserman2020universal} prove that if we reject $H_0$ whenever $E>1/\alpha$, then this test has a type I error rate of at most $\alpha$. This result holds for almost any testing scenario and does not require any further derivations. The statistic in \eqref{eq:eval} is also a special case of an e-value, a testing alternative to p-values \citep[e.g.,][]{grunwald2024safe}. A statistic $E$ is an e-value if $\mathbb E_{\btheta}E\leq1$ for all $\btheta\in\Theta_0$, i.e., the expected value is at most one under the null hypothesis. 

\section{Methodology}\label{sec:method}

\subsection{Procedure}\label{sec:steps}
In this section, we propose the main method of this paper.
Assume that we observe a network $\bA\sim \bP(\btheta)$ and are interested in testing
$$
    H_0:\btheta\in\Theta_0\text{ vs. }H_1:\btheta\in\Theta_1,
$$
where $\Theta_0$ and $\Theta_1$ correspond to different models. For example, $\Theta_0$ may correspond to an ER model while $\Theta_1$ might be a 2-block SBM. Regardless, we can write down a likelihood for each model, $\mathcal L(\bP(\btheta)|\bA)$. Given these hypotheses, we propose the following model-selection framework using Universal Inference.

\paragraph{Step 1: Edge sampling.}
To utilize Universal Inference, we first must split the data into two parts, but this is challenging with network data where we typically only obtain a single observation. One approach is sampling {\it node pairs} as in \cite{li2020network} since these are assumed to be conditionally independent. Sampling node pairs, however, does not lead to proper matrices (networks) so the authors must perform an additional matrix completion step to obtain a proper matrix (network). Another possibility to split the network would be to use Data Fission \citep[e.g.,][]{leiner2024graph, leiner2025data, ancell2025post}, but since the data are constructed from Bernoulli data, the resulting data splits exhibit a complicated dependence.

We instead propose using {\it edge sampling} to generate two copies of data from the observed network. Specifically, we construct two networks $\bY$ and $\bZ$ defined by 
$$
    Z_{ij}\sim
    \begin{cases}
        0&\text{if }A_{ij}=0\\
        \mathsf{Bernoulli}(\gamma)&\text{if }A_{ij}=1
    \end{cases}
$$
and $\bY=\bA-\bZ$ where $\gamma\in(0,1)$ is the probability that an edge from $\bA$ is allocated to $\bZ$. In words, if $A_{ij}=1$, then this edge will appear in $\bZ$ with probability $\gamma$, and in $\bY$ with probability $1-\gamma$. Marginally, $Y_{ij}\stackrel{\text{ind.}}{\sim}\mathsf{Bernoulli}((1-\gamma)P_{ij}(\btheta))$ and $Z_{ij}\stackrel{\text{ind.}}{\sim}\mathsf{Bernoulli}(\gamma P_{ij}(\btheta))$, but $Z_{ij}\not\perp Y_{ij}$. We will need to carefully account for this fact when deriving theoretical results. Please see Section \ref{supp:joint} of the Supplemental Materials for a more detailed discussion of the joint distribution between $Y_{ij}$ and $Z_{ij}$. Notably, we can derive the joint distribution exactly, and if the network is sparse, then for large $n$, $Y_{ij}$ and $Z_{ij}$ are approximately independent. Regardless, both $\bY$ and $\bZ$ yield information on the parameters of interest $\bP(\btheta)$. Additionally, if $\gamma$ is large, then more edges will be allocated to $\bZ$ than $\bY$ (with high probability), with $\gamma=1$ resulting in $\bZ=\bA$ and $\bY={\bf 0}_{n\times n}$.

\paragraph{Step 2: Find parameters under $H_1$ and evaluate on $\bZ$.}
In this step, we use $\bY$ to estimate $\btheta$ under the alternative hypothesis, i.e., $\hat\btheta_1=\arg\max_{\btheta\in\Theta_1}\mathcal L(\bP(\btheta)|\bY)$. Then using $\hat\btheta_1$, we evaluate the likelihood on $\bZ$, i.e., $\mathcal L(\frac{\gamma}{1-\gamma}\bP(\hat\btheta_1)|\bZ)$ where
$$
    \mathcal L(\bP(\btheta)\mid \bA)
    =\prod_{i<j}e^{-P_{ij}(\btheta)}P_{ij}(\btheta)^{A_{ij}}.
$$
We use the Poisson likelihood as an approximation to the Bernoulli likelihood to simplify theoretical derivations, which is a standard approximation in the literature \citep[e.g.,][]{senguptapabm}.  Additionally, the ratio of $\gamma$ terms is motivated by the edge sampling step. Recall that $\bY$ is such that $Y_{ij}\sim\mathsf{Bernoulli}((1-\gamma)P_{ij}(\btheta))$. Therefore, when we find $\hat\btheta_1$, this really yields the MLE for $(1-\gamma)\bP(\btheta)$. Thus, we divide by $1-\gamma$ to yield the MLE for $\bP(\btheta)$. But since this will be applied to the likelihood of $\bZ$ which has marginal distribution $Z_{ij}\sim\mathsf{Bernoulli}(\gamma P_{ij}(\btheta))$, multiplying by $\gamma$ will better match the likelihood of $\bZ$. For this reason, we calculate the likelihood as $L_1=\mathcal L(\frac{\gamma}{1-\gamma}\bP(\hat\btheta_1)|\bZ)$ with the correction for $\gamma$.

\paragraph{Step 3: Find parameters under $H_0$ and evaluate on $\bZ$.} 
Using $\bZ$ and assuming that the null hypothesis is true, we find $\hat\btheta_0=\arg\max_{\btheta\in\Theta_0}\mathcal L(\bP(\btheta)|\bZ)$. The likelihood function is then evaluated on $\bZ$, i.e., $L_0=\mathcal L(\bP(\hat\btheta_0)|\bZ)$. Since the parameters were estimated from the same data that are being used to evaluate the likelihood, we do not require any factors of $\gamma$. Notice this differs from step 2 where the parameter estimates and likelihood evaluation come from different networks. 

\paragraph{Step 4: Calculate e-value and reject if large.}
Finally, we compute the test statistic as the ratio of the two likelihoods, $E=L_1/L_0$, and reject if $E$ is large. We prove below that $E$ is an e-value. One complication is that we constructed the statistic using a Poisson likelihood, but the data are truly coming from a Bernoulli model; i.e., mis-specified model \citep[e.g.,][]{park2025robust}. Even more importantly, however, we do not have independence between $\bY$ and $\bZ$, which is usually needed when constructing the Universal Inference e-value statistic.  To our knowledge, this is the first Universal Inference e-value constructed from dependent splits of data.

\subsection{Example: Erdos-Renyi vs.~Stochastic Block Model}
To fix ideas, we detail the procedure for the specific example of testing if the network was generated from $H_0$: Erdos-Renyi vs.~$H_1$: SBM with $K=2$ communities. For the SBM, the likelihood function (assuming the Poisson approximation) is
$$
    \mathcal L_{SBM}(\bP(\btheta)|\bA)
    =\prod_{i<j} e^{-B_{c_i,c_j}} B_{c_i,c_j}^{A_{ij}}
$$
where $\btheta=\{ {\bf B},\bc\}$, ${\bf B}\in(0,1)^{K\times K}$ is the block connectivity matrix, and $\bc\in\{1,\dots,K\}^n$ is the community assignment vector, i.e., $c_i=k$ if node $i$ is in community $k$. Notice that the Erdos-Renyi model is a special case of the SBM, where $K=1$:
$$
    \mathcal L_{ER}(\bP(\btheta)|\bA)
    =\prod_{i<j}e^{-p} p^{A_{ij}}
    =e^{-pn(n-1)/2}p^{\sum_{i<j}A_{ij}}, \quad \btheta=\{p\}.
$$
Assume we observe a network $\bA$ and then use edge sampling to split it in two, $\bY$ and $\bZ$. Using $\bY$, we estimate the parameters of a two-block SBM, namely the community labels and block connectivity parameters, i.e.,
$$
    \hat\btheta_1=\{\hat{\bf B}, \hat\bc\}
    =\arg\max_{{\bf B},\bc} \prod_{i<j} e^{-B_{c_i,c_j}} B_{c_i,c_j}^{Y_{ij}}.
$$
Since $Z_{ij}\sim\mathsf{Bernoulli}(\gamma P_{ij}(\btheta))$ and $Y_{ij}\sim\mathsf{Bernoulli}((1-\gamma)P_{ij}(\btheta))$, we need to scale the estimates $\hat{\bf B}$ by $\gamma/(1-\gamma)$ to properly evaluate the likelihood function on $\bZ$. We then use these parameter estimates to evaluate the SBM likelihood on $\bZ$, i.e.,
$$
    L_1
    =\mathcal L_{SBM}(\tfrac{\gamma}{1-\gamma}\bP(\hat\btheta_1)|\bZ)
    =\prod_{i<j} e^{-\frac{\gamma}{1-\gamma}\hat B_{\hat c_i,\hat c_j}} \left(\tfrac{\gamma}{1-\gamma}\hat B_{\hat c_i,\hat c_j}\right)^{Z_{ij}}.
$$
Next, we estimate the Erdos-Renyi model parameter using $\bZ$
$$
    \hat p
    =\frac1{n(n-1)/2}\sum_{i<j} Z_{ij},
$$
and evaluate
$$
  L_0
  =\mathcal L_{ER}(\bP(\hat p)|\bZ)
  =e^{-\hat pn(n-1)/2}\hat p^{\sum_{i<j} Z_{ij}}.
$$
Finally, we compute $E=L_1/L_0$ and reject the null hypothesis if $E$ is large.


\subsection{Hyper-parameter selection}
While the approach is relatively automatic, there are a few considerations practitioners must be aware of when implementing it. Of course the models, number of communities, etc.~for the null and alternative hypotheses must be carefully chosen based on domain knowledge and the network of interest. Moreover, since the edge sampling step introduces some randomness, we suggest running the algorithm multiple times and taking the average e-value. Unless otherwise specified, we report the mean e-value over 100 data splits.

The other hyper-parameter that must be carefully selected is $\gamma\in(0,1)$ which controls the ``amount'' of data contained in $\bY$ and $\bZ$. If $\gamma$ is closer to zero, then $\bY$ will be relatively dense, while $\bZ$ will have comparatively more edges if $\gamma$ is near one. Naively, we may think that setting $\gamma=0.5$ is the best choice to ensure a similar density of $\bY$ and $\bZ$, but this can be shown intuitively and empirically not to be the case. Consider a simple example of testing $H_0$: ER vs.~$H_1$: SBM with $K=2$ communities. Based on our procedure, we split the adjacency matrix $\bA$ into $\bY$ and $\bZ$. Then $\bY$ is used to estimate the model parameters under the alternative hypothesis, which here means the community labels $\bc$ and block probabilities $\bB$, i.e., $n+K(K-1)/2$ parameters. On the other hand, the model parameters for the null model are estimated from $\bZ$, which, in this case, simply means estimating one parameter $p$, the mean edge probability of the ER model. Clearly, estimation under the null model is far easier and does not require as much data in order to achieve a good estimate. The alternative model, conversely, has many more parameters to estimate. Therefore, we should set $\gamma$ close to 0 to ensure that $\bY$, which is used to estimate the more complicated alternative model, has more edges. We described this for the case of testing a stochastic block model, but the idea holds more generally as typically the null model will be simpler compared to the alternative. Thus, we suggest setting $\gamma<0.5$ to improve the power of the test, with simpler null models allowing for even smaller values, i.e., $\gamma=0.1$. This approach is validated in the empirical experiments and aligns with suggestions in \cite{li2020network}.

\section{Theoretical results}\label{sec:theory}
In this section, we theoretically study the performance of the proposed hypothesis test, with all proofs left to the Supplemental Materials, Sections \ref{supp:validity}-\ref{supp:sbm}.
We begin by proving that the proposed method yields an e-value, which guarantees a type I error rate of at most $\alpha$, only requiring a single, minor assumption:
\begin{assumption}\label{a:edge}
    {\it Let $\bP(\hat\btheta)=\arg\max_{\btheta\in\Theta}\mathcal L(\bP(\btheta)|\bA)$. Then}
    $
    \sum_{ij} P_{ij}(\hat\btheta)=\sum_{ij}A_{ij}.
    $
\end{assumption}

\noindent
\cite{senguptapabm} show that \ref{a:edge} holds for the PABM, of which the ER, CL, SBM, and DCBM are all special cases. It is not clear that this necessarily holds for {\it all} models, e.g., Random Dot Product Graphs, which is why we explicitly include the assumption.

\begin{theorem}\label{thm:validity}
{\it Let $\bA\sim\bP(\btheta^*)$ and consider testing}
$$
    H_0:\btheta^*\in\Theta_0\text{ vs. }H_1:\btheta^*\in\Theta_1.
$$
{\it Under \ref{a:edge} and for any $n$, the statistic constructed in Section \ref{sec:steps} is an e-value, i.e.,}
$$
    \sup_{\btheta\in\Theta_0}\mathbb E_{\btheta}E\leq 1.
$$  
\end{theorem}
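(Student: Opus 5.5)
The plan is to run the usual Universal Inference argument, bounding $L_0$ from below by the likelihood at the true parameter, and then handle the dependence between $\bY$ and $\bZ$ by conditioning on $\bY$ and using the exact conditional law of $Z_{ij}$ given $Y_{ij}$.

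\textbf{Step 1 (lower bound on $L_0$).} Fix $\btheta^*\in\Theta_0$ and write $p_{ij}=P_{ij}(\btheta^*)$. Since $\hat\btheta_0$ maximizes $\mathcal L(\bP(\btheta)|\bZ)$ over $\Theta_0$, we have $L_0\ge \mathcal L(\gamma\bP(\btheta^*)|\bZ)$. This requires $\gamma\bP(\btheta^*)$ to be representable in the null model, which holds for the PABM family and its special cases: the scaling can be absorbed into $\bB$, $p$, or $\blambda$. Writing $q_{ij}=\frac{\gamma}{1-\gamma}P_{ij}(\hat\btheta_1)$, which is a function of $\bY$ only, this gives
$$
E\le \prod_{i<j} e^{-(q_{ij}-\gamma p_{ij})}\Big(\frac{q_{ij}}{\gamma p_{ij}}\Big)^{Z_{ij}} .
$$

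\textbf{Step 2 (condition on $\bY$).} The pairs $(Y_{ij},Z_{ij})$ are independent across $i<j$. Given $\bY$, the $Z_{ij}$ are therefore independent, with the following law:
\begin{itemize}
\item if $Y_{ij}=1$, then $Z_{ij}=0$;
\item if $Y_{ij}=0$, then $Z_{ij}\sim\mathsf{Bernoulli}(\pi_{ij})$ with $\pi_{ij}=\gamma p_{ij}/\{1-(1-\gamma)p_{ij}\}$, using the joint law from Section \ref{supp:joint}.
\end{itemize}
Hence
$$
\mathbb E[E\mid\bY]\le \exp\Big\{-\sum_{i<j}q_{ij}+\gamma\sum_{i<j}p_{ij}\Big\}\prod_{i<j:\,Y_{ij}=0}\Big(1-\pi_{ij}+\pi_{ij}\frac{q_{ij}}{\gamma p_{ij}}\Big).
$$
Using $1+x\le e^x$, the product is at most $\exp\{\sum_{Y_{ij}=0}(q_{ij}/(1-(1-\gamma)p_{ij})-\pi_{ij})\}$.

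\textbf{Step 3 (use Assumption \ref{a:edge}).} Applied to the fit on $\bY$, Assumption \ref{a:edge} gives the exact identity $\sum_{i<j}q_{ij}=\frac{\gamma}{1-\gamma}\sum_{i<j}Y_{ij}$. This removes the dependence of the normalizing exponent on $\hat\btheta_1$ and turns the factor $e^{-\sum q_{ij}}$ into $e^{-\frac{\gamma}{1-\gamma}\sum Y_{ij}}$. That factor has explicit expectation: by independence, $\mathbb E\, e^{-tY_{ij}}=1-(1-\gamma)p_{ij}(1-e^{-t})$ with $t=\gamma/(1-\gamma)$. The remaining task is to show that the leftover positive terms are dominated by this negative term and by the $-\pi_{ij}$ terms, pair by pair, after taking the outer expectation over $\bY$. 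These leftover terms are $\sum_{Y_{ij}=0}q_{ij}/(1-(1-\gamma)p_{ij})$ and $\gamma\sum p_{ij}$. The target is $\mathbb E\,\mathbb E[E\mid\bY]\le 1$.

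\textbf{Main obstacle.} The hard step is Step 3. The $q_{ij}$ with $Y_{ij}=0$ enter with coefficient $1/(1-(1-\gamma)p_{ij})>1$, while Assumption \ref{a:edge} only controls $\sum q_{ij}$ over all pairs. The discrepancy is a factor $1+\mathcal O(p_{ij})$, and it reflects both the Bernoulli-versus-Poisson misspecification and the negative dependence between $Y_{ij}$ and $Z_{ij}$.

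My first attempt would be to keep the exact binomial product rather than using $1+x\le e^x$, and to exploit $e^{-q}(1-\pi+\pi q/(\gamma p))$ jointly. I would try to show that the negative dependence, namely $Z_{ij}=0$ whenever $Y_{ij}=1$, exactly pays for the inflation factor: the pairs with $Y_{ij}=1$ contribute $e^{-q_{ij}}$ with no compensating $Z$-term. If that exact cancellation fails, I would instead carry an explicit $(1-(1-\gamma)p_{ij})$ correction through Steps 2 and 3 and check that it is absorbed by $\mathbb E\,e^{-tY_{ij}}$ via the elementary inequality $1-(1-\gamma)p(1-e^{-t})\le e^{-\gamma p}$, which appears to hold for $t=\gamma/(1-\gamma)$ and $p\in[0,1]$.
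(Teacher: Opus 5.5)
Your Steps 1 and 2 match the paper's proof: lower-bound $L_0$ by the likelihood at $\gamma\bP(\btheta^*)$, condition on $\bY$ using the exact law of $Z_{ij}\mid Y_{ij}$, then apply $1+x\le e^x$. The gap is Step 3, which is where the proof has to happen, and as planned it cannot succeed. It uses only the mass identity $\sum_{i<j}q_{ij}=\frac{\gamma}{1-\gamma}\sum_{i<j}Y_{ij}$ and never uses that $\hat\btheta_1$ \emph{maximizes} the likelihood of $\bY$.

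The mass identity alone does not imply the target. Write $a_{ij}=(1-\gamma)p_{ij}$, $\beta_{ij}=1-a_{ij}$ and $u_{ij}=q_{ij}-\gamma p_{ij}$. Take an Erdos--Renyi null (so $a_{ij}\equiv a$) and the $\bY$-measurable choice with $q_{ij}=0$ on $\{Y_{ij}=1\}$ and $q_{ij}$ constant on $\{Y_{ij}=0\}$. This satisfies the identity. Yet each pair with $Y_{ij}=0$ contributes a factor close to $1+\gamma p a/(1-a)^2$ to your conditional expectation, while $e^{-\sum u_{ij}}$ is only $e^{O_p(\sqrt N)}$. So $\mathbb E[\,\cdot\mid\bY]$ grows exponentially in the number $N$ of pairs. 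For the same reason no pair-by-pair domination exists: a pair with $Y_{ij}=0$ contributes $e^{-u}(1+u/\beta_{ij})>1$ for small $u>0$. Your fallback inequality is also false. Since $(1-\gamma)(1-e^{-t})<(1-\gamma)t=\gamma$, one gets $1-(1-\gamma)p(1-e^{-t})>e^{-\gamma p}$ for small $p>0$.

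The paper's proof supplies exactly this use of maximality, in two steps.
\begin{enumerate}
\item After $1+x\le e^x$ the exponent collapses algebraically. With $\hat P_{ij}=P_{ij}(\hat\btheta_1)$, one has $-\sum_{i<j}u_{ij}+\sum_{Y_{ij}=0}u_{ij}/\beta_{ij}=-\frac{\gamma}{1-\gamma}\sum_{i<j}(Y_{ij}-a_{ij})(\hat P_{ij}-a_{ij})/\beta_{ij}$. The negative dependence you wanted to exploit thus becomes a single cross term, and nothing has to be averaged over $\bY$.
\item The paper argues that this cross term is deterministically nonnegative. It combines the argmax inequality for $\hat\btheta_1$ against the competitor $(1-\gamma)\bP(\btheta^*)$, Assumption \ref{a:edge}, Jensen's inequality and $\log x\le x-1$. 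The competitor must be representable in $\Theta_1$, which is the analogue of your Step 1 remark.
\end{enumerate}

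If you rebuild this, be careful about the weights. Run directly on $\bY$, with Jensen weights $Y_{ij}/\sum_{k<l}Y_{kl}$, that chain yields $\sum_{i<j}(Y_{ij}-a_{ij})(\hat P_{ij}-a_{ij})/a_{ij}\ge 0$, i.e., weights $1/a_{ij}$ rather than $1/\beta_{ij}$. The paper reaches the weights $1/\beta_{ij}$ by restating the argmax inequality for $1-\bY$ and $1-\hat\bP$, which is not an equivalence for the Poisson likelihood. The two weightings agree in sign when $P_{ij}(\btheta^*)$ is constant (Erdos--Renyi null). For heterogeneous nulls, that step needs further justification.
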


\noindent
While the proof technique was inspired by \cite{wasserman2020universal}, there are various challenges here that require extra care. First, our e-value statistic is constructed assuming a Poisson model, but the edges are actually generated from a Bernoulli distribution. This bears some similarities to Universal Inference for mis-specified models \citep{park2025robust}. More important, however, is handling the dependence between $\bY$ and $\bZ$. To do this, we took an iterated expectation and leveraged the conditional distribution of $Z_{ij}\mid Y_{ij}$ (Section \ref{supp:joint}). After taking the expectation conditional on $\bY$, we prove that the resulting term is deterministically less than one starting from the definition of $\bP(\hat\btheta)$. To our knowledge, this is the first Universal Inference-type statistic constructed from dependent splits of data.

\begin{corollary}\label{cor:error}
  {\it Under \ref{a:edge}, the testing procedure described in Section \ref{sec:steps} has a type I error rate of at most $\alpha$, i.e.,}
$$
    \sup_{\btheta\in\Theta_0}\mathbb P_{\btheta}(E>1/\alpha)\leq \alpha.
$$
\end{corollary}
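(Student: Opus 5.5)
The corollary follows from Theorem \ref{thm:validity} by Markov's inequality, so the plan is short. Fix an arbitrary $\btheta\in\Theta_0$ and $\alpha\in(0,1)$.

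First, note that $E=L_1/L_0$ is a nonnegative random variable. Both likelihoods are products of terms $e^{-P_{ij}}P_{ij}^{A_{ij}}\ge 0$. We should also check that $L_0>0$ almost surely. Because $\hat\btheta_0$ maximizes the Poisson likelihood on $\bZ$, we have $L_0>0$ whenever some parameter in $\Theta_0$ gives positive likelihood. This holds in all the models considered. If $L_0=0$ could occur, we would adopt the convention $E=0$ or $E=\infty$ consistent with the proof of Theorem \ref{thm:validity}; the bound $\mathbb E_{\btheta}E\le1$ already accounts for this.

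Next, apply Markov's inequality to the nonnegative variable $E$:
$$
\mathbb P_{\btheta}(E>1/\alpha)\le \mathbb P_{\btheta}(E\ge 1/\alpha)\le \alpha\,\mathbb E_{\btheta}E.
$$
By Theorem \ref{thm:validity}, which holds under \ref{a:edge} for every $n$, we have $\mathbb E_{\btheta}E\le \sup_{\btheta'\in\Theta_0}\mathbb E_{\btheta'}E\le 1$. Hence $\mathbb P_{\btheta}(E>1/\alpha)\le\alpha$.

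Since $\btheta\in\Theta_0$ was arbitrary, taking the supremum over $\Theta_0$ gives the claim. The randomness here is joint over $\bA$ and the edge-sampling split, and Theorem \ref{thm:validity} bounds the expectation over exactly this joint law, so nothing further is needed.

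The only point that needs any care is that the probability is taken over the same joint distribution as the expectation in Theorem \ref{thm:validity}. The dependence between $\bY$ and $\bZ$ and the Poisson misspecification were already handled there. If the averaged e-value over multiple splits is used instead, the same argument applies, because an average of e-values has expectation at most one by linearity.
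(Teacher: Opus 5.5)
Your proposal is correct and matches the paper's argument. The paper also obtains the corollary immediately from Theorem~\ref{thm:validity} by applying Markov's inequality to the nonnegative statistic $E$ for each $\btheta\in\Theta_0$ and then taking the supremum. Your side remarks go slightly beyond the paper and are harmless: the positivity of $L_0$, and the fact that averaging the e-value over several edge-sampling splits preserves $\mathbb E_{\btheta}E\le 1$ by linearity, which matters because the paper reports averaged e-values.
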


\noindent
The corollary follows immediately from Theorem \ref{thm:validity} by applying Markov's inequality. We stress that this result applies for finite samples, whereas existing network hypothesis testing methods only establish type I error rates in an asymptotic regime \citep[e.g.,][]{bickel2016hypothesis, wang2017likelihood, yanchenko2024generalized, chakrabarty2025network, ancell2025post}. We also emphasize the generality of this theorem; we derived a single result that applies to nearly any random graph model of the form in Section \ref{sec:rgm}. For example, testing the number of communities in a network or which network model is most appropriate are both covered by this theorem and ensure that the proposed approach maintains a proper type I error. This again contrasts with previous works, which need to prove separate results for each null model. Moreover, previous general model selection approaches \citep[e.g.,][]{li2020network, chakrabarty2025network} only yield a model selection decision, i.e., which model fits best, but do not provide a statistical measure of evidence like a p- or e-value.

While we have shown that our statistic is an e-value, simply setting $E=1$ independent of the network would also yield an e-value, even though this is clearly a meaningless statistic. Therefore, we want to show that our proposed hypothesis test yields large values of $E$ when the network is generated under the alternative model. In the following theorem, we prove this for testing an ER model against a PABM (given the following assumptions) where we emphasize that the following are asymptotic results by using subscripts on $\bP_n$, $\bA_n$ and $E_n$.
\begin{assumption}\label{a:sparse}
    $\bP_n(\btheta^*)=\varrho_n\bP(\btheta^*)$ where $\varrho_n=o(1)$ and $n\varrho_n^2/\log^2n\to\infty$
\end{assumption}

\begin{assumption}\label{a:K}
   $K$ is fixed and known and the number of nodes in each (true and estimated) community is $\mathcal O(n)$ 
\end{assumption}

\begin{assumption}\label{a:pmin}
    $\min_{(i,j)}\{P_{ij}(\btheta^*)\}>0$
\end{assumption}

\begin{assumption}\label{a:ident}
    For any two nodes $j_1,j_2$ where $c^*_{j_1}\neq c_{j_2}^*$, the set $\{P_{ij_1}(\btheta^*)/P_{ij_2}(\btheta^*)\}_{i=1}^n$ has at least $K+1$ distinct values 
\end{assumption}

\begin{assumption}\label{a:set}
    If the average interaction level between any two large, distinct subsets of the vertex set is non-zero, it must be at least $\mathcal O(\varrho_n/\log n)$
\end{assumption}

\begin{assumption}\label{a:pbar}
    $\bar p=n^{-2}\sum_{ij}P_{ij}(\btheta^*)$ does not depend on $n$
\end{assumption}

\begin{theorem}\label{thm:er}
    {\it Let $\bA_n\sim \varrho_n\bP(\btheta^*)$ under \ref{a:sparse}. Assume that we are testing 
$$
    H_0:\btheta^*\in\Theta_{0}\text{ vs. }\btheta^*\in\Theta_1
$$
where $\Theta_{0}$ is the parameter space corresponding to the Erdos-Renyi model, i.e., $P_{ij}(\btheta^*)=p$ for all $(i,j)$, and $\Theta_1$ corresponds to the popularity-adjusted Block Model, i.e., $P_{ij}(\btheta^*)=\lambda_{ic_j^*}^*\lambda_{jc_i^*}^*$, under \ref{a:K}--\ref{a:pbar}. If $\btheta^*\in\Theta_1$ such that $\bA_n$ is generated from the alternative model, then there exists some $\epsilon>0$ such that, as $n\to\infty$, }
$$
    \mathbb P(\log E_n>\epsilon\varrho_nn^2)\to 1.
$$    
\end{theorem}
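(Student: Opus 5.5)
We prove the claim by writing $\log E_n=\log L_1-\log L_0$ explicitly and splitting it into a deterministic ``signal'' term plus fluctuation terms. Write $\tilde P_{ij}=\frac{\gamma}{1-\gamma}P_{ij}(\hat\btheta_1)$ for the rescaled PABM fit obtained from $\bY$, and $\hat p=\sum_{i<j}Z_{ij}/\binom n2$ for the ER fit on $\bZ$. Then
\begin{equation*}
\log E_n=\sum_{i<j}\Big\{Z_{ij}\log\frac{\tilde P_{ij}}{\hat p}-(\tilde P_{ij}-\hat p)\Big\}.
\end{equation*}
Assumption \ref{a:edge}, applied to $\bY$, gives $\sum_{i<j}\tilde P_{ij}=\frac{\gamma}{1-\gamma}\sum_{i<j}Y_{ij}$. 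Both this sum and $\sum_{i<j}\hat p$ equal $\gamma\varrho_n\sum_{i<j}P_{ij}(\btheta^*)\,(1+o_p(1))$ by Chernoff bounds on the edge counts. Their difference is $O_p(n\sqrt{\varrho_n})=o_p(\varrho_n n^2)$, so the linear terms are negligible. What remains is to show that $\sum_{i<j}Z_{ij}\log(\tilde P_{ij}/\hat p)\ge \epsilon\varrho_n n^2$ with probability tending to one.

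\textbf{Step 1: replace the estimate by the truth.} We invoke the PABM consistency theory of \cite{senguptapabm}, applied to $\bY\sim\mathsf{Bernoulli}((1-\gamma)\varrho_n\bP(\btheta^*))$. This is where Assumptions \ref{a:sparse}--\ref{a:set} enter: \ref{a:ident} and \ref{a:set} give label consistency (all but $o_p(n)$ nodes correctly classified), and \ref{a:pmin} together with the density rate in \ref{a:sparse} gives uniform consistency of $\hat\lambda$ on correctly labelled nodes. Together these yield $\tilde P_{ij}=\gamma\varrho_nP_{ij}(\btheta^*)(1+o_p(1))$ for all but $o_p(n^2)$ pairs. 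On the remaining pairs, Assumption \ref{a:pmin} and a crude lower bound on the fitted probabilities keep $|\log(\tilde P_{ij}/\hat p)|=O_p(\log n)$. Since $\sum_{i<j}Z_{ij}$ over $o(n^2)$ pairs is $o_p(\varrho_n n^2)$, those pairs contribute $o_p(\varrho_n n^2)$ even after multiplying by $\log n$; the rate $n\varrho_n^2/\log^2 n\to\infty$ is used here. Similarly, $\hat p=\gamma\varrho_n\bar p(1+o_p(1))$ by Assumption \ref{a:pbar}. Since the factors $\gamma\varrho_n$ cancel in the ratio, $\log(\tilde P_{ij}/\hat p)=\log(P_{ij}(\btheta^*)/\bar p)+o_p(1)$ on the good pairs.

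\textbf{Step 2: concentrate and identify the KL signal.} Conditional on $\bY$, the variables $Z_{ij}$ are independent. We then replace $\sum_{i<j} Z_{ij}\log(P^*_{ij}/\bar p)$ by its mean $\gamma\varrho_n\sum_{i<j}P^*_{ij}\log(P^*_{ij}/\bar p)$. The weights are bounded by \ref{a:pmin}, so a Bernstein bound makes the error $O_p(n\sqrt{\varrho_n})$. The dependence between $\bY$ and $\bZ$ is handled as in Theorem \ref{thm:validity}, via the conditional law of $Z_{ij}\mid Y_{ij}$ from Section \ref{supp:joint}; alternatively, note that the marginal mean of $Z_{ij}$ is exactly $\gamma\varrho_nP^*_{ij}$. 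Adding back the negligible linear terms, we obtain
\begin{equation*}
\log E_n=\gamma\varrho_n\sum_{i<j}\Big\{P^*_{ij}\log\frac{P^*_{ij}}{\bar p}-P^*_{ij}+\bar p\Big\}+o_p(\varrho_n n^2).
\end{equation*}
The bracket is the Poisson Kullback--Leibler divergence between rates $P^*_{ij}$ and $\bar p$, so each term is nonnegative. Because $\btheta^*$ is a genuine non-ER PABM, Assumption \ref{a:ident} forces $P^*_{ij}$ to deviate from $\bar p$ by a constant on a positive fraction of the pairs. Assumption \ref{a:K} ensures that these pairs number of order $n^2$. The sum is therefore at least $2\epsilon n^2$, and the claim follows with this $\epsilon$.

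\textbf{Main obstacle.} The hardest step is Step 1: transferring PABM estimation guarantees to $\bY$ at the sparsity level in \ref{a:sparse}, and controlling $\log\tilde P_{ij}$ on misclassified pairs, where a fitted probability could be near zero. If such a pair carries an edge in $\bZ$, the log-likelihood blows up. My first attempt would be to floor the estimated $\hat\lambda$ at order $\sqrt{\varrho_n}/\log n$, using Assumption \ref{a:set} to show that every fitted block-pair density is either zero with no $\bZ$ edges, or at least of order $\varrho_n/\log n$, so that the log is $O(\log n)$.
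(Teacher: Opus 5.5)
Your argument is sound in outline, and for the nuisance terms it follows the paper.

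\begin{itemize}
\item Your linear term $\frac{\gamma}{1-\gamma}\sum Y_{ij}-\sum Z_{ij}$ is the paper's $B_n$ plus the $X_n^{(1)}$ of Lemma~\ref{lemma:3}, both handled via Lemma~\ref{lemma:1}.
\item Your Step~1 bundles Lemma~\ref{lemma:2} ($\hat p$ versus $\gamma\varrho_n\bar p$) with the $X_n^{(2)}$, $X_n^{(3)}$ parts of Lemma~\ref{lemma:3}, using the Sengupta--Chen consistency results.
\end{itemize}

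The genuine difference is the signal term $C_n=\sum Z_{ij}\log(P^*_{ij}/\bar p)$.

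\begin{itemize}
\item The paper never computes its mean. It applies a Chernoff bound at the fixed exponent $t=-1/2$ (Lemma~\ref{lemma:5}). This gives the Hellinger-type threshold $2\gamma\bar pC$, where $2\bar pCn^2=\sum(\sqrt{P^*_{ij}}-\sqrt{\bar p})^2$. Positivity then reduces to the strict Cauchy--Schwarz inequality of Lemma~\ref{lemma:4}.
\item You instead concentrate $C_n$ around its mean. After adding the zero-sum terms $\bar p-P^*_{ij}$, you recognize the mean as a sum of Poisson Kullback--Leibler divergences.
\end{itemize}

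Your route identifies the first-order behaviour of $\log E_n/(\varrho_n n^2)$. It also permits a larger $\epsilon$, because termwise $a\log(a/b)-a+b\ge(\sqrt a-\sqrt b)^2$. For this term the paper's route needs no bound on the weights $\log(P^*_{ij}/\bar p)$, whereas you use Assumption~\ref{a:pmin} for Bernstein. It also makes positivity a one-line equality case.

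In both routes, the last step needs the per-pair divergence, averaged over pairs, to stay bounded away from zero uniformly in $n$. Assumption~\ref{a:ident} is only qualitative. So your claim that it forces a constant deviation on a positive fraction of pairs is an extra assumption. The paper makes the same assumption tacitly when it treats $C$ as free of $n$.

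Two points in Step~1 need repair.

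First, absorbing a factor $\log n$ on mislabelled pairs does not follow from the $o_p(n)$ misclassification count the paper takes from Sengupta--Chen. The density rate in Assumption~\ref{a:sparse} does not supply the missing $o_p(n/\log n)$. The paper's split avoids the $\log n$ entirely:
\begin{itemize}
\item Center $\hat\lambda_{ik}(\hat\bc)$ at the population value $\varrho_n^{1/2}\lambda^*_{ik}(\hat\bc)$ under the \emph{estimated} labels; this is term $X_n^{(2)}$.
\item Treat the mislabelling error $\log\lambda^*_{i\hat c_j}(\hat\bc)-\log\lambda^*_{ic^*_j}$ separately; this is term $X_n^{(3)}$, and it is $O(1)$ by Assumption~\ref{a:pmin}.
\item Control each piece by Cauchy--Schwarz against $\sum_iD_i^2=O_p(\varrho_n^2n^3)$.
\end{itemize}

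Second, flooring $\hat\lambda$ would change the statistic $E_n$, so it is not available. Nor can any assumption guarantee that a fitted block density is zero only when there is no $\bZ$-edge. The event $\{\hat\lambda_{ik}(\hat\bc)=0,\ M_{ik}>0\}$ has positive probability for every $n$; that is exactly why $\mathbb E\log E_n=-\infty$. You must instead show this event has vanishing probability, as the paper does. A union bound of order $nK\exp(-c\,n\varrho_n)$ suffices, since $n\varrho_n\gg\log n$ under Assumption~\ref{a:sparse}.
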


\noindent
This result implies that, under the alternative hypothesis and with high probability, the e-value grows at a rate of $\exp(\varrho_nn^2)$, diverging to positive infinity. Moreover, the asymptotic power of the test --- that rejects the null hypothesis whenever $E_n>1/\alpha$ --- is 1 for any $\alpha\in(0,1)$. Additionally, assumptions \ref{a:sparse}, \ref{a:K}, \ref{a:ident} and \ref{a:set} are technical assumptions inherited from \cite{senguptapabm} to establish that the estimated community labels of the PABM, $\hat \bc$, are asymptotically close to the true data-generating labels, $\bc^*$, and that the model parameter estimates, $\hat\lambda_{ik}$, converge to the true values, $\lambda^*_{ik}$. We highlight that \ref{a:sparse} means the networks are sparse since $\varrho_n=o(1)$, but the result also holds for dense networks, i.e., $\varrho_n=O(1)$. Additionally, the communities in the PABM are detectable thanks to \ref{a:ident} while \ref{a:pmin} ensures that $\log P_{ij}(\btheta^*)$ is well-defined for all $(i,j)$; an analogous assumption is made in \cite{chakrabarty2025network}.

To our knowledge, this is the first network model selection result incorporating the PABM as an alternative model. The CL, SBM and DCBM models are all special cases of the PABM so Theorem \ref{thm:er} can easily be adapted to each of these models. That being said, modifying the proof would require adjusting \ref{a:ident} for these special cases. In particular, for the CL model, \ref{a:ident} simplifies to requiring that not all weight parameters are equal such that the CL model is truly distinct from an ER model. For the SBM, \ref{a:ident} changes to requiring that any two rows in the block probability matrix $\bB$ have at least one disagreement, a standard assumption in the SBM literature \cite[e.g.,][]{bickel2009nonparametric}. For the DCBM, \ref{a:ident} would reduce to requiring $\{P_{ij_1}(\btheta^*)/P_{ij_2}(\btheta^*)\}_{i=1}^n$ to take at least two distinct values when $c_{j_1}^*\neq c_{j_2}^*$.

In the e-value literature, it is common to show that the expected value of the logarithm of the statistic diverges to positive infinity, i.e., $\mathbb E(\log E_n)\to\infty$ as $n\to\infty$ \citep[e.g.,][]{grunwald2024safe}. Due to the discrete nature of the network data, however, there is a small, but non-zero probability that $E_n=0$, which means $\mathbb E(\log E_n)=-\infty$. We formalize this in the following proposition for Bernoulli data, but stress that it more generally applies for random variables with non-zero probability of being zero.

\begin{proposition}\label{prop:ui}
    {\it Let $X_1,\dots,X_{2n}\stackrel{\text{iid.}}{\sim}\mathsf{Bernoulli}(\theta)$ and $E_n$ be the split-likelihood Universal Inference statistic from \cite{wasserman2020universal}. Assume that we are testing}
$$
    H_0:\theta=\theta_0\text{ vs. }\theta\neq \theta_0,
$$
{\it where $0<\theta_0<1$. Then $\mathbb E(\log E_n)=-\infty$.}
\end{proposition}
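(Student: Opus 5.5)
The plan is to exhibit a single event of positive probability on which $E_n=0$. Since $\log E_n\le \log E_n^+$ is bounded above on its complement, this forces $\mathbb E(\log E_n)=-\infty$.

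First we write the split-likelihood statistic explicitly. Split into $D_0=\{X_1,\dots,X_n\}$ and $D_1=\{X_{n+1},\dots,X_{2n}\}$. Let $\hat\theta_1=\bar X_{D_1}$ be the MLE on $D_1$; any estimator in $\Theta_1=[0,1]\setminus\{\theta_0\}$ works similarly, and taking the unrestricted MLE changes nothing. The null is simple, so the null parameter is $\theta_0$. Then
$$
E_n=\frac{\prod_{i\in D_0}\hat\theta_1^{X_i}(1-\hat\theta_1)^{1-X_i}}{\prod_{i\in D_0}\theta_0^{X_i}(1-\theta_0)^{1-X_i}}.
$$
Because $0<\theta_0<1$, the denominator is strictly positive for every realization, so $E_n$ is well defined and finite.

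Next, consider the event $\mathcal A=\{X_{n+1}=\dots=X_{2n}=0\}\cap\{X_1=1\}$. On $\mathcal A$ we have $\hat\theta_1=0$, so the numerator contains the factor $\hat\theta_1^{X_1}=0^1=0$, and hence $E_n=0$ and $\log E_n=-\infty$. By independence, $\mathbb P(\mathcal A)=\theta(1-\theta)^n$. This is strictly positive whenever $0<\theta<1$, and in particular under the null $\theta=\theta_0$. If $\theta$ is an endpoint, the symmetric event (all of $D_1$ equal to $1$ and some $X_i=0$ in $D_0$) is not available, so the claim is meant for $\theta\in(0,1)$, which covers the null case the paper emphasizes.

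Finally, to make the expectation meaningful, we bound the positive part: $E_n\le 1/\min(\theta_0,1-\theta_0)^n$ deterministically, so $\mathbb E(\log E_n)^+<\infty$. Writing $\mathbb E\log E_n=\mathbb E(\log E_n)^+-\mathbb E(\log E_n)^-$, the negative part is at least $\mathbb P(\mathcal A)\cdot\infty=\infty$, which gives the result. The only delicate point is this convention about expectations of extended-real random variables; the rest is routine. If the paper's version of the statistic truncates or uses a different alternative estimator, the same event argument goes through as long as the estimator can land on the boundary $\{0,1\}$ with positive probability.
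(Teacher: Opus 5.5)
Your argument is correct and is essentially the paper's: both exhibit a positive-probability event on which $\hat\theta_1$ lands on the boundary while $D_0$ contains an observation it assigns zero likelihood, so $E_n=0$. The paper computes $\mathbb P(\log E_n=-\infty)=(1-\theta)^n\{1-(1-\theta)^n\}+\theta^n(1-\theta^n)$ exactly, whereas you use the simpler sub-event of probability $\theta(1-\theta)^n$; your bound $E_n\le\min(\theta_0,1-\theta_0)^{-n}$ (so $\mathbb E(\log E_n)^+<\infty$) and your restriction to $\theta\in(0,1)$ are left implicit in the paper and make the conclusion fully rigorous.
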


\noindent
One approach to avoid this issue would be to ``pad'' the estimate of $\hat\theta$ such that it is bounded away from zero and one, but it is not obvious how this would affect the properties of the statistic. We view this as an open problem in the e-value / Universal Inference literature that is worthy of future consideration. \\

\noindent
Next, we prove a similar result for testing against the CL null model after adding an additional assumption.

\begin{assumption}\label{a:cl}
    The vector $(\bar P_{i}\bar P_{j})_{ij}$ is linearly independent of $(P_{ij}(\btheta^*))_{ij}$ where $ \bar P_{i}(\btheta^*):=\tfrac1n\sum_{j=1}^nP_{ij}(\btheta^*)$ for all $i\in\{1,\dots,n\}$.
\end{assumption}

\begin{theorem}\label{thm:cl}
   {\it Let $\bA_n\sim \varrho_n\bP(\btheta^*)$ under \ref{a:sparse}. Assume that we are testing 
$$
    H_0:\btheta^*\in\Theta_{0}\text{ vs. }\btheta^*\in\Theta_1
$$
where $\Theta_{0}$ is the parameter space corresponding to the Chung-Lu model, i.e., $P_{ij}(\btheta^*)=\psi_i\psi_j$, and $\Theta_1$ corresponds to the popularity-adjusted Block Model, i.e., $P_{ij}(\btheta^*)=\lambda_{ic_j^*}^*\lambda_{jc_i^*}^*$, under \ref{a:K}--\ref{a:cl}. If $\btheta^*\in\Theta_1$ such that $\bA_n$ is generated from the alternative model, then there exists some $\epsilon>0$ such that, as $n\to\infty$, }
$$
    \mathbb P(\log E_n>\epsilon\varrho_nn^2)\to 1.
$$ 
\end{theorem}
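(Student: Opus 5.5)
We follow the template of the proof of Theorem \ref{thm:er} and replace the Erdos-Renyi null by the Chung-Lu null. Write $\log E_n=\log L_1-\log L_0$. With the Poisson likelihood,
$$
\log L_1=\sum_{i<j}\Bigl\{-\tfrac{\gamma}{1-\gamma}P_{ij}(\hat\btheta_1)+Z_{ij}\log\bigl(\tfrac{\gamma}{1-\gamma}P_{ij}(\hat\btheta_1)\bigr)\Bigr\},
$$
and $\log L_0$ has the same form with $\hat\psi_i\hat\psi_j$ in place of the scaled PABM fit. By \ref{a:edge}, the linear terms in both likelihoods reduce to (approximately) $\sum_{i<j}Z_{ij}$ up to lower-order fluctuations. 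The difference is therefore dominated by
$$
\sum_{i<j}Z_{ij}\log\frac{\tfrac{\gamma}{1-\gamma}P_{ij}(\hat\btheta_1)}{\hat\psi_i\hat\psi_j}.
$$

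\textbf{Step 1: the alternative fit.} We use the consistency results of \cite{senguptapabm}, available under \ref{a:sparse}, \ref{a:K}, \ref{a:ident} and \ref{a:set}. Since $\bY\sim(1-\gamma)\varrho_n\bP(\btheta^*)$, the labels satisfy $\hat\bc=\bc^*$ with probability tending to one, and $\frac{1}{1-\gamma}P_{ij}(\hat\btheta_1)=\varrho_nP_{ij}(\btheta^*)(1+o_p(1))$ uniformly over pairs. Assumption \ref{a:pmin} lets us pass this uniform relative error through the logarithm.

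\textbf{Step 2: the null fit on $\bZ$.} The Chung-Lu MLE is explicit (approximately): $\hat\psi_i\approx d_i^Z/\sqrt{2m^Z}$, so
$$
\hat\psi_i\hat\psi_j\approx\frac{d_i^Z d_j^Z}{\sum_k d_k^Z}.
$$
Concentration of degrees gives $d_i^Z=\gamma\varrho_n n\bar P_i(1+o_p(1))$ uniformly in $i$. This uses $n\varrho_n\gg\log n$, which follows from \ref{a:sparse}, together with a Bernstein inequality and a union bound. Hence
$$
\hat\psi_i\hat\psi_j=\gamma\varrho_n\,\frac{n\,\bar P_i\bar P_j}{\sum_k\bar P_k}\,(1+o_p(1))=:\gamma\varrho_n Q_{ij}(1+o_p(1)).
$$
This is the projection of $\bP(\btheta^*)$ onto Chung-Lu form, and it is the role played by $\bar p$ in Theorem \ref{thm:er}.

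\textbf{Step 3: reduction to a Kullback--Leibler-type sum.} Substituting Steps 1--2, the scale factors $\gamma\varrho_n$ cancel inside the log, and
$$
\log E_n=\sum_{i<j}\Bigl[Z_{ij}\log\frac{P_{ij}}{Q_{ij}}-\gamma\varrho_n(P_{ij}-Q_{ij})\Bigr]+o_p(\varrho_nn^2).
$$
We replace $Z_{ij}$ by its mean $\gamma\varrho_nP_{ij}$. The centered sum has variance $O(\varrho_nn^2)$ since the log-ratios are bounded by \ref{a:pmin}, so it is $o_p(\varrho_nn^2)$ by Chebyshev. The conditional dependence between $\bY$ and $\bZ$ enters only through $\hat\btheta_1$, and Step 1's uniform approximation removes it. What remains is
$$
\gamma\varrho_n\sum_{i<j}\bigl[P_{ij}\log(P_{ij}/Q_{ij})-P_{ij}+Q_{ij}\bigr],
$$
a sum of nonnegative Poisson KL terms.

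\textbf{Step 4: a lower bound of order $n^2$ (main obstacle).} We must show $\sum_{i<j}\bigl[P_{ij}\log(P_{ij}/Q_{ij})-P_{ij}+Q_{ij}\bigr]\ge cn^2$. Each summand is at least $c'(P_{ij}-Q_{ij})^2$ because the entries are bounded above and away from zero. So it suffices that $\frac{1}{n^2}\sum_{i<j}(P_{ij}-Q_{ij})^2$ stays bounded below. Assumption \ref{a:cl} is the tool here: it guarantees $\bP(\btheta^*)$ is not a multiple of $(\bar P_i\bar P_j)$, so the difference is nonzero. However, it gives no rate. We would strengthen it using the PABM block structure: under \ref{a:K}, $P_{ij}$ depends on $i,j$ through finitely many group-level parameter profiles on sets of size $\Theta(n)$. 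A discrepancy on one pair, combined with the block structure, should then propagate to $\Theta(n^2)$ pairs, interpreting \ref{a:cl} as a uniform-in-$n$ separation (as \ref{a:pbar} does in Theorem \ref{thm:er}). Making this propagation rigorous is where we expect the difficulty. Combining the steps yields $\log E_n\ge\epsilon\varrho_nn^2$ with probability tending to one.
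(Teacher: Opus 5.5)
Your proposal has a genuine gap at Step 4. The repair you sketch there would not work, and the only way to close it is the one the paper takes implicitly.

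\textbf{Relation to the paper's proof.} Your decomposition is essentially the paper's. It splits $\log E_n$ into four pieces:
\begin{itemize}
\item the linear terms (Lemma~\ref{lemma:1});
\item the alternative-fit error $L_n$ (Lemma~\ref{lemma:3});
\item the null-fit error $D_n$ (Lemma~\ref{lemma:6});
\item the signal term $C_n=\sum_{ij}Z_{ij}\log(P_{ij}/Q_{ij})$ (Lemma~\ref{lemma:7}).
\end{itemize}
Here $\tilde\psi_i^*\tilde\psi_j^*=\bar P_i\bar P_j/\bar p$ is exactly your $Q_{ij}$.

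For the signal, the paper applies a Chernoff bound at $t=-1/2$ directly to $C_n$. You instead center $Z_{ij}$ by Chebyshev and lower-bound the mean by nonnegative Poisson KL terms. Both routes work.

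Your Step 2 (relative-error degree concentration via Bernstein and a union bound, using $n\varrho_n\gg\log n$) is actually more careful than Lemma~\ref{lemma:6}. That lemma uses additive Hoeffding bounds on $D_i/n$ even though its mean is only $O(\varrho_n)$.

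A smaller point concerns Step 1. You assert $\hat\bc=\bc^*$ with probability tending to one, but the paper only invokes weak consistency, $\sum_i\mathds{1}(\hat c_i\neq c_i^*)=o_p(n)$. Under weak consistency your uniform-over-pairs relative error fails on pairs touching misclassified nodes. Those pairs must be handled separately, as Lemma~\ref{lemma:3} does by splitting off $X_n^{(3)}$ and using Cauchy--Schwarz.

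\textbf{Why the Step 4 repair fails.} Assumption~\ref{a:cl} is purely qualitative, and your propagation argument cannot upgrade it. A PABM has node-specific popularities $\lambda_{ik}$, so $P_{ij}$ is \emph{not} determined by finitely many group-level profiles; that structure holds for the SBM, not the PABM. Nothing in \ref{a:K}--\ref{a:cl} as stated prevents the departure of $\bP(\btheta^*)$ from its Chung--Lu projection from being concentrated on $o(n)$ rows. In that case $n^{-2}\sum_{ij}(P_{ij}-Q_{ij})^2\to 0$ even though \ref{a:cl} holds for every $n$.

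\textbf{What the paper does instead.} It has no cleverer argument. Lemma~\ref{lemma:7} gets
$C=1-\frac{1}{\bar pn^2}\sum_{ij}(P_{ij}Q_{ij})^{1/2}>0$
from Lemma~\ref{lemma:4} plus \ref{a:cl}. Lemma~\ref{lemma:4} is Cauchy--Schwarz with equality iff the vectors are proportional, hence equal since $\sum_{ij}Q_{ij}=\sum_{ij}P_{ij}$. The paper then silently treats $C$ as independent of $n$ to get the tail bound $\exp(-\varrho_nn^2)$ with a fixed $\epsilon$. That is exactly the ``uniform-in-$n$'' reading you mention as a fallback.

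\textbf{How to close the gap.} Drop the propagation idea and state the uniform reading explicitly as the hypothesis. It matches your formulation directly. Since $\sum_{ij}P_{ij}=\sum_{ij}Q_{ij}=\bar pn^2$,
$C=(2\bar pn^2)^{-1}\sum_{ij}\bigl(\sqrt{P_{ij}}-\sqrt{Q_{ij}}\bigr)^2$.
Under \ref{a:pmin}, both $P_{ij}$ and $Q_{ij}$ are bounded above and away from zero. So ``$C\ge c>0$ uniformly in $n$'' is equivalent, up to constants, to your condition $n^{-2}\sum_{ij}(P_{ij}-Q_{ij})^2\ge c'$. With that hypothesis, your Step 4 closes immediately.
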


\noindent
The proof follows similarly to that of Theorem \ref{thm:er} with extra care needed for the additional parameters in the CL null model. The extra assumption ensures that the alternative model is strictly different from the CL model. In other words, if \ref{a:cl} were not true, then the PABM could reduce to a CL model. Again, the SBM and DCBM are special cases of the PABM, so this theorem can be adapted to these alternative models as well.\\

\noindent
Finally, we prove an analogous result for testing for the number of communities in an SBM. 

\begin{theorem}\label{thm:sbm}
    {\it Let $\bA_n\sim \varrho_n\bP(\btheta^*)$ under \ref{a:sparse}. Assume that we are testing 
$$
    H_0:\btheta^*\in\Theta_{0}\text{ vs. }\btheta^*\in\Theta_1
$$
where $\Theta_{0}$ is the parameter space corresponding to the stochastic block model with $K-1>1$ blocks, and $\Theta_1$ corresponds to the stochastic block model with $K$ blocks under \ref{a:K}, \ref{a:pmin} and \ref{a:pbar} where the block probability matrix is also of full rank. If $\btheta^*\in\Theta_1$ such that $\bA_n$ is generated from the alternative model, then there exists some $\epsilon>0$ such that, as $n\to\infty$, }
$$
    \mathbb P(\log E_n>\epsilon\varrho_nn^2)\to 1.
$$
\end{theorem}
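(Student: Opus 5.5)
We follow the template of Theorem \ref{thm:er}: write $\log E_n=\log L_1-\log L_0$ as a sum over node pairs and compare each likelihood to the oracle Poisson log-likelihood evaluated at the truth. Let $\ell(\bP\mid\bZ)=\sum_{i<j}\{-P_{ij}+Z_{ij}\log P_{ij}\}$ and $\bP^*=\gamma\varrho_n\bP(\btheta^*)$, the marginal mean of $\bZ$. We decompose
\begin{align*}
\log E_n&=\Big[\ell\big(\tfrac{\gamma}{1-\gamma}\bP(\hat\btheta_1)\mid\bZ\big)-\ell(\bP^*\mid\bZ)\Big]+\Big[\ell(\bP^*\mid\bZ)-\ell(\bP(\tilde\btheta_0)\mid\bZ)\Big]\\
&\quad+\Big[\ell(\bP(\tilde\btheta_0)\mid\bZ)-\ell(\bP(\hat\btheta_0)\mid\bZ)\Big].
\end{align*}
Here $\tilde\btheta_0$ is a deterministic ``pseudo-true'' $(K-1)$-block SBM, namely the minimizer over $\Theta_0$ of the Poisson Kullback--Leibler divergence $\sum_{i<j}\{P_{ij}-P^*_{ij}-P^*_{ij}\log(P_{ij}/P^*_{ij})\}$. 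The plan is to show that the first term is $o_p(\varrho_nn^2)$, that the second is at least $2\epsilon\varrho_nn^2$ with probability tending to one, and to control the third uniformly.

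\textbf{Term 1 (alternative fit).} We apply known SBM consistency results (e.g.\ \cite{bickel2009nonparametric} or the PABM results of \cite{senguptapabm} specialized to the SBM). Under \ref{a:sparse}, \ref{a:K} and full rank of $\bB$, the labels estimated from $\bY$ satisfy $\hat\bc=\bc^*$ up to permutation, or misclassify $o_p(n)$ nodes. The block estimates then satisfy $\hat B_{k\ell}/((1-\gamma)\varrho_nB^*_{k\ell})\to1$ uniformly. By \ref{a:pmin}, $\log$ of the ratio of rescaled to true entries is $o_p(1)$ uniformly over pairs. Since $\sum Z_{ij}=O_p(\varrho_nn^2)$ and $\sum P^*_{ij}=O(\varrho_nn^2)$ by \ref{a:pbar}, the first bracket is $o_p(\varrho_nn^2)$. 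Misclassified nodes affect only $o(n^2)$ pairs, each contributing $O(\varrho_n\log)$ with $\log$ bounded via \ref{a:pmin}, so they are negligible as well. The dependence between $\bY$ and $\bZ$ does not matter here, because we only use marginal moments of $\bZ$ together with the deterministic consistency of $\hat\btheta_1$.

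\textbf{Terms 2 and 3 (null fit), the main obstacle.} The expectation of the second bracket is exactly the KL divergence $D_n=\min_{\Theta_0}\mathrm{KL}$. Writing $P=\gamma\varrho_n Q$, we get $D_n=\gamma\varrho_n n^2\,\kappa_n$, where $\kappa_n$ is a KL between a $K$-block step function and its best $(K-1)$-block approximation. We show $\kappa_n\geq\kappa>0$ uniformly in $n$. Any $(K-1)$-block partition must merge two true communities or split them so that some pair of rows of $\bB$ is forced to share parameters. Full rank of $\bB$ means two rows of $\bB$ cannot coincide, and \ref{a:K} gives blocks of size $\Theta(n)$, so a positive fraction of pairs is mis-fit by a fixed amount; Pinsker-type bounds with \ref{a:pmin} then give the constant. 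This combinatorial lower bound over all $(K-1)$-partitions is the step I expect to be most delicate. I would handle it by compactness: pass to the graphon limit of block proportions and minimize a continuous function over a compact set of fractional assignments, where full rank excludes zero. Fluctuations of the second bracket are sums of independent centered terms $Z_{ij}\log(P^*_{ij}/\tilde P_{ij})$ with bounded logs, so they are $O_p(\sqrt{\varrho_n}n)=o(\varrho_nn^2)$.

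For the third bracket we need $\ell(\bP(\hat\btheta_0)\mid\bZ)-\ell(\bP(\tilde\btheta_0)\mid\bZ)\leq o_p(\varrho_nn^2)$ despite $\hat\btheta_0$ maximizing over all label vectors. Profiling out $\bB$ and using \ref{a:edge}, the profile likelihood for labels $\bc$ deviates from its expectation uniformly over the $(K-1)^n$ labelings by $O_p(\sqrt{\varrho_n n^2\cdot n})$ (Bernstein plus a union bound), plus the variation of $\log\hat B$ terms. This is $o(\varrho_nn^2)$ because $n\varrho_n\to\infty$ under \ref{a:sparse}, mirroring \cite{bickel2009nonparametric}. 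Hence $\sup_{\Theta_0}\ell(\cdot\mid\bZ)\leq \ell(\bP^*\mid\bZ)-D_n+o_p(\varrho_nn^2)$. Combining the three brackets gives $\log E_n\geq \gamma\kappa\varrho_nn^2-o_p(\varrho_nn^2)$, and taking $\epsilon=\gamma\kappa/2$ completes the proof.
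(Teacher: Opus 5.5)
Your argument is correct at the same level of rigor as the paper's own sketch, and your Term 1 is essentially the paper's Lemma \ref{lemma:3} specialized to the SBM. The genuine difference is on the null side.

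\textbf{The paper's route.} The paper does not use a KL-minimizing pseudo-true model over all $(K-1)$-partitions. It imports from \cite{wang2017likelihood} and \cite{hu2020corrected} the claim that $\sup_{\Theta_0}\mathcal L(\bP(\btheta)\mid\bZ)$ equals the likelihood at the merged labels $\tilde\bc^*$ (blocks $K-1$ and $K$ fused) with merged block averages $\hat\bB$. It then splits $\log E_n=B_n+C_n+D_n+L_n$:
\begin{itemize}
\item $B_n$ is handled by the edge-total CLT of Lemma \ref{lemma:1};
\item $D_n$ by a law of large numbers for the merged averages (Lemma \ref{lemma:8});
\item the signal term $C_n$ by a Chernoff bound at $t=-1/2$ combined with the Cauchy--Schwarz inequality of Lemma \ref{lemma:4}, made strict by full rank (Lemma \ref{lemma:9}).
\end{itemize}
This buys an explicit Bhattacharyya-type constant attached to one concrete merge, and it reuses exactly the machinery of Theorems \ref{thm:er} and \ref{thm:cl}. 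However, the inequality that actually drives the result is an upper bound on $L_0$, and that is entirely outsourced to the cited Bernoulli-likelihood characterization of the null maximizer. Evaluating $L_0$ at a fixed merge by itself only bounds $\log E_n$ from above.

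\textbf{Your route.} You prove that upper bound directly, using two ingredients:
\begin{itemize}
\item uniform Bernstein-plus-union-bound control over all $(K-1)^n$ labelings, as in \cite{bickel2009nonparametric};
\item a deterministic lower bound on the minimal KL over every $(K-1)$-partition.
\end{itemize}
The pigeonhole version of the KL bound is immediate: some estimated block holds a positive fraction of two true communities $k\neq k'$. Because rows $k$ and $k'$ of $\bB$ differ, a $\Theta(n^2)$ set of pairs is then mis-fit by a fixed amount. So you never need the best underfit partition to be a merge at all.

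\textbf{The cost.} The price is a more technical uniform step. The only place needing care is the $\log\hat B$ part of the profile likelihood for labelings with tiny blocks. Let $O_{ab}$ and $\mu_{ab}$ be the observed and expected edge counts of a block pair. Then $O_{ab}\log(O_{ab}/\mu_{ab})-O_{ab}+\mu_{ab}\leq|O_{ab}-\mu_{ab}|\max\{1,\log(O_{ab}/\mu_{ab})\}$, and $\log(O_{ab}/\mu_{ab})=\mathcal O(\log n)$ by \ref{a:pmin} and \ref{a:sparse}. Hence this part is $\mathcal O_p(\sqrt{\varrho_nn^3}\log n)$ uniformly over labelings. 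That is $o_p(\varrho_nn^2)$ because \ref{a:sparse} implies $n\varrho_n/\log^2n\to\infty$.
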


\noindent
We sketch a proof in Section \ref{supp:sbm} by leaning on the results from \cite{wang2017likelihood} and  \cite{hu2020corrected}. These works use the Bernoulli likelihood as opposed to the Poisson approximation which is why we refer to this as a proof sketch. The same proof technique could also be applied to testing for the number of communities in a DCBM. 

Ideally, we would also like to derive the asymptotic growth rate of our proposed statistic for testing, e.g., $H_0:$ SBM with $K$ blocks vs.~$H_1:$ DCBM with $K$ blocks. This is difficult, however, as we need to understand the behavior of the community labels estimated assuming an SBM when in reality the network is generated from a DCBM. Thus, similar to \cite{li2020network}, we omit this theoretical result but demonstrate it numerically in the following section.

\section{Numerical experiments}\label{sec:data}
\subsection{Simulation studies}
In this section, we validate the empirical performance of the proposed method on simulated data. The goal is to show that our method yields small values of $E$ when the network is generated from the null model (low type I error rate) and larger values under the alternative (high power). For each setting, we generate a network and record whether the null hypothesis is rejected or not with results averaged over 200 Monte Carlo iterations. For each network, we run the proposed algorithm 100 times, i.e., perform edge sampling 100 times, and report the average e-value. We reject the null hypothesis if $E>20$, corresponding to a type I error rate of $0.05$. We also vary $\gamma$, which controls the amount of data in each part of the split. The proposed method is compared against NETCROP \citep{chakrabarty2025network} and ECV \citep{li2020network}, two popular methods for model selection. Code to implement the proposed method and replicate all results in this section is available at the author's GitHub: \url{https://github.com/eyanchenko/network_model_selection}.

\subsubsection{Number of communities in SBM}
First, we test for the number of communities in an SBM. Let $n=1000$ be the number of nodes and $K$ be the number of blocks or communities. We define the block-block probability matrix as ${\bf B}=\alpha\{\beta{\bf I}_K+(1-\beta){\bf J}_K\}$ where $\alpha$ controls the sparsity, $\beta$ controls the strength of the community structure and ${\bf I}_K$ and ${\bf J}_K$ are the identity matrix and matrix of all ones, respectively. Unless otherwise noted, $\alpha=0.05$, and is not to be confused with the type I error rate. We also define $\boldsymbol{\pi}\in(0,1)^K$ with $\pi_k$ as the probability that a node is assigned to community $k\in\{1,\dots,K\}$ and $\sum_{k=1}^K \pi_k=1$. For the proposed method, we follow the steps in Section \ref{sec:steps} where we use the spectral clustering algorithm \citep{rohe11} (implemented in the \texttt{randnet} R package) to estimate the community labels, and vary $\gamma\in\{0.1, 0.5, 0.9\}$. Estimating the community labels under the alternative hypothesis accounts for the majority of the computing time for our method.

In the first setting, we test $H_0:K=1$ vs. $H_1:K=2$. Notice that an SBM with $K=1$ is an ER model, so this is effectively testing if there is a community structure in the network. We first set $\boldsymbol{\pi}=(0.75, 0.25)^\top$ and vary $\beta\in\{0.400,0.425,\dots,0.900\}$ where larger $\beta$ corresponds to stronger community structure. Since $\beta>0$, the data are generated from the alternative hypothesis and we expect a large rejection rate. The results are in Figure \ref{fig:sbmK2} (left). The proposed method with $\gamma=0.1$ performs the best as it has the largest power for all $\beta>0.4$ and yields a rejection rate of one for $\beta>0.6$. The next best method is ECV which generally demonstrates a non-decreasing power. While the proposed method with $\gamma=0.5$ eventually has a large rejection rate, $\gamma=0.9$ and NETCROP both show weak performance.

The results in Figure \ref{fig:sbmK2} (right) correspond to the same test and settings except now we fix $\beta=0.6$ and vary $\delta\in[0.01, 0.50]$ of length 20 where $\boldsymbol{\pi}=(\delta, 1-\delta)^\top$. As $\delta$ increases, the size of the communities becomes more equal, making the community detection task easier. Similar to the previous setting, the proposed method with $\gamma=0.1$ performs the best followed by the ECV and $\gamma=0.5$ methods. Again, the e-value with $\gamma=0.9$ and NETCROP methods never reject the null hypothesis. For the last setting, we test $H_0:K=4$ vs. $H_1: K=5$ such that the null and alternative hypotheses correspond to proper SBMs. We fix $\beta=0.8$ and vary $\delta\in[0.01, 0.20]$ of length 20 where $\boldsymbol{\pi}=((1-\delta)/(K-1),\dots,(1-\delta)/(K-1),\delta)^\top$. Again, the networks with more balanced communities (larger $\delta$) should make it easier to reject the null. The results are in Figure \ref{fig:sbmK5}. Interestingly, the proposed method with $\gamma=0.1$ yields a rejection rate of one for all $\delta$ while for ECV, there is an increasing rejection rate which reaches a power of one by $\delta=0.20$. 

\begin{figure}
    \centering
    \includegraphics[width=0.9\linewidth]{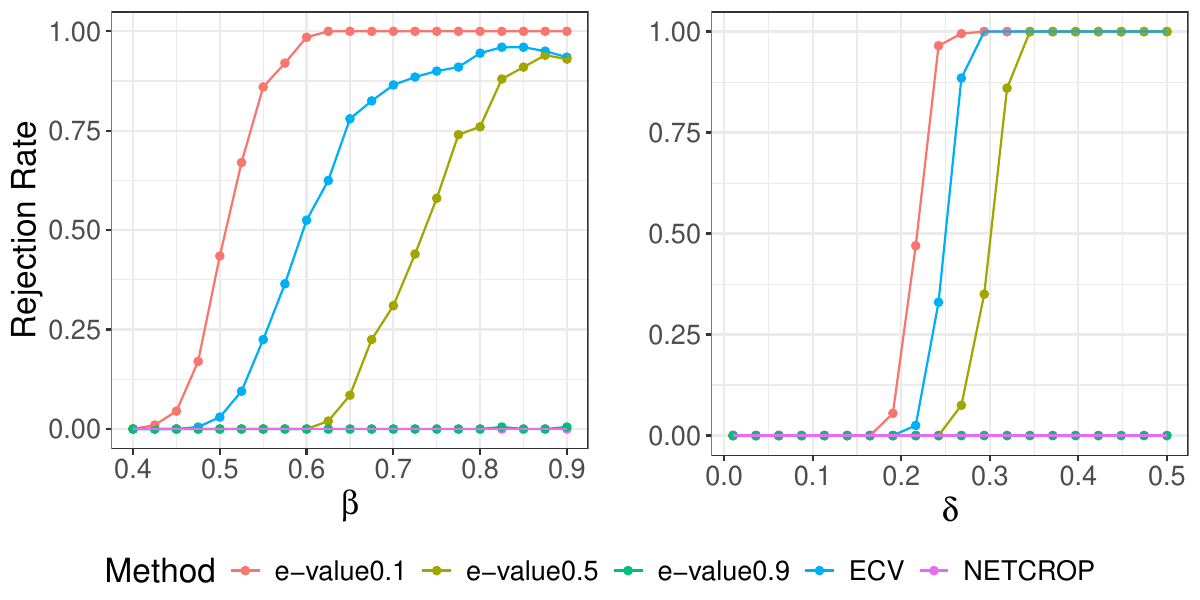}
    \caption{(left) $H_0: K=1$ (ER) vs. $H_1: K=2$ (SBM) with increasing community strength $\beta\in[0.4,0.9]$, and $\delta=0.25$. All networks are generated from the alternative model. (right) $H_0: K=1$ (ER) vs. $H_1: K=2$ (SBM) with increasing community size $\delta\in[0.01, 0.5]$, and $\beta=0.6$. All networks generated from the alternative model. }
    \label{fig:sbmK2}
\end{figure}

\begin{figure}
    \centering
    \includegraphics[width=0.55\linewidth]{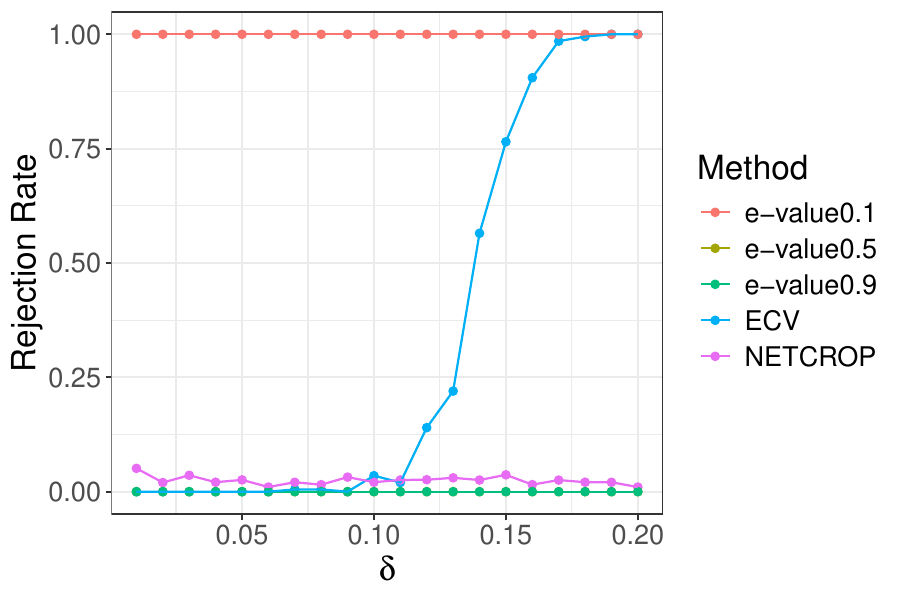}
    \caption{$H_0: K=4$ (SBM) vs. $H_1: K=5$ (SBM) with increasing fifth community size $\delta\in[0.01, 0.20]$ and $\beta=0.8$. All networks generated from the alternative model.}
    \label{fig:sbmK5}
\end{figure}

\subsubsection{Number of communities in DCBM}

We use similar settings to test for the number of communities in a DCBM. For these results, we set the sparsity parameter to $\alpha=0.25$, sample $\psi_i\stackrel{\text{iid.}}{\sim}\mathsf{Uniform}(0.25, 0.75)$ and consider $\gamma\in\{0.4, 0.5,0.6\}$ for the proposed method. We choose values of $\gamma$ that are closer to 0.5 because fitting the CL model requires estimating $n$ parameters compared to the one parameter of the ER model, i.e., we require more data in $\bZ$. We also adopt the spherical clustering algorithm to detect the DCBM community labels as it is proven to consistently recover the true labels \citep{qin2013regularized, lei15}.

In the first setting, we test $H_0:K=1$ vs. $H_1:K=2$, noting that $K=1$ corresponds to the Chung-Lu model, i.e., degree heterogeneity but no community structure. We set $\boldsymbol{\pi}=(0.75, 0.25)^\top$ and vary $\beta\in\{0, 0.0475, \dots, 0.950\}$. When $\beta=0$, the networks are generated from the null model (CL) but for $\beta>0$, they come from the alternative model (DCBM). The results are in Figure \ref{fig:dcbmK2} (left). While ECV has the largest power, it also rejects 100\% of the time for $\beta=0$, i.e., when the null hypothesis is true. Thus, ECV is not reliable here because it does not maintain an appropriate type I error rate. The proposed method with $\gamma=0.4$ performs the best, with a power that begins to increase around $\beta=0.75$. NETCROP, however, does not reject the null hypothesis for any values of $\beta$.

In the next setting, we again test $H_0:K=1$ vs. $H_1:K=2$, now fixing $\beta=0.7$ and varying $\delta\in\{0.05, 0.10,\dots, 0.50\}$ where $\boldsymbol{\pi}=(\delta, 1-\delta)^\top$. The results are in Figure \ref{fig:dcbmK2} (right) and are similar to the previous setting. Even though all networks are generated from the alternative model and the ECV power is one, recall from the previous experiment that it does not maintain a proper type I error rate. The proposed method with $\gamma=0.4$ and $\gamma=0.5$ consistently yields the best rejection rate of the remaining methods, while NETCROP and the e-value method with $\gamma=0.6$ perform similarly. Finally, we test $H_0:K=4$ vs. $H_1:K=5$ with $\alpha=\beta=0.90$ and vary $\delta\in[0.01, 0.20]$ of length 20 where $\boldsymbol{\pi}=((1-\delta)/(K-1),\dots,(1-\delta)/(K-1),\delta)^\top$. The results are in Figure \ref{fig:dcbmK5}. ECV has the best power and the proposed methods with $\gamma\in\{0.4,0.5\}$ performing similarly.

\begin{figure}
    \centering
    \includegraphics[width=0.9\linewidth]{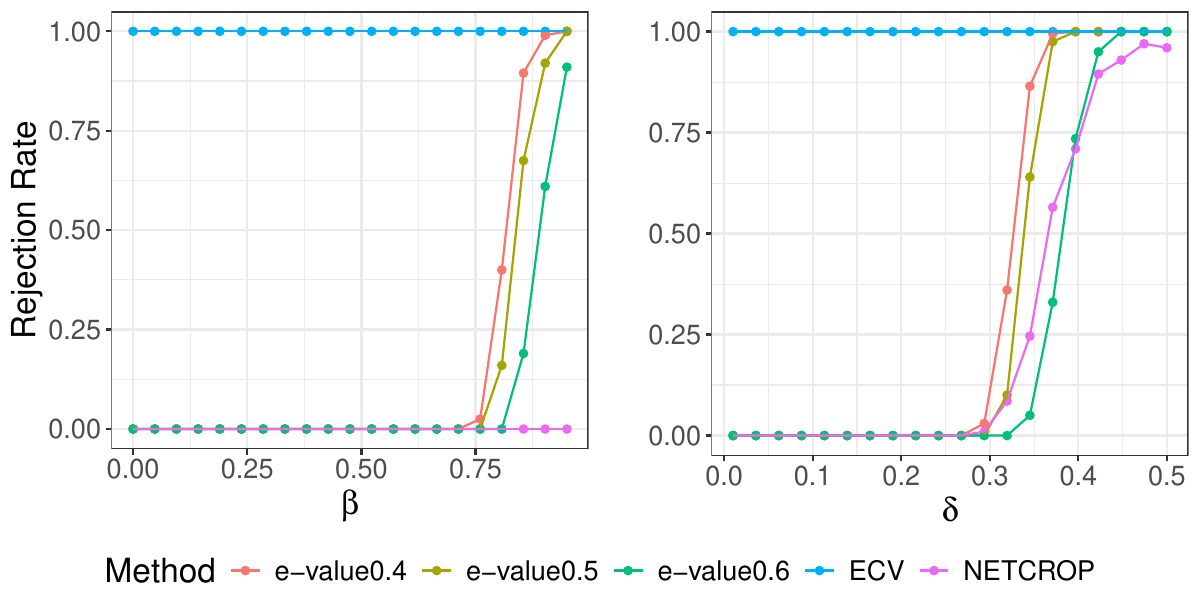}
    \caption{(left) $H_0: K=1$ (CL) vs. $H_1: K=2$ (DCBM) with increasing community strength $\beta\in[0,0.95]$, and $\delta=0.25$. $\beta=0$ corresponds to the null model while $\beta>0$ corresponds to the alternative model. (right) $H_0: K=1$ (CL) vs. $H_1: K=2$ (DCBM) with increasing community size $\delta\in[0.05, 0.5]$, and $\beta=0.7$. All networks generated from the alternative model. }
    \label{fig:dcbmK2}
\end{figure}

\begin{figure}
    \centering
    \includegraphics[width=0.55\linewidth]{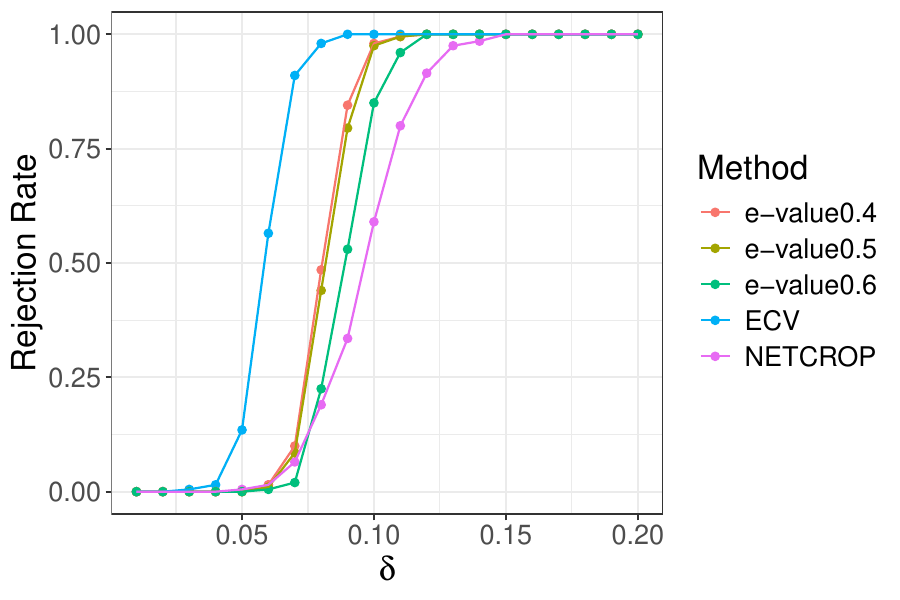}
    \caption{$H_0: K=4$ (DCBM) vs. $H_1: K=5$ (DCBM) with increasing fifth community size $\delta\in[0.01, 0.20]$, and $\beta=0.9$. All networks generated from the alternative model. }
    \label{fig:dcbmK5}
\end{figure}

\subsubsection{SBM versus DCBM}
In the previous two sub-sections, we tested for the number of communities but used the same block model in the null and alternative. Now, we want to test between two different block models, namely $H_0:$ SBM with $K$ communities vs. $H_1:$ DCBM with $K$ communities. To generate data, we sample DCBM networks where $\psi_i\stackrel{\text{iid.}}{\sim}\mathsf{Uniform}(0.5-\tfrac12\nu,0.5+\tfrac12\nu)$. Since the SBM is a special case of the DCBM where $\psi_i$ are equal for all $i$, we are effectively testing $H_0:\nu=0$ vs. $H_1:\nu>0$. In the first setting, we fix $\alpha=0.25$ and $\beta=0.50$ with $K=2$ and $\boldsymbol{\pi}=(0.75,0.25)^\top$. We then vary $\nu\in\{0.00, 0.04,\dots,0.80\}$ where larger $\nu$ means there is more degree heterogeneity among the nodes which should make it easier to reject the null hypothesis. When $\nu=0$, the DCBM reduces to the SBM so the null hypothesis is true, otherwise we expect to see a large rejection rate. We consider $\gamma\in\{0.4,0.5,0.6\}$ with the results in Figure \ref{fig:sbm_dcbm} (left). The e-value statistic with $\gamma=0.4$ yields the best results because although ECV has a similar power, it also has a large type I error rate $(\approx 0.15)$. In Figure \ref{fig:sbm_dcbm} (right), we plot the results for the same testing scenario but now with $\alpha=0.25$, $\beta=0.80$, $K=5$, $\boldsymbol{\pi}=(0.2,\dots,0.2)^\top$ and $\nu\in\{0.25, 0.30,\dots,0.75\}$. ECV consistently yields the largest power, followed by the proposed method with $\gamma=0.4$. 

Finally, in Section \ref{supp:sim} of the Supplemental Materials, we include a simulation setting testing the DCBM with $K$ communities against a PABM with $K$ communities. As neither ECV nor NETCROP are described for this setting, we compare the proposed method with varying values of $\gamma$. We find that the rejection rate increases as the data-generating model further departs from the null hypothesis.

\begin{figure}
    \centering
    \includegraphics[width=0.9\linewidth]{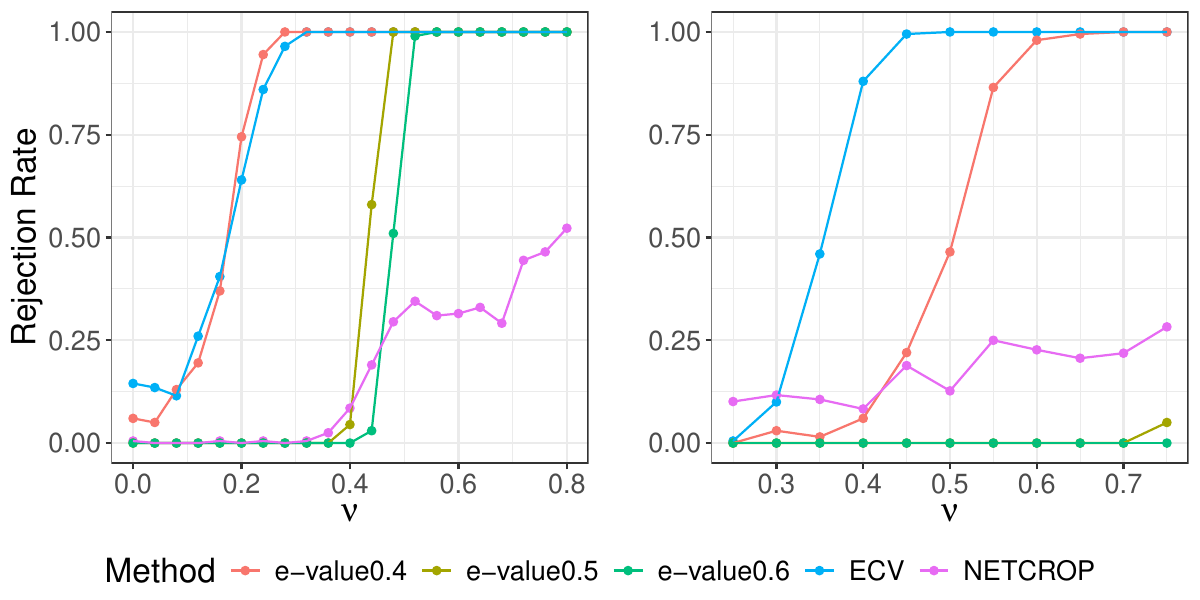}
    \caption{(left) $H_0: \nu=0$ (SBM) vs. $H_1: \nu>0$ (DCBM) with increasing degree variability $\nu\in[0,0.80]$ and $K=2$. $\nu=0$ corresponds to the null model while $\nu>0$ corresponds to the alternative model. (right) $H_0: \nu=0$ (SBM) vs. $H_1: \nu>0$ (DCBM) with increasing degree variability $\nu\in[0.25,0.75]$ and $K=5$. All networks generated from the alternative model.}
    \label{fig:sbm_dcbm}
\end{figure}

\subsection{Political Blogs network}\label{sec:real}
We demonstrate the performance of the proposed method on a real-world network. Over the past two decades, the political blogs network \citep{adamic2005political} has provided a popular case-study for new network methods, in part because of the known underlying community labels. Specifically, the nodes in this network correspond to blogs (websites) and there is an edge between nodes if the blogs reference one another. This network has $n=1224$ nodes and $m=16,715$ (undirected) edges. A natural partition of the nodes arises from labeling the blogs as liberal or conservative, making it a popular benchmark for various community detection algorithms. In particular, this network was studied in the original DCBM paper \citep{karrer2011stochastic} where it was shown the DCBM yielded a better fit than that of the SBM. 

We apply the proposed method to this network under multiple different hypotheses. Unless otherwise specified, we set $K=2$ since there are believed to be two communities corresponding to liberal and conservative blogs. Additionally, we set $\gamma=0.5$ and repeat the data splitting 1000 times with average e-values reported in Table \ref{tab:real}. First, we test the ER model against the CL model, i.e., testing if there is statistically significant degree heterogeneity in the network. As the ER model is very simplistic, it is unsurprising that this test yields such a large e-value. Similarly, a test of the ER model against a two-block SBM, effectively testing for the presence of a community structure, yields a large value of the test statistic, indicating a statistically significant community structure. In a parallel way, we can test the CL model against the DCBM, testing if degree heterogeneity alone is enough to model the network, or if degree heterogeneity plus community structure is required. The latter is affirmed with the large e-value. Next, we test SBM vs.~DCBM to see if block structure alone is sufficient, or if degree heterogeneity is also needed. This large e-value confirms the need for the degree parameters in the DCBM, further validating the conclusions in \cite{karrer2011stochastic}. We then test a DCBM with $K=2$ communities against a DCBM with $K=3$ to see if there are truly two communities (liberal and conservative) in the network. A small e-value gives strong evidence in favor of two communities. Lastly, we test the DCBM against the PABM to see if the extra flexibility afforded by the latter yields a better model fit. We again find a small e-value, implying that the popularity of a node is {\it not} different between different communities. In other words, a blog that is popular among liberal blogs is expected to have a similar popularity among conservative blogs. Thus, our final conclusion is that a DCBM with $K=2$ communities best fits this network, i.e. it exhibits statistically significant community structure and degree heterogeneity.

\begin{table}[]
    \centering
    \begin{tabular}{ll|c}
         $H_0$ & $H_1$ & e-value  \\\hline
         ER & CL & $>10^{100}$\\
         ER & SBM & $>10^{100}$\\
         CL & DCBM & $>10^{100}$\\
         SBM & DCBM & $>10^{100}$ \\
         DCBM $(K=2)$ & DCBM $(K=3)$ & 0\\
         DCBM & PABM & 0
    \end{tabular}
    \caption{Results from applying the proposed method to the political blogs network averaged over 1000 data splits.}
    \label{tab:real}
\end{table}

\section{Conclusion}\label{sec:conc}

In this work, we leveraged edge sampling and Universal Inference to carry out model selection on networks. We showed the proposed method's applicability across numerous testing scenarios including testing the number of communities and choice of random graph model. Our method comes with finite-sample type I error control and asymptotic growth-rate guarantees, while also demonstrating strong empirical performance. While we only applied our proposal to a few testing scenarios, we stress that the ideas can easily be extended to almost limitless settings, provided the data-generating mechanism has a likelihood. Extensions of this work to, e.g., testing on temporal networks \citep{yanchenko2025hypothesis} or settings without a likelihood \citep{dey2025generalized}, provide many interesting directions to pursue. Additionally, we acknowledge that a data-splitting approach inherently loses information, so future work can include directly constructing an e-value as it could lead to a more powerful test. Finally, our results showed that the edge sampling parameter, $\gamma$, plays a large role in the empirical performance of the proposed method. While we gave some general guidelines on how to set this parameter, a more thorough and perhaps theoretical treatment might be warranted, i.e., setting $\gamma$ such that the numerator and denominator of $E$ have the same variance. 

\section*{Acknowledgments}
We would like to thank Srijan Sengupta for helpful discussions. Claude was used to identify grammatical and typographical errors.

\bibliographystyle{apalike} 
\bibliography{refs}

@article{leiner2025data,
  title={Data fission: splitting a single data point},
  author={Leiner, James and Duan, Boyan and Wasserman, Larry and Ramdas, Aaditya},
  journal={Journal of the American Statistical Association},
  volume={120},
  number={549},
  pages={135--146},
  year={2025},
  publisher={Taylor \& Francis}
}

@article{yanchenko2025statistical,
  title={Statistical inference for core-periphery structures},
  author={Yanchenko, Eric and Sengupta, Srijan and Mukherjee, Diganta},
  journal={arXiv preprint arXiv:2508.04730},
  year={2025}
}

@article{wasserman2020universal,
  title={Universal inference},
  author={Wasserman, Larry and Ramdas, Aaditya and Balakrishnan, Sivaraman},
  journal={Proceedings of the National Academy of Sciences},
  volume={117},
  number={29},
  pages={16880--16890},
  year={2020},
  publisher={National Academy of Sciences}
}

@article{chakrabarty2025network,
  title={Network Cross-Validation and Model Selection via Subsampling},
  author={Chakrabarty, Sayan and Sengupta, Srijan and Chen, Yuguo},
  journal={arXiv preprint arXiv:2504.06903},
  year={2025}
}

@article{bhadra2025unified,
  title={A Unified Framework for Community Detection and Model Selection in Blockmodels},
  author={Bhadra, Subhankar and Tang, Minh and Sengupta, Srijan},
  journal={Journal of Computational and Graphical Statistics},
  number={just-accepted},
  pages={1--27},
  year={2025},
  publisher={Taylor \& Francis}
}

@article{bickel2016hypothesis,
  title={Hypothesis testing for automated community detection in networks},
  author={Bickel, Peter J and Sarkar, Purnamrita},
  journal={Journal of the Royal Statistical Society Series B: Statistical Methodology},
  volume={78},
  number={1},
  pages={253--273},
  year={2016},
  publisher={Oxford University Press}
}

@article{wang2017likelihood,
  title={Likelihood-based model selection for stochastic block models},
  author={Wang, Y. X. Rachel and Bickel, Peter J},
  journal={Annals of Statistics},
  volume={45},
  number={2},
  pages={500--528},
  year={2017}
}

@article{ancell2025post,
  title={Post-selection inference with a single realization of a network},
  author={Ancell, Ethan and Witten, Daniela and Kessler, Daniel},
  journal={arXiv preprint arXiv:2508.11843},
  year={2025}
}

@inproceedings{adamic2005political,
  title={The political blogosphere and the 2004 US election: divided they blog},
  author={Adamic, Lada A and Glance, Natalie},
  booktitle={Proceedings of the 3rd international workshop on Link discovery},
  pages={36--43},
  year={2005}
}

@article{yanchenko2024generalized,
  title={A generalized hypothesis test for community structure in networks},
  author={Yanchenko, Eric and Sengupta, Srijan},
  journal={Network Science},
  volume={12},
  number={2},
  pages={122--138},
  year={2024},
  publisher={Cambridge University Press}
}

@article{rohe11,
author={Rohe, Karl and Chatterjee, Sourav and Yu, Bin},
title={Spectral Clustering and The High-Dimensional Stochastic Blockmodel},
journal={The Annals of Statistics},
volume={39},
number={4},
year={2011},
pages={1878--1915}
}

@article{lei15,
author={Lei, Jing and Rinaldo, Alessandro},
title={Consistency of Spectral Clustering in Sparse Stochastic Block Models},
journal={Annals of Statistics},
year={2015},
volume={43},
number={1},
pages={215--237}
}

@Article{karrer2011stochastic,
  Title                    = {Stochastic blockmodels and community structure in networks},
  Author                   = {Karrer, B. and Newman, Mark E. J.},
  Journal                  = {Physical Review E},
  Year                     = {2011},
  Pages                    = {016107},
  Volume                   = {83},

  Publisher                = {APS}
}

@Article{holland1983stochastic,
  Title                    = {Stochastic blockmodels: first steps},
  Author                   = {Holland, P.W. and Laskey, K.B. and Leinhardt, S.},
  Journal                  = {Social Networks},
  Year                     = {1983},
  Pages                    = {109--137},
  Volume                   = {5},

  Publisher                = {Elsevier}
}

@article{Erdos1959,
  author  = {Erd{\"o}s, Paul and R{\'e}nyi, Alfr{\'e}d},
  title   = {On random graphs},
  journal = {Publicationes Mathematicae Debrecen},
  year    = {1959},
  volume  = {6},
  pages   = {290--297},
}

@article{chung2002average,
  title={The average distances in random graphs with given expected degrees},
  author={Chung, Fan and Lu, Linyuan},
  journal={Proceedings of the National Academy of Sciences},
  volume={99},
  number={25},
  pages={15879--15882},
  year={2002},
  publisher={National Acad Sciences}
}

@article {senguptapabm,
author = {Sengupta, Srijan and Chen, Yuguo},
title = {A block model for node popularity in networks with community structure},
journal = {Journal of the Royal Statistical Society: Series B (Statistical Methodology)},
 volume={80},
  number={2},
  pages={365--386},
  year={2018},
}

@article{bickel2009nonparametric,
  title={A nonparametric view of network models and Newman--Girvan and other modularities},
  author={Bickel, Peter J and Chen, Aiyou},
  journal={Proceedings of the National Academy of Sciences},
  volume={106},
  number={50},
  pages={21068--21073},
  year={2009},
  publisher={National Acad Sciences}
}

@article{li2020network,
  title={Network cross-validation by edge sampling},
  author={Li, Tianxi and Levina, Elizaveta and Zhu, Ji},
  journal={Biometrika},
  volume={107},
  number={2},
  pages={257--276},
  year={2020},
  publisher={Oxford University Press}
}

@article{yanchenko2025hypothesis,
  title={Hypothesis testing for community structure in temporal networks using e-values},
  author={Yanchenko, Eric and Williams, Jonathan P and Martin, Ryan},
  journal={Network Science},
  volume={e14},
  pages={1--17},
  year={2026}
}

@article{grunwald2024safe,
  title={Safe testing},
  author={Gr{\"u}nwald, Peter and de Heide, Rianne and Koolen, Wouter},
  journal={Journal of the Royal Statistical Society Series B: Statistical Methodology},
  volume={86},
  number={5},
  pages={1091--1128},
  year={2024},
  publisher={Oxford University Press UK}
}

@article{vovk2021values,
  title={E-values: Calibration, combination and applications},
  author={Vovk, Vladimir and Wang, Ruodu},
  journal={The Annals of Statistics},
  volume={49},
  number={3},
  pages={1736--1754},
  year={2021},
  publisher={Institute of Mathematical Statistics}
}

@article{wang2022false,
  title={False discovery rate control with e-values},
  author={Wang, Ruodu and Ramdas, Aaditya},
  journal={Journal of the Royal Statistical Society Series B: Statistical Methodology},
  volume={84},
  number={3},
  pages={822--852},
  year={2022},
  publisher={Oxford University Press}
}

@article{holme2012temporal,
  title={Temporal networks},
  author={Holme, Petter and Saram{\"a}ki, Jari},
  journal={Physics Reports},
  volume={519},
  number={3},
  pages={97--125},
  year={2012},
  publisher={Elsevier}
}

@inproceedings{leiner2024graph,
  title={Graph fission and cross-validation},
  author={Leiner, James and Ramdas, Aaditya},
  booktitle={International Conference on Artificial Intelligence and Statistics},
  pages={2638--2646},
  year={2024},
  organization={PMLR}
}

@article{qin2013regularized,
  title={Regularized spectral clustering under the degree-corrected stochastic blockmodel},
  author={Qin, Tai and Rohe, Karl},
  journal={Advances in neural information processing systems},
  volume={26},
  year={2013}
}

@article{park2025robust,
  title={Robust universal inference for misspecified models},
  author={Park, Beomjo and Balakrishnan, Sivaraman and Wasserman, Larry},
  journal={Biometrika},
  pages={asaf070},
  year={2025},
  publisher={Oxford University Press}
}

@article{dey2025generalized,
  title={Generalized universal inference on risk minimizers},
  author={Dey, Neil and Martin, Ryan and Williams, Jonathan P},
  journal={Journal of the Royal Statistical Society Series B: Statistical Methodology},
  pages={qkaf065},
  year={2025},
  publisher={Oxford University Press UK}
}

@article{hu2020corrected,
  title={Corrected Bayesian information criterion for stochastic block models},
  author={Hu, Jianwei and Qin, Hong and Yan, Ting and Zhao, Yunpeng},
  journal={Journal of the American Statistical Association},
  volume={115},
  number={532},
  pages={1771--1783},
  year={2020},
  publisher={Taylor \& Francis}
}

@article{de1998upper,
  title={An upper bound on the sum of squares of degrees in a graph},
  author={de Caen, Dominique},
  journal={Discrete Mathematics},
  volume={185},
  number={1-3},
  pages={245--248},
  year={1998},
  publisher={Elsevier}
}

\clearpage

\renewcommand{\thesection}{S\arabic{section}}
\setcounter{section}{0}

\begin{center}
    {\Large Supplemental Materials}
\end{center}

\begin{center}
    Table of Contents
\end{center}

\begin{enumerate}
    \item[S1] Joint distribution of $Z_{ij}$ and $Y_{ij}$
    \item[S2] Proof of Theorem \ref{thm:validity} (e-value)
    \item[S3] Proof of Theorem \ref{thm:er} (ER null)
    \item[S4] Proof of Proposition \ref{prop:ui} (expectation of log e-value)
    \item[S5] Proof of Theorem \ref{thm:cl} (CL null)
    \item[S6] Proof of Theorem \ref{thm:sbm} (SBM null)
    \item[S7] Additional simulation results: DCBM versus PABM
\end{enumerate}

\clearpage

\section{Joint distribution of $Z_{ij}$ and $Y_{ij}$}\label{supp:joint}
We now study the relationship between $Z_{ij}$ and $Y_{ij}$. Recall that $A_{ij}\stackrel{\text{ind.}}{\sim}\mathsf{Bernoulli}(\varrho_n P_{ij})$ where $P_{ij}\equiv P_{ij}(\btheta)$. If $A_{ij}=0$, which happens with probability $1-\varrho_n P_{ij}$, then $(Y_{ij},Z_{ij})=(0,0)$ as there is no edge to split. If $A_{ij}=1$, then $(Y_{ij},Z_{ij})=(0,1)$ with probability $\gamma$, and $(Y_{ij},Z_{ij})=(1,0)$ with probability $1-\gamma$. It is impossible for $(Y_{ij},Z_{ij})=(1,1)$. Thus, the joint distribution is defined by
\begin{align*}
    \mathbb P(Z_{ij}=0,Y_{ij}=0)&=1-\varrho_n P_{ij}\\
     \mathbb P(Z_{ij}=0,Y_{ij}=1)&=(1-\gamma)\varrho_n P_{ij}\\
    \mathbb P(Z_{ij}=1,Y_{ij}=0)&=\gamma \varrho_n P_{ij}\\
    \mathbb P(Z_{ij}=1,Y_{ij}=1)&=0,
\end{align*}
so marginally,
\begin{align*}
    Y_{ij}&\stackrel{\text{ind.}}{\sim}\mathsf{Bernoulli}((1-\gamma)\varrho_n P_{ij})\\
    Z_{ij}&\stackrel{\text{ind.}}{\sim}\mathsf{Bernoulli}(\gamma\varrho_n P_{ij}).
\end{align*}
It is easy to use these results to find the conditional distribution of $Z_{ij}\mid Y_{ij}$:
$$
    Z_{ij}\mid Y_{ij}\sim\mathsf{Bernoulli}\left(\frac{\gamma \varrho_n P_{ij}(1-Y_{ij})}{1-(1-\gamma)\varrho_n P_{ij}}\right).
$$
Now, in order to show independence, we need $\mathbb P(Z_{ij}=z,Y_{ij}=y)=\mathbb P(Z_{ij}=z)\mathbb P(Y_{ij}=y)$ for all $y,z$. We can see that
\begin{align*}
    \mathbb P(Z_{ij}=0)\mathbb P(Y_{ij}=0)&=\{1-\gamma\varrho_nP_{ij}\}\{1-(1-\gamma)\varrho_nP_{ij}\}\\
     \mathbb P(Z_{ij}=0)\mathbb P(Y_{ij}=1)&=\{1-\gamma\varrho_nP_{ij}\}(1-\gamma)\varrho_n P_{ij}\\
    \mathbb P(Z_{ij}=1)\mathbb P(Y_{ij}=0)&=\gamma \varrho_n P_{ij}\{1-(1-\gamma)\varrho_n P_{ij}\}\\
    \mathbb P(Z_{ij}=1)\mathbb P(Y_{ij}=1)&=\gamma(1-\gamma)\varrho_n^2 (P_{ij})^2  
\end{align*}
so $Z_{ij}$ and $Y_{ij}$ are clearly not independent. Since $\varrho_n\to0$ as $n\to\infty$, however, for large $n$, we can ignore terms that depend on $\varrho_n^2$. Thus, for large $n$,
\begin{align*}
    \mathbb P(Z_{ij}=0)\mathbb P(Y_{ij}=0)&\sim 1-\varrho_nP_{ij}\\
     \mathbb P(Z_{ij}=0)\mathbb P(Y_{ij}=1)&\sim (1-\gamma)\varrho_nP_{ij}\\
    \mathbb P(Z_{ij}=1)\mathbb P(Y_{ij}=0)&\sim \gamma \varrho_n P_{ij}\\
    \mathbb P(Z_{ij}=1)\mathbb P(Y_{ij}=1)&\sim 0,
\end{align*}
such that $Y_{ij}$ and $Z_{ij}$ are asymptotically independent.

\clearpage

\section{Proof of Theorem \ref{thm:validity}}\label{supp:validity}
\noindent
{\it Let $\bA\sim\bP(\btheta^*)$ and consider testing}
$$
    H_0:\btheta^*\in\Theta_0\text{ vs. }H_1:\btheta^*\in\Theta_1.
$$
{\it Under \ref{a:edge} and for any $n$, the statistic constructed in Section \ref{sec:steps} is an e-value i.e.,}
$$
    \sup_{\btheta\in\Theta_0}\mathbb E_{\btheta}E\leq 1.
$$

\noindent 
{\it Proof.}
Assume that $\bA\sim\bP(\btheta^*)$ where $\btheta^*\in\Theta_0$, i.e., the network was generated under the null hypothesis. Then
$$
    \mathbb E_{\btheta^*}\left\{\frac{\mathcal L(\frac{\gamma}{1-\gamma}\bP(\hat\btheta_1)|\bZ)}{\mathcal L(\bP(\hat\btheta_0)|\bZ)}\right\}
    \leq \mathbb E_{\btheta^*}\left\{\frac{\mathcal L(\frac{\gamma}{1-\gamma}\bP(\hat\btheta_1)|\bZ)}{\mathcal L(\gamma\bP(\btheta^*)|\bZ)}\right\}
$$
where $\hat\btheta_1=\arg\max_{\btheta\in\Theta_1}\mathcal L(\btheta|\bY)$ since $\hat\btheta_0=\arg\max_{\btheta\in\Theta_0}\mathcal L(\btheta|\bZ)$ such that $\mathcal L(\bP(\hat\btheta_0)|\bZ)\geq \mathcal L(\gamma\bP(\btheta^*)|\bZ)$. For notational simplicity, we write $P_{ij}(\btheta^*)\equiv P_{ij}^*$ and $P_{ij}(\hat\btheta_1)\equiv \hat P_{ij}$. To analyze the term on the right-hand side, we will use an iterated expectations argument as in \cite{wasserman2020universal}. Thus,
\begin{multline*}
    \mathbb E\left\{\frac{\mathcal L(\frac{\gamma}{1-\gamma}\hat\bP(\hat\btheta)|\bZ)}{\mathcal L(\gamma\bP(\btheta^*)|\bZ)}\Bigg|\bY\right\}\\
    =\sum_{z_{11}=0}^1\cdots\sum_{z_{nn}=0}^1\prod_{i<j}\frac{e^{-\frac{\gamma}{1-\gamma}\hat P_{ij}}(\frac{\gamma}{1-\gamma}\hat P_{ij})^{Z_{ij}}}{e^{-\gamma  P_{ij}^*}(\gamma P_{ij}^*)^{Z_{ij}}} 
    \times \left(\frac{\gamma  P_{ij}^*(1-Y_{ij})}{1-(1-\gamma) P_{ij}^*}\right)^{Z_{ij}}\left(1-\frac{\gamma  P_{ij}^*(1-Y_{ij})}{1-(1-\gamma) P_{ij}^*}\right)^{1-Z{ij}}
\end{multline*}
where the form of the statistic comes from Section 3.1 and 
$$
    Z_{ij}\mid Y_{ij}\sim\mathsf{Bernoulli}\left(\frac{\gamma  P_{ij}^*(1-Y_{ij})}{1-(1-\gamma) P_{ij}^*}\right)
$$
by Section \ref{supp:joint}. This expectation can be written in a more general form as
$$
    \sum_{x_1=0}^1\cdots\sum_{x_n=0}^1\prod_{i=1}^n a_{i}^{x_{i}}b_i^{1-x_i},
$$
where we ignore multiplicative factors that don't depend on $x_i$ ($Z_{ij}$). Since $n$ is finite, as are all terms, we can clearly re-write this expression as
$$
    \left(\sum_{x_1=0}^1 a_1^{x_1}b_1^{1-x_1}\right)\times \cdots\times \left(\sum_{x_n=0}^1 a_n^{x_n}b_n^{1-x_n}\right)
$$
and evaluating each summation yields
$$
    \left(a_1+b_1\right)\times \cdots\times \left(a_n+b_n\right)
    =\prod_{i=1}^n(a_i+b_i).
$$
Returning to our expression of interest, this result implies
\begin{multline*}
    \mathbb E\left\{\frac{\mathcal L(\frac{\gamma}{1-\gamma}\hat\bP(\hat\btheta)|\bZ)}{\mathcal L(\gamma\bP(\btheta^*)|\bZ)}\Bigg|\bY\right\}
    =e^{-\sum_{i<j}(\frac{\gamma}{1-\gamma}\hat P_{ij}-\gamma P_{ij}^*)}\prod_{i<j}\left\{1-\frac{\gamma P_{ij}^*(1-Y_{ij})}{1-(1-\gamma) P_{ij}^*}+\frac{\frac{\gamma}{1-\gamma}\hat P_{ij}(1-Y_{ij})}{1-(1-\gamma) P_{ij}^*}\right\}.
\end{multline*}
Since $1+x\leq e^x$ for all $x$, 
\begin{multline*}
    \mathbb E\left\{\frac{\mathcal L(\frac{\gamma}{1-\gamma}\hat\bP(\hat\btheta)|\bZ)}{\mathcal L(\gamma\bP(\btheta^*)|\bZ)}\Bigg|\bY\right\}
    \leq \exp\left[-\sum_{i<j}\left(\tfrac{\gamma}{1-\gamma}\hat P_{ij}-\gamma P_{ij}^*+\frac{\gamma P_{ij}^*(1-Y_{ij})}{1-(1-\gamma) P_{ij}^*}-\frac{\frac{\gamma}{1-\gamma}\hat P_{ij}(1-Y_{ij})}{1-(1-\gamma) P_{ij}^*}\right)\right],
\end{multline*}
and simplifying this expression yields
\begin{align*}
    \mathbb E\left\{\frac{\mathcal L(\frac{\gamma}{1-\gamma}\bP(\hat\btheta_1)|\bZ)}{\mathcal L(\gamma\bP(\btheta^*)|\bZ)}\right\}
    &=\mathbb E\left(\mathbb E\left\{\frac{\mathcal L(\frac{\gamma}{1-\gamma}\bP(\hat\btheta_1)|\bZ)}{\mathcal L(\gamma\bP(\btheta^*)|\bZ)}\Bigg|\bY\right\}\right)\\
    &\leq \mathbb E\exp\left(-\frac{\gamma}{1-\gamma}\sum_{i<j} \frac{\{Y_{ij}-(1-\gamma)P_{ij}^*\}\{\hat P_{ij}- (1-\gamma) P_{ij}^*\}}{1-(1-\gamma)P_{ij}^*}\right).
\end{align*}
Now, instead of looking at the expected value, we claim that the sum inside is deterministically $\geq0$; i.e.,
$$
\sum_{i<j} \frac{\{Y_{ij}-(1-\gamma)P_{ij}^*\}\{\hat P_{ij}- (1-\gamma) P_{ij}^*\}}{1-(1-\gamma)P_{ij}^*}\geq 0.
$$
Re-arranging some terms, this is equivalent to proving that
\begin{align*}
   &\sum_{i} \frac{\{Y_{i}-(1-\gamma)P_{i}^*\}\{\hat P_{i}- (1-\gamma) P_{i}^*\}}{1-(1-\gamma)P_{i}^*} \\
   &=\sum_{i} \frac{[(1-Y_{i})- (1-\{1-\gamma\}P_{i}^*)][(1-\hat P_{i})- (1-\{1-\gamma\}P_{i}^*)]}{1-(1-\gamma)P_{i}^*} \\
   &\equiv \sum_i \frac{(X_i-\beta_i)(\hat\beta_i-\beta_i)}{\beta_i}\\
   &\geq 0,    
\end{align*}
where we have changed the summation to a single index and defined $X_i\equiv 1-Y_i\sim\mathsf{Bernoulli}(\beta_i)$ with $\beta_i=1-(1-\gamma)P_i^*$. To show this last line, we will use the following useful facts. By \ref{a:edge},
$$
    \sum_i \hat \beta_i=\sum_i X_i.
$$
Additionally, since $\hat\btheta_1=\arg\max_{\btheta\in\Theta_1}\mathcal L(\btheta|\bY)$, then $\mathcal L( \bP(\hat\btheta_1)|\bY))\geq\mathcal L(\bP(\btheta)|\bY)$ for all $\btheta$, or equivalently, $\mathcal L( \bP(\hat\bbeta)|\bX))\geq\mathcal L(\bP(\bbeta)|\bX)$. This implies
$$
    \prod_{i}e^{-\hat \beta_i}\hat \beta_i^{X_i}\geq\prod_ie^{-\beta_i}\beta_i^{X_i}\iff 
    \sum_i(-\hat\beta_i+X_i\log\hat\beta_i)\geq\sum_i(-\beta_i+X_i\log\beta_i).
$$
Shuffling around the last inequality gives
$$
    \sum_i(\hat\beta_i-\beta_i)\leq\sum_i X_i\log\frac{\hat\beta_i}{\beta_i},
$$
and under \ref{a:edge}, this implies 
$$
    \sum_i(X_i-\beta_i)\leq\sum_i X_i\log\frac{\hat\beta_i}{\beta_i}.
$$
Dividing both sides by $\sum_j X_j$ and applying Jensen's inequality gives
$$
    \frac{\sum_i(X_i-\beta_i)}{\sum_j X_j}\leq \sum_i\frac{X_i}{\sum_j X_j}\log\frac{\hat\beta_i}{\beta_i}\leq \log\left(\sum_i\frac{X_i}{\sum_j X_j}\frac{\hat\beta_i}{\beta_i}\right).
$$
Noting that $\log x\leq x-1$ for all $x$, we get
$$
    1-\frac{\sum_i\beta_i}{\sum_i X_i}\leq \sum_i\frac{X_i}{\sum_jX_j}\frac{\hat\beta_i}{\beta_i}-1
$$
or, equivalently,
$$
   2-\frac{\sum_i\beta_i}{\sum_i X_i}\leq \sum_i\frac{X_i}{\sum_jX_j}\frac{\hat\beta_i}{\beta_i}.
$$
Multiplying both sides by $\sum_j X_j$ yields
$$
    2\sum_i X_i -\sum_i\beta_i\leq \sum_i\frac{X_i\hat\beta_i}{\beta_i}.
$$
Since $2\sum_i X_i=\sum_iX_i+\sum_i\hat\beta_i$, we get
$$
    0 \le \sum_i\left(\frac{X_i\hat\beta_i}{\beta_i}-X_i-\hat\beta_i+\beta_i\right)
    =\sum_i\frac{(X_i-\beta_i)(\hat\beta_i-\beta_i)}{\beta_i},
$$
as desired. Therefore,
$$
    \mathbb E\left\{\frac{\mathcal L(\frac{\gamma}{1-\gamma}\bP(\hat\btheta_1)|\bZ)}{\mathcal L(\gamma\bP(\btheta^*)|\bZ)}\right\}
    \leq \mathbb E\exp\left(-\frac{\gamma}{1-\gamma}\sum_{i<j} \frac{\{Y_{ij}-(1-\gamma)P_{ij}^*\}\{\hat P_{ij}- (1-\gamma) P_{ij}^*\}}{1-(1-\gamma)P_{ij}^*}\right)
    \leq 1
$$
as we hoped to show. $\square$

\clearpage 

\section{Proof of Theorem \ref{thm:er}}\label{supp:er}
{\it Let $\bA_n\sim \varrho_n\bP(\btheta^*)$ under \ref{a:sparse}. Assume that we are testing 
$$
    H_0:\btheta^*\in\Theta_{0}\text{ vs. }\btheta^*\in\Theta_1
$$
where $\Theta_{0}$ is the parameter space corresponding to the Erdos-Renyi model, i.e., $P_{ij}(\btheta^*)=p$ for all $(i,j)$, and $\Theta_1$ corresponds to the popularity-adjusted block model, i.e., $P_{ij}(\btheta^*)=\lambda_{ic_j^*}^*\lambda_{jc_i^*}^*$, under \ref{a:K}--\ref{a:pbar}. If $\btheta^*\in\Theta_1$ such that $\bA_n$ is generated from the alternative model, then there exists some $\epsilon>0$ such that, as $n\to\infty$, }
$$
    \mathbb P(\log E_n>\epsilon\varrho_nn^2)\to 1.
$$
\subsection*{Main proof}
Recall that 
$$
    \log E_n
    =\log \frac{\mathcal L(\tfrac{\gamma}{1-\gamma}\bP(\hat\btheta_1)|\bZ)}{\mathcal L(\bP(\hat\btheta_0)|\bZ)}
    =\log\frac{\mathcal L(\tfrac{\gamma}{1-\gamma}\bP(\hat\btheta_1)|\bZ)}{\prod_{ij}e^{-\hat p}(\hat p)^{Z_{ij}}}
$$
where $\hat p=n^{-2}\sum_{ij}Z_{ij}$ (the MLE under the null hypothesis), and, from now on, we take all products/summations over all $(i,j)$ instead of $\{i<j\}$ for notational simplicity. This effectively treats the network as directed and allowing for self-loops, but all results also hold for undirected networks, i.e., only considering $\{i<j\}$. We first define the following terms:
\begin{align*}
    B_n&=\log\frac{\prod_{ij} e^{-\gamma \varrho_n P_{ij}(\btheta^*)}}
    {\prod_{ij} e^{-\hat p}}\\
    C_n&=\log \frac{\prod_{ij}\{\gamma\varrho_nP_{ij}(\btheta^*)\}^{Z_{ij}}}{\prod_{ij}(\gamma\varrho_n\bar p)^{Z_{ij}}}\\
    D_n&=\log \frac{\prod_{ij}(\gamma\varrho_n\bar p)^{Z_{ij}}}{\prod_{ij}(\hat p)^{Z_{ij}}}\\
    L_n&=\log\frac{\mathcal L(\tfrac{\gamma}{1-\gamma}\bP(\hat\btheta_1)|\bZ)}{\mathcal L(\gamma\varrho_n\bP(\btheta^*)|\bZ)}
\end{align*}
where $\bar p=n^{-2}\sum_{ij}P_{ij}(\btheta^*)$. Then it is clear that
$$
    \log E_n
    = B_n + C_n + D_n + L_n.
$$
Notice that for any two random variables $X$ and $Y$,
\begin{align*}
    \mathbb P(X+Y\leq a)
    &=\mathbb P(X+Y\leq a,X\leq -b)+\mathbb P(X+Y\leq a,X>-b)\\
    &\leq P(X\leq -b) + \mathbb P(Y\leq a+b).
\end{align*}
By induction, it is easy to show that this implies
$$
    \mathbb P\left(\sum_{i=1}^d Y_i\leq a\right)
    \leq \sum_{i=1}^{d-1}\mathbb P(Y_i\leq -a_i) + \mathbb P\left(Y_d\leq a+\sum_{i=1}^{d-1}a_i\right).
$$
Thus,
\begin{align*}
    &\mathbb P(\log E_n\leq \epsilon\varrho_nn^2)\\
    &=\mathbb P(B_n + C_n + D_n + L_n\leq \epsilon\varrho_nn^2)\\
    &\leq \mathbb P(B_n\leq -\epsilon_1\varrho_nn^2) + \mathbb P(D_n\leq -\epsilon_2\varrho_nn^2)
    + \mathbb P(L_n\leq -\epsilon_3\varrho_nn^2) + \mathbb P\{C_n\leq (\epsilon+\epsilon_1+\epsilon_2+\epsilon_3)\varrho_nn^2\}
\end{align*}
where $\epsilon>0$ will be chosen later. The claim is proved if we can show that each of these probabilities goes to zero, which we do with the following lemmas.

\begin{lemma}\label{lemma:1}
    {\it Let $\bA\sim\varrho_n\bP(\btheta^*)$ where $\bP(\btheta^*)$ defines a popularity-adjusted block model (PABM), i.e., $P_{ij}(\btheta^*)=\lambda^*_{ic_j^*}\lambda_{jc_i^*}^*$, and $\hat\btheta=\arg\max_{\btheta\in\Theta}\mathcal L(\btheta|\bA)$. Define}
$$
    X_n=\log\frac{\prod_{ij} e^{-\varrho_nP_{ij}(\btheta^*)}}
    {\prod_{ij} e^{-\hat P_{ij}(\hat\btheta)}}.
$$
{\it Then}
$$
    (\varrho_nn^2)^{-1/2}X_n
    \stackrel{d}{\to}\mathsf{N}(0,k^2)
$$
{\it as $n\to\infty$ for some constant $k$, and, hence,} $X_n = o_p(\varrho_n n^2)$. 
\end{lemma}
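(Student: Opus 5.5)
Since $X_n=\sum_{ij}\hat P_{ij}(\hat\btheta)-\varrho_n\sum_{ij}P_{ij}(\btheta^*)$, the plan is to use \ref{a:edge} to turn $X_n$ into a centered sum of independent Bernoulli variables, apply a triangular-array central limit theorem, and then read off the $o_p$ bound.

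\textbf{Step 1 (reduction).} By \ref{a:edge}, which holds for the PABM by \cite{senguptapabm}, the fitted edge probabilities reproduce the observed edge count: $\sum_{ij}\hat P_{ij}(\hat\btheta)=\sum_{ij}A_{ij}$. Hence
$$
X_n=\sum_{ij}\{A_{ij}-\varrho_nP_{ij}(\btheta^*)\}.
$$
This is a sum of independent, mean-zero, bounded summands. The community estimate and the $\hat\lambda$'s drop out completely, so no consistency result for $\hat\btheta$ is needed here.

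\textbf{Step 2 (variance).} Write $P^*_{ij}=P_{ij}(\btheta^*)$. Then
$$
s_n^2:=\operatorname{Var}(X_n)=\sum_{ij}\varrho_nP^*_{ij}(1-\varrho_nP^*_{ij}).
$$
By \ref{a:pbar}, $\sum_{ij}P^*_{ij}=\bar p\,n^2$. Since $\varrho_n=o(1)$ by \ref{a:sparse} and $P^*_{ij}\le 1$, it follows that
$$
s_n^2=\varrho_nn^2\bar p\{1+o(1)\}.
$$
So $s_n^2/(\varrho_nn^2)\to\bar p=:k^2$. Under the undirected convention there is a harmless extra constant factor, which is absorbed into $k$.

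\textbf{Step 3 (CLT).} I will verify Lyapunov's condition with third absolute moments. Each summand satisfies
$$
\mathbb E|A_{ij}-\varrho_nP^*_{ij}|^3\le \mathbb E|A_{ij}-\varrho_nP^*_{ij}|^2\le\varrho_nP^*_{ij},
$$
because the summands are bounded by one. The Lyapunov ratio is therefore at most $s_n^2/s_n^3=s_n^{-1}\asymp(\varrho_nn^2)^{-1/2}$. This tends to $0$, since \ref{a:sparse} gives $n\varrho_n\to\infty$ and hence $\varrho_nn^2\to\infty$. Lindeberg--Feller then gives $X_n/s_n\stackrel{d}{\to}\mathsf N(0,1)$, and Slutsky's theorem with Step 2 gives $(\varrho_nn^2)^{-1/2}X_n\stackrel{d}{\to}\mathsf N(0,k^2)$.

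\textbf{Step 4 (rate).} Convergence in distribution implies $X_n=\mathcal O_p\{(\varrho_nn^2)^{1/2}\}$. Dividing by $\varrho_nn^2$ leaves a factor $(\varrho_nn^2)^{-1/2}\to0$, so $X_n=o_p(\varrho_nn^2)$.

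The step needing the most care is Step 1. I must check that $\hat\btheta$ is the exact Poisson MLE on the same matrix that appears in $X_n$, so that \ref{a:edge} applies verbatim. When the lemma is later used for $B_n$, the matrix is $\bZ$ with target $\gamma\varrho_n\bP(\btheta^*)$ and $\hat p$ the ER MLE, which also satisfies the edge-sum identity; the same argument applies with $\varrho_n$ replaced by $\gamma\varrho_n$. If the fitted PABM were only an approximate maximizer (for example, from spectral labels), I would instead bound $|\sum\hat P_{ij}-\sum A_{ij}|$ separately. The identity still holds for any fixed labels once $\hat\lambda$ is profiled out, since the Poisson score equations in $\lambda$ preserve block edge totals.
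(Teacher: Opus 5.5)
Your proposal is correct and follows the paper's proof essentially step for step. The paper also uses the edge-sum identity (Assumption~\ref{a:edge}) to rewrite $X_n=\sum_{ij}(A_{ij}-\mathbb E A_{ij})$, computes $s_n^2$, applies Lyapunov's CLT, and then rescales. The only differences are minor. The paper checks Lyapunov's condition with fourth moments ($\delta=2$, giving a ratio of order $(\varrho_n n^2)^{-1}$) instead of your third moments. It also simply asserts $s_n(\varrho_n n^2)^{-1/2}\to k$, whereas you identify $k^2=\bar p$ via Assumption~\ref{a:pbar} and derive the $o_p(\varrho_n n^2)$ rate explicitly from tightness.
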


\begin{lemma}\label{lemma:2}
   {\it Let $\bA\sim\varrho_n\bP(\btheta^*)$ where $\bP(\btheta^*)$ defines a popularity-adjusted block model (PABM), i.e., $P_{ij}(\btheta^*)=\lambda^*_{ic_j^*}\lambda_{jc_i^*}^*$. Furthermore, define $\hat p=n^{-2}\sum_{ij}A_{ij}$, $\bar p=n^{-2}\sum_{ij}P_{ij}(\btheta^*)$ and}
$$
    X_n=\log \frac{\prod_{ij}(\varrho_n\bar p)^{A_{ij}}}{\prod_{ij}(\hat p)^{A_{ij}}}.
$$
{\it Then for any $\epsilon>0$,}
$$
    \lim_{n\to\infty}\mathbb P(X_n\leq -\epsilon\varrho_nn^2)=0.
$$ 
\end{lemma}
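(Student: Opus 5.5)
The idea is to reduce $X_n$ to a function of the total edge count and then use concentration of that count. Write $S_n=\sum_{ij}A_{ij}$ and $\mu_n=\mathbb E S_n=\varrho_n\sum_{ij}P_{ij}(\btheta^*)=\varrho_n n^2\bar p$. Since $\hat p=S_n/n^2$, we get
$$
X_n=\sum_{ij}A_{ij}\log\frac{\varrho_n\bar p}{\hat p}=S_n\log\frac{\varrho_n n^2\bar p}{S_n}=-S_n\log\frac{S_n}{\mu_n}.
$$
On the event $S_n=0$ we have $X_n=0$ by convention, and this event is negligible anyway. So $X_n$ depends only on the ratio $S_n/\mu_n$, and the task is to show that this ratio concentrates at $1$ fast enough.

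To avoid handling the logarithm asymptotically, apply $\log x\le x-1$ with $x=S_n/\mu_n$. This gives the deterministic lower bound
$$
X_n\ \ge\ -S_n\Big(\frac{S_n}{\mu_n}-1\Big)=-\frac{S_n}{\mu_n}\,(S_n-\mu_n).
$$
It then suffices to show that $\frac{S_n}{\mu_n}(S_n-\mu_n)=o_p(\varrho_n n^2)$.

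For the concentration step, note that $S_n$ is a sum of independent Bernoulli variables: the $A_{ij}$ with $i<j$ are independent, and the symmetric duplicates and the diagonal only change constants. Hence $\operatorname{Var}(S_n)\le C\mu_n$ for a fixed constant $C$. Chebyshev's inequality then gives $S_n-\mu_n=\mathcal O_p(\mu_n^{1/2})$ and $S_n/\mu_n=1+\mathcal O_p(\mu_n^{-1/2})=\mathcal O_p(1)$. By \ref{a:pbar}, $\bar p>0$ is a fixed constant, so $\mu_n\asymp\varrho_n n^2$. By \ref{a:sparse}, $\varrho_n n^2\ge n\varrho_n^2\cdot n\to\infty$, so $\mu_n\to\infty$.

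Combining these, $\frac{S_n}{\mu_n}(S_n-\mu_n)=\mathcal O_p\big((\varrho_n n^2)^{1/2}\big)=o_p(\varrho_n n^2)$. Therefore, for any $\epsilon>0$,
$$
\mathbb P(X_n\le-\epsilon\varrho_nn^2)\le\mathbb P\Big(\tfrac{S_n}{\mu_n}(S_n-\mu_n)\ge\epsilon\varrho_nn^2\Big)\to0.
$$

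The only step that needs care is checking that $\mu_n$ really grows like $\varrho_n n^2$, which requires $\bar p$ to be bounded away from zero. This is exactly what \ref{a:pbar}, together with \ref{a:pmin}, provides.
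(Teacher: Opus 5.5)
Your proof is correct. It starts from the same reduction as the paper: both write $X_n=n^2\hat p\log\{\varrho_n\bar p/\hat p\}$, a function of the total edge count alone, so everything rests on $\hat p/\varrho_n$ concentrating at $\bar p$.

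The two arguments differ in how they handle the logarithm.
\begin{itemize}
\item The paper divides by $\varrho_n n^2$ and imports $\hat p/\varrho_n\stackrel{p}{\to}\bar p$ from the Lyapunov CLT of Lemma~\ref{lemma:1}. It then applies the continuous mapping theorem to $f(x)=x\log(\bar p/x)$ and gets the two-sided statement $X_n/(\varrho_n n^2)\stackrel{p}{\to}0$.
\item You linearize with the deterministic inequality $\log x\le x-1$. This gives only a lower bound on $X_n$, which is all the lemma needs, and you control that bound with Chebyshev's inequality.
\end{itemize}

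What each route buys:
\begin{itemize}
\item Yours is self-contained: no CLT, no continuity argument at $\bar p$. It also gives an explicit rate, $X_n\ge-\mathcal O_p\{(\varrho_n n^2)^{1/2}\}$, which matches the actual fluctuation scale of $X_n$.
\item The paper's gives two-sided control of $X_n$, but only at the coarser $o_p(\varrho_n n^2)$ scale.
\end{itemize}

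Both arguments use the same assumptions in the same way: \ref{a:pbar}, with $\bar p>0$, to identify $\mu_n$ with $\varrho_n n^2$ up to a constant, and \ref{a:sparse} to ensure $\varrho_n n^2\to\infty$.
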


\begin{lemma}\label{lemma:3}
  {\it Let $\bA_n\sim \varrho_n\bP(\btheta^*)$ where $\bP(\btheta^*)$ defines a popularity-adjusted block model (PABM), i.e., $P_{ij}(\btheta^*)=\lambda^*_{ic_j^*}\lambda_{jc_i^*}^*$. Define $\bY$ and $\bZ$ by the edge sampling procedure described in Section 3, $\hat\btheta_1=\arg\max_{\btheta\in\Theta}\mathcal L(\btheta|\bY)$ for any $\Theta$ in the PABM family where $\btheta^*\in\Theta$, and}
$$
    X_n=\log\frac{\mathcal L(\tfrac{\gamma}{1-\gamma}\bP(\hat\btheta_1)|\bZ)}{\mathcal L(\gamma\varrho_n\bP(\btheta^*)|\bZ)}. 
$$
{\it Then for any $\epsilon>0$,}
$$
    \lim_{n\to\infty}\mathbb P(X_n\leq -\epsilon\varrho_n n^2)=0.
$$  
\end{lemma}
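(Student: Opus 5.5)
We need to prove Lemma 3: $X_n=\log \mathcal L(\tfrac{\gamma}{1-\gamma}\bP(\hat\btheta_1)|\bZ)-\log\mathcal L(\gamma\varrho_n\bP(\btheta^*)|\bZ)$ is not below $-\epsilon\varrho_nn^2$ with probability tending to one. Write $\hat P_{ij}=P_{ij}(\hat\btheta_1)$, which estimates $(1-\gamma)\varrho_nP_{ij}^*$ from $\bY$, and set $\tilde P_{ij}=\tfrac{\gamma}{1-\gamma}\hat P_{ij}$. Expanding the Poisson log-likelihoods gives
\begin{equation*}
X_n=-\sum_{ij}\bigl(\tilde P_{ij}-\gamma\varrho_nP^*_{ij}\bigr)+\sum_{ij}Z_{ij}\log\frac{\tilde P_{ij}}{\gamma\varrho_nP^*_{ij}}=:T_1+T_2 .
\end{equation*}
The plan is to show $|T_1|=o_p(\varrho_nn^2)$ and $|T_2|=o_p(\varrho_nn^2)$.

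\textbf{The first term.} By \ref{a:edge} applied to $\bY$, we have $\sum_{ij}\hat P_{ij}=\sum_{ij}Y_{ij}$. Hence
\begin{equation*}
T_1=-\tfrac{\gamma}{1-\gamma}\sum_{ij}\bigl\{Y_{ij}-(1-\gamma)\varrho_nP^*_{ij}\bigr\},
\end{equation*}
a centered sum of independent Bernoulli variables with variance $O(\varrho_nn^2)$. Chebyshev's inequality then gives $T_1=O_p(\sqrt{\varrho_nn^2})=o_p(\varrho_nn^2)$, in exactly the spirit of Lemma \ref{lemma:1}, which could be invoked directly with $\bA$ replaced by $\bY$ and $\varrho_n$ replaced by $(1-\gamma)\varrho_n$.

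\textbf{The second term: reduction.} Since $\sum_{ij}Z_{ij}=O_p(\varrho_nn^2)$, it suffices to show
\begin{equation*}
\max_{ij}\Bigl|\log\frac{\hat P_{ij}}{(1-\gamma)\varrho_nP^*_{ij}}\Bigr|=o_p(1),
\end{equation*}
and then bound $|T_2|\le(\sum_{ij}Z_{ij})\max_{ij}|\cdot|$. By \ref{a:pmin} and \ref{a:K}, $P^*_{ij}$ is bounded below, so the log is well defined. Uniform relative consistency of $\hat P_{ij}$ follows from the PABM results of \cite{senguptapabm}, applied to $\bY\sim(1-\gamma)\varrho_n\bP(\btheta^*)$, which still satisfies \ref{a:sparse}. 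Under \ref{a:sparse}, \ref{a:K}, \ref{a:ident} and \ref{a:set}, those results give:
\begin{itemize}
\item[(a)] label consistency, i.e.\ $\hat\bc=\bc^*$ up to permutation with probability tending to one; and
\item[(b)] on that event, closed-form MLEs $\hat\lambda_{ik}=Y_{i\cdot k}/\sqrt{Y_{kk}\text{-block sums}}$, namely node-to-block degree over the square root of block totals.
\end{itemize}
Each numerator is a Binomial-type sum with mean of order $n\varrho_n$. Bernstein's inequality with a union bound over $nK$ terms gives relative error $O_p(\sqrt{\log n/(n\varrho_n)})$, which is $o(1)$ because $n\varrho_n^2/\log^2 n\to\infty$. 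Block totals concentrate even faster. Thus $\hat P_{ij}/\{(1-\gamma)\varrho_nP^*_{ij}\}\to1$ uniformly, and $T_2=o_p(\varrho_nn^2)$.

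\textbf{Handling exact recovery.} If exact label recovery is not available, I would instead use the fraction of misclassified nodes, $o_p(1)$, to split the sum in $T_2$. Pairs with correctly labelled endpoints are handled as above. The remaining $o_p(n^2)$ pairs contribute $O_p(\varrho_n)\cdot o(n^2)\cdot O(\log n)$, where $O(\log n)$ bounds the log-ratio. That bound uses \ref{a:set}, which keeps estimated block densities at least of order $\varrho_n/\log n$ whenever they are nonzero. One must also check that a misclassified pair cannot have $\hat P_{ij}=0$ with $Z_{ij}=1$; this follows because estimated blocks of size $O(n)$ have nonzero interaction with high probability.

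\textbf{Main obstacle.} The crux is the uniform (max over $i,j$) relative consistency of $\hat P_{ij}$ in the sparse regime, together with controlling the log near zero. The first thing to try is to lift the consistency of the $\hat\lambda_{ik}$ from \cite{senguptapabm} to sup-norm via Bernstein and a union bound. Note that $T_1$ and $T_2$ are evaluated on $\bZ$ while $\hat\btheta_1$ depends on $\bY$. However, no independence is needed, since the argument is a deterministic bound on $|T_2|$ times $\sum_{ij} Z_{ij}$.
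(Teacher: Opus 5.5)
Your treatment of $T_1$ matches the paper's $X_n^{(1)}$, and your sign is the correct one. The gap is in $T_2$.

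Your sup-norm argument needs exact recovery, $\hat\bc=\bc^*$ with probability tending to one. The only label guarantee the paper takes from \cite{senguptapabm} is weak consistency, $\sum_i\mathds{1}(\hat c_i\neq c_i^*)=o_p(n)$. So your fallback is the operative argument, and it does not close, for two reasons.

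First, correctly labelled nodes are not ``handled as above.'' The estimator $\hat\lambda_{ik}(\hat\bc)$ sums $Y_{ij}$ over the \emph{estimated} block, and its misclassified members are chosen as a function of $\bY$. You therefore cannot apply Bernstein with a union bound over $nK$ fixed sums, and a union over labelings is far too costly at the node-degree scale $n\varrho_n$. A node whose neighbours happen to be misclassified can have $\hat\lambda_{ik}$ badly off, or even zero. Relatedly, $\hat P_{ij}=0$ is a node-to-block event: $i$ has no $\bY$-edge into the estimated block $\hat c_j$. Nonzero interaction at the block level does not rule this out.

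Second, you compare $\hat P_{ij}$ directly with $P^*_{ij}$ on the bad pairs, so each such $\bZ$-edge can cost a log-ratio of order $\log n$. Their total is then $o_p(\varrho_nn^2)\cdot O(\log n)$. That is $o_p(\varrho_nn^2)$ only if the misclassification rate is $o(1/\log n)$, and you have no such rate.

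The missing idea is the paper's intermediate comparison with the population parameters evaluated \emph{at the estimated labels}, $\lambda^*_{ik}(\hat\bc)$. This splits $T_2$ into $2X_n^{(2)}+2X_n^{(3)}$.
\begin{itemize}
\item The mislabelling term $X_n^{(3)}$ involves only quantities that are deterministic given $\hat\bc$. By \ref{a:pmin} and \ref{a:K} these are bounded away from $0$ and $\infty$, so the per-edge log-ratio is $O(1)$. Then $o_p(n)$ misclassified nodes suffice, via Cauchy--Schwarz and de Caen's bound $\sum_iD_i^2=\mathcal O_p(\varrho_n^2n^3)$ \cite{de1998upper}.
\item The estimation term $X_n^{(2)}$ compares $\hat\lambda_{ik}(\hat\bc)$ with $\varrho_n^{1/2}\lambda^*_{ik}(\hat\bc)$ at the same labels. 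It is controlled in an averaged ($\ell_2$) sense through $\sum_{i,k}\{\hat\lambda_{ik}-\varrho_n^{1/2}\lambda^*_{ik}(\hat\bc)\}^2$, again with Cauchy--Schwarz, rather than in sup norm.
\item The event $\{\hat\lambda_{ik}=0,\ M_{ik}>0\}$ is bounded separately.
\end{itemize}

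If you could genuinely cite exact recovery under \ref{a:sparse}--\ref{a:set}, your sup-norm route would be valid and cleaner than the paper's, because uniform relative consistency disposes of the $-\infty$ event automatically. As written, though, it rests on an input the paper does not supply.
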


\begin{lemma}\label{lemma:4}
    {\it Let}
$$
    f(\bu, \bv)
    =1-\frac{\sum_{i=1}^n (u_iv_i)^{1/2}}{\sum_{i=1}^n u_i}.
$$
{\it where $\sum_{i=1}^n u_i=\sum_{i=1}^n v_i$. Then $f(\bu,\bv)\geq0$ for all $\bu,\bv\geq0$ with equality if and only if $\bu$ and $\bv$ are linearly dependent.}
\end{lemma}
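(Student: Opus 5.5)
The plan is to rewrite $f$ as a normalized sum of squares, so that nonnegativity and the equality case can both be read off directly. Write $S=\sum_{i=1}^n u_i=\sum_{i=1}^n v_i$. We take $S>0$ throughout, since otherwise $f$ is undefined: the denominator vanishes exactly when $\bu=\bv={\bf 0}$.

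For nonnegative entries we have $(\sqrt{u_i}-\sqrt{v_i})^2=u_i+v_i-2(u_iv_i)^{1/2}$. Summing over $i$ and using $\sum_i u_i=\sum_i v_i=S$ gives
\begin{equation*}
\sum_{i=1}^n(\sqrt{u_i}-\sqrt{v_i})^2 = 2S-2\sum_{i=1}^n (u_iv_i)^{1/2},
\end{equation*}
and therefore
\begin{equation*}
f(\bu,\bv)=\frac{1}{2S}\sum_{i=1}^n(\sqrt{u_i}-\sqrt{v_i})^2\geq 0 .
\end{equation*}
Equivalently, one could apply Cauchy--Schwarz to the vectors $(\sqrt{u_i})_i$ and $(\sqrt{v_i})_i$, which gives $\sum_i(u_iv_i)^{1/2}\le \sqrt{S}\sqrt{S}=S$. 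The sum-of-squares identity is preferable because it makes the equality case transparent.

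For the equality case, the display shows that $f(\bu,\bv)=0$ if and only if $\sqrt{u_i}=\sqrt{v_i}$ for every $i$, that is, if and only if $\bu=\bv$. It then remains to check that, under the constraint $\sum_i u_i=\sum_i v_i=S>0$ with $\bu,\bv\ge 0$, the condition $\bu=\bv$ is the same as linear dependence.
\begin{itemize}
\item If $\bu=\bv$, the two vectors are trivially linearly dependent.
\item Conversely, suppose $\bu$ and $\bv$ are linearly dependent. Neither vector is zero, because each has coordinate sum $S>0$. Hence $\bu=c\,\bv$ for some scalar $c\neq 0$. Summing coordinates gives $S=cS$, so $c=1$ and $\bu=\bv$.
\end{itemize}

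The only delicate point is this bookkeeping around the degenerate case: "linearly dependent" must be read under the standing normalization $\sum_i u_i=\sum_i v_i>0$, which is what excludes the zero vector and forces the proportionality constant to equal $1$. Otherwise the argument is the one-line identity above, and it involves no probabilistic content.
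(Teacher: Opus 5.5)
Your proof is correct, and it takes a slightly different route from the paper.

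\textbf{The two arguments.} The paper applies Cauchy--Schwarz to $(\sqrt{u_i})_i$ and $(\sqrt{v_i})_i$ to get $\sum_i (u_iv_i)^{1/2}\le\{(\sum_i u_i)(\sum_i v_i)\}^{1/2}=\sum_i u_i$. It then reads off the equality case directly as linear dependence. You instead use the termwise identity $(\sqrt{u_i}-\sqrt{v_i})^2=u_i+v_i-2(u_iv_i)^{1/2}$, which is termwise AM--GM. This uses the equal-sum constraint essentially and produces the exact formula $f(\bu,\bv)=\frac{1}{2S}\sum_i(\sqrt{u_i}-\sqrt{v_i})^2$.

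\textbf{What the paper's route buys.} Cauchy--Schwarz does not need the normalization for the inequality itself; it gives the scale-invariant bound for arbitrary nonnegative $\bu,\bv$. That is why ``linearly dependent'' is the natural phrasing there. Strictly, though, its equality condition concerns the square-root vectors. The paper leaves implicit both the translation back to $\bu,\bv$ (harmless for nonnegative vectors) and the degenerate case $S=0$.

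\textbf{What your route buys.} You make both of those points explicit. You also show that, under the constraint, equality holds exactly when $\bu=\bv$, which matches how the lemma is actually used (``not all equal to $\bar p$'' in Lemma~\ref{lemma:5}). Your identity also gives something quantitative that the bare statement $f>0$ does not. In Lemmas~\ref{lemma:5}, \ref{lemma:7} and \ref{lemma:9}, the constant $C$ must stay bounded away from zero uniformly in $n$ for a fixed $\epsilon$ to work. Your identity writes $C$ as an averaged squared distance, e.g.\ $C=\frac{1}{2\bar p n^2}\sum_{ij}\bigl(\sqrt{P_{ij}(\btheta^*)}-\sqrt{\bar p}\bigr)^2$ in Lemma~\ref{lemma:5}. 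From that form, such a uniform lower bound can actually be checked.
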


\begin{lemma}\label{lemma:5}
    {\it Let $\bA\sim\varrho_n\bP(\btheta_n^*)$ where $\bP(\btheta^*)$ defines a popularity-adjusted block model (PABM), i.e., $P_{ij}(\btheta^*)=\lambda^*_{ic_j^*}\lambda_{jc_i^*}^*$. Let}
    $$
        X_n
        =\log \frac{\prod_{ij}\{\varrho_nP_{ij}(\btheta^*)\}^{A_{ij}}}{\prod_{ij}(\varrho_n\bar p)^{A_{ij}}}.
    $$
    {\it where $ \bar p =n^{-2}\sum_{ij}P_{ij}(\btheta^*)$, i.e., the average edge probability. Then there exists some $\epsilon>0$ such that}
$$
    \lim_{n\to\infty}\mathbb P(X_n\leq \epsilon\varrho_n n^2)=0.
$$
\end{lemma}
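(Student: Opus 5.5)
The plan is a first-moment/second-moment argument. Write $P_{ij}\equiv P_{ij}(\btheta^*)$, so that
$$
    X_n=\sum_{ij}A_{ij}\log\frac{P_{ij}}{\bar p}.
$$
I will show three things:
$$
    \mathbb E X_n\geq 2\epsilon\varrho_nn^2 \ \text{for some }\epsilon>0,
    \qquad \mathrm{Var}(X_n)=\mathcal O(\varrho_nn^2),
$$
and then that the fluctuations of $X_n$ are negligible, by Chebyshev's inequality.

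\textbf{Step 1: the mean is a nonnegative divergence.} We have
$$
    \mathbb E X_n=\varrho_n\sum_{ij}P_{ij}\log(P_{ij}/\bar p).
$$
Since $\sum_{ij}P_{ij}=n^2\bar p$, this equals $\varrho_n\bar p\sum_{ij}\phi(P_{ij}/\bar p)$ with $\phi(x)=x\log x-x+1\geq0$; subtracting $\sum_{ij}(P_{ij}/\bar p-1)=0$ is what makes this rewriting valid. This is the Gibbs inequality, and it vanishes only when all $P_{ij}$ are equal, i.e.\ when the model is Erdos-Renyi.

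To make the bound quantitative, I use $\phi(x)\geq (x-1)^2/\{2\max(x,1)\}$. The ratio $x=P_{ij}/\bar p$ is bounded above because $P_{ij}\leq1$ and $\bar p$ is a fixed positive constant by \ref{a:pbar} and \ref{a:pmin}. Hence
$$
    \mathbb E X_n\geq c\,\varrho_n\sum_{ij}(P_{ij}-\bar p)^2
$$
for some constant $c>0$.

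\textbf{Step 2: the spread of $P_{ij}$ is of order $n^2$ (main obstacle).} I need $\sum_{ij}(P_{ij}-\bar p)^2\geq c'n^2$. This requires that the alternative PABM be non-ER on a positive fraction of node pairs, not merely on a few. I would argue this through the block structure:
\begin{itemize}
\item By \ref{a:K}, every block $\{c_i^*=k,\,c_j^*=\ell\}$ contains $\Theta(n^2)$ pairs.
\item On each block, $P_{ij}=\lambda^*_{i\ell}\lambda^*_{jk}$ is rank one.
\item By \ref{a:ident}, for $j_1,j_2$ in different communities the ratios $P_{ij_1}/P_{ij_2}$ take at least $K+1$ distinct values. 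So the columns of $\bP$ cannot all be constant multiples of one another.
\end{itemize}
Treating $\bP(\btheta^*)$ as a fixed profile whose empirical distribution over pairs stabilizes, as is implicit in \ref{a:pbar}, this non-constancy occurs on a non-vanishing fraction of pairs. That yields the $c'n^2$ lower bound.

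If this direct route is awkward, the first fallback is to bound the divergence from below by the Hellinger-type quantity in Lemma \ref{lemma:4}, with $\bu=(P_{ij})$ and $\bv=(\bar p)$. Its equality case is exactly linear dependence, i.e.\ an ER model, and it would remain to show that it stays bounded away from zero uniformly in $n$.

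\textbf{Step 3: variance and conclusion.} The $A_{ij}$ are independent, so
$$
    \mathrm{Var}(X_n)=\sum_{ij}\varrho_nP_{ij}(1-\varrho_nP_{ij})\log^2(P_{ij}/\bar p)\leq C\varrho_nn^2.
$$
The logarithms are uniformly bounded because $\min P_{ij}>0$ by \ref{a:pmin}, $P_{ij}\leq1$, and $\bar p$ is constant.

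Now set $\epsilon=cc'/2$. Chebyshev's inequality gives
$$
    \mathbb P(X_n\leq\epsilon\varrho_nn^2)\leq\mathbb P\bigl(|X_n-\mathbb EX_n|\geq\epsilon\varrho_nn^2\bigr)\leq \frac{C}{\epsilon^2\varrho_nn^2}.
$$
The right-hand side tends to $0$, since $\varrho_nn^2\to\infty$ by \ref{a:sparse}.
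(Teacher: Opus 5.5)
Your argument is correct in outline, but it takes a different route from the paper. The paper never computes a mean or a variance. It applies a Chernoff bound directly, bounds the moment generating function via $1+x\le e^x$, and evaluates at $t=-1/2$. This collapses the exponent to $\varrho_nn^2(\epsilon/2-\bar pC)$ with the Hellinger-type constant $C=1-\frac{1}{\bar pn^2}\sum_{ij}\{\bar pP_{ij}(\btheta^*)\}^{1/2}$, which is positive by Lemma \ref{lemma:4}.

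You instead recognize $\mathbb EX_n$ as $\varrho_n$ times the Kullback--Leibler divergence of the profile $(P_{ij})$ from its Erdos--Renyi average. You lower-bound it by a $\chi^2$-type spread and finish with Chebyshev. The paper's route gives an exponential tail $\exp\{-(\bar pC-\epsilon/2)\varrho_nn^2\}$ and needs no second-moment computation. Yours is more elementary and makes the role of the null transparent, since the drift is literally a divergence from ER. The cost is only $\mathcal O\{(\varrho_nn^2)^{-1}\}$ decay plus a separate variance bound, both harmless here because the lemma only asks for convergence to zero.

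The substantive issue is your Step 2, and the paper shares it. Since $\sum_{ij}P_{ij}=\bar pn^2$, one has $C=\frac{1}{2\bar pn^2}\sum_{ij}\bigl(\sqrt{P_{ij}}-\sqrt{\bar p}\bigr)^2$. The bounds $\bar p\le(\sqrt{P_{ij}}+\sqrt{\bar p})^2\le 4$ then show that $C$ is within constant factors of $n^{-2}\sum_{ij}(P_{ij}-\bar p)^2$. So your requirement $\sum_{ij}(P_{ij}-\bar p)^2\ge c'n^2$ is exactly the requirement that the paper's $C$ stay bounded away from zero uniformly in $n$. Your Lemma \ref{lemma:4} fallback is therefore the paper's quantity, not a way around the problem.

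Neither version follows from the stated assumptions. Assumption \ref{a:ident} is qualitative: it only excludes $P_{ij}\equiv\bar p$ for each fixed $n$, which is all the paper uses to assert $C>0$. A PABM whose $\lambda^*_{ik}$ drift toward a common value as $n$ grows can satisfy \ref{a:pmin}, \ref{a:ident} and \ref{a:pbar} while $C\to0$, and then no fixed $\epsilon$ works in either proof.

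Your bullet points about rank-one blocks and non-proportional columns do not close this; for an ER null only non-constancy of $P_{ij}$ matters. Your ``stabilizing empirical profile'' is an additional hypothesis, not a consequence of \ref{a:pbar}. The paper silently makes the same assumption by treating $C$ as independent of $n$, so your argument is as complete as the paper's. You should state $\liminf_n n^{-2}\sum_{ij}(P_{ij}-\bar p)^2>0$ explicitly as an assumption rather than presenting it as derived.
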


\noindent
Thus, Lemmas \ref{lemma:1}, \ref{lemma:2} and \ref{lemma:3}, respectively, give us that, for any $\epsilon_1,\epsilon_2,\epsilon_3>0$,
$$
    \mathbb P(B_n<-\epsilon_1\varrho_nn^2)\to 0,
$$
$$
    \mathbb P(D_n\leq -\epsilon_2\varrho_nn^2)\to 0
$$
and 
$$
    \mathbb P(L_n\leq -\epsilon_3\varrho_nn^2)\to 0
$$
as $n\to\infty$. Finally, from Lemma \ref{lemma:5}, we know that we can set $\epsilon$ to be small enough such that
$$
    \epsilon+\epsilon_1+\epsilon_2+\epsilon_3
    <2\gamma \bar pC
$$
where
$$
    C=1-\frac{1}{\bar pn^2}\sum_{ij}\{\bar pP_{ij}(\btheta^*)\}^{1/2}>0.
$$
Thus, 
$$
    \mathbb P\{C_n\leq (\epsilon+\epsilon_1+\epsilon_2+\epsilon_3)\varrho_nn^2\}\to 0
$$
as $n\to\infty$. $\square$

\noindent
\subsection*{Proof of lemmas}
For all proofs of lemmas, we assume that \ref{a:sparse}--\ref{a:pbar} are met as in the statement of the theorem.\\

\noindent
{\bf Lemma \ref{lemma:1}.} {\it Let $\bA\sim\varrho_n\bP(\btheta^*)$ where $\bP(\btheta^*)$ defines a popularity-adjusted block model (PABM), i.e., $P_{ij}(\btheta^*)=\lambda^*_{ic_j^*}\lambda_{jc_i^*}^*$, and $\hat\btheta=\arg\max_{\btheta\in\Theta}\mathcal L(\btheta|\bA)$. Define}
        $$
                X_n=\log\frac{\prod_{ij} e^{-\varrho_nP_{ij}(\btheta^*)}}
                {\prod_{ij} e^{-\hat P_{ij}(\hat\btheta)}}.
        $$
        {\it Then}
        $$
            (\varrho_nn^2)^{-1/2}X_n
            \stackrel{d}{\to}\mathsf{N}(0,k^2)
        $$
        {\it as $n\to\infty$ for some constant $k$, and, hence,} $X_n = o_p(\varrho_n n^2)$. \\

        \noindent
        {\it Proof.} We have that
$$
      X_n
      =\sum_{ij} \{ P_{ij}(\hat\btheta)- \varrho_nP_{ij}(\btheta^*)\} \\
      =\sum_{ij}(A_{ij}-\mathbb E A_{ij})
$$
since \cite{senguptapabm} show that \ref{a:edge} is met for the PABM. The terms inside the sum are clearly independent with mean 0 and (finite) variance $ \varrho_n P_{ij}(\btheta^*)\{1- \varrho_n P_{ij}(\btheta^*)\}$ such that
$$
    s^2_n
    =\sum_{ij}  \varrho_n P_{ij}(\btheta^*)\{1- \varrho_n P_{ij}(\btheta^*)\}
    =\mathcal O(\varrho_n n^2).
$$
We want to invoke Lyapunov's Central Limit Theorem so we must show 
$$
    \lim_{n\to\infty} \frac1{s_n^{2+\delta}}\sum_{ij}\mathbb E|A_{ij}-\mathbb EA_{ij}|^{2+\delta}
    \to 0.
$$
Letting $\delta=2$ and noting that $\mathbb E|A_{ij}|^{4} = \varrho_nP_{ij}(\btheta^*)\sim \varrho_n$, we have that 
$$
     \lim_{n\to\infty} \frac1{s_n^{4}}\sum_{ij}\mathbb E|A_{ij}-\mathbb EA_{ij}|^{4}
     \lesssim \lim_{n\to\infty}\frac{1}{s_n^4}\sum_{ij}\varrho_n=\mathcal O(\tfrac{1}{\varrho_nn^2})
     \to 0
$$
by \ref{a:sparse}. Thus, by Lyapunov's Central Limit Theorem
$$
    \frac1{s_n}\sum_{ij}(A_{ij}-\mathbb EA_{ij})
    \stackrel{d}{\to}\mathsf {N}(0,1).
$$
Therefore, 
$$
    (\varrho_n n^2)^{-1/2}X_n
    =\left(\frac1{s_n}\sum_{ij}(A_{ij}-\mathbb EA_{ij})\right)\times \frac{s_n}{(\varrho_n n^2)^{1/2}}
    \stackrel{d}{\to}\mathsf{N}(0,k^2)
$$
since $s_n(\varrho_n n^2)^{-1/2}\to k$ for some constant $k$ as $n\to\infty$. $\square$\\

\noindent
{\bf Lemma \ref{lemma:2}.} {\it Let $\bA\sim\varrho_n\bP(\btheta^*)$ where $\bP(\btheta^*)$ defines a popularity-adjusted block model (PABM), i.e., $P_{ij}(\btheta^*)=\lambda^*_{ic_j^*}\lambda_{jc_i^*}^*$. Furthermore, define $\hat p=n^{-2}\sum_{ij}A_{ij}$, $\bar p=n^{-2}\sum_{ij}P_{ij}(\btheta^*)$ and}
$$
    X_n=\log \frac{\prod_{ij}(\varrho_n\bar p)^{A_{ij}}}{\prod_{ij}(\hat p)^{A_{ij}}}.
$$
{\it Then for any $\epsilon>0$,}
$$
    \lim_{n\to\infty}\mathbb P(X_n\leq -\epsilon\varrho_nn^2)=0.
$$
{\it Proof.} First, we write out $X_n$ as
$$
    X_n=\sum_{ij}A_{ij}\{\log(\varrho_n\bar p)-\log(\hat p)\}
    =n^2\hat p\log\frac{\varrho_n\bar p}{\hat p}.
$$
Dividing by $\varrho_nn^2$, we have
$$
    \frac{X_n}{\varrho_nn^2}=\frac{\hat p}{ \varrho_n}\log\frac{\bar p}{\hat p/(\varrho_n)}.
$$
Implicitly from Lemma \ref{lemma:1}, we know that $\hat p/\varrho_n\stackrel{p}{\to}\bar p$ since $\sum_{ij}(A_{ij}-\mathbb EA_{ij})=n^2(\hat p-\varrho_n\bar p)$ and by \ref{a:pbar} $\bar p$ does not depend on $n$. If we define $f(x)=x\log \frac{\bar p}{x}$, then by the continuous mapping theorem 
$$
    \frac{X_n}{\varrho_nn^2}
    = f\left(\frac{\hat p}{\varrho_n}\right)
    \stackrel{p}{\to}f(\bar p)
    =\bar p\log\frac{\bar p}{\bar p}
    =0.
$$
Thus, for any $\epsilon>0$, 
$$
    \lim_{n\to\infty}\mathbb P(X_n\leq -\epsilon\varrho_nn^2)=0.\ \ \ \square
$$

\noindent
{\bf Lemma \ref{lemma:3}.} {\it Let $\bA_n\sim \varrho_n\bP(\btheta^*)$ where $\bP(\btheta^*)$ defines a popularity-adjusted block model (PABM), i.e., $P_{ij}(\btheta^*)=\lambda^*_{ic_j^*}\lambda_{jc_i^*}^*$. Define $\bY$ and $\bZ$ by the edge sampling procedure described in Section 3, $\hat\btheta_1=\arg\max_{\btheta\in\Theta}\mathcal L(\btheta|\bY)$ for any $\Theta$ in the PABM family where $\btheta^*\in\Theta$, and}
$$
    X_n=\log\frac{\mathcal L(\tfrac{\gamma}{1-\gamma}\bP(\hat\btheta_1)|\bZ)}{\mathcal L(\gamma\varrho_n\bP(\btheta^*)|\bZ)}. 
$$
{\it Then for any $\epsilon>0$,}
$$
    \lim_{n\to\infty}\mathbb P(X_n\leq -\epsilon\varrho_n n^2)=0.
$$

\noindent
{\it Proof.} First, we note that we prove this result for a general PABM from which the results for the SBM (needed for Theorem 4) are easy to deduce as this model is a special case of the PABM.\\

\noindent
We start by defining
\begin{equation}\label{eq:lambda}
    \hat\lambda_{ik}(\hat\bc)
    =\frac{\sum_{j=1}^n Y_{ij}\mathds{1}(\hat c_j=k)}{\{\sum_{ab} Y_{ab}\mathds{1}(\hat c_a=\hat c_i, \hat c_b=k)\}^{1/2}}=\mathcal O_p(\varrho_n^{1/2})
\end{equation}
and
$$
    \lambda^*_{ik}(\hat\bc)
    =\frac{\sum_{j=1}^n P_{ij}(\btheta^*)\mathds{1}(\hat c_j=k)}{\{\sum_{ab} P_{ab}(\btheta^*)\mathds{1}(\hat c_a=\hat c_i, \hat c_b=k)\}^{1/2}}
$$
where $\hat\bc$ is a function of $\bY$ and \cite{senguptapabm} show that the MLEs of the PABM take the form in \eqref{eq:lambda}. We then write out $X_n$ as
\begin{align*}
    &\log\mathcal L(\tfrac{\gamma}{1-\gamma}\bP(\hat\btheta_1)|\bZ)-\log\mathcal L(\gamma\varrho_n\bP(\btheta^*)|\bZ)\\
    &=\sum_{ij}\{\tfrac{\gamma}{1-\gamma} P_{ij}(\hat\btheta_1)-\gamma \varrho_nP_{ij}(\btheta^*)\}
    +\sum_{ij}Z_{ij}\{\log \tfrac{\gamma}{1-\gamma} P_{ij}(\hat\btheta_1)-\log \gamma \varrho_nP_{ij}(\btheta^*)\}\\
    &= \tfrac{\gamma}{1-\gamma}\sum_{ij}(Y_{ij}-\mathbb EY_{ij})
    +\sum_{ij}Z_{ij}(\log \tfrac{\gamma}{1-\gamma}\hat\lambda_{i\hat c_j}\hat\lambda_{j\hat c_i}-\log \gamma\varrho_n\lambda^*_{ic_j^*}\lambda^*_{jc_i^*})\\
    &=\underbrace{\tfrac{\gamma}{1-\gamma}\sum_{ij}(Y_{ij}-\mathbb EY_{ij})}_{X_n^{(1)}}
    +2\underbrace{\sum_{ij}Z_{ij}\{\log \hat \lambda_{i\hat c_j}-\log \varrho_n^{1/2} \lambda^*_{i\hat c_j}(\hat\bc)\}}_{X_n^{(2)}}
    +2\underbrace{\sum_{ij}Z_{ij}\{\log \lambda^*_{i\hat c_j}(\hat\bc)-\log \lambda^*_{ic_j^*}\}}_{X_n^{(3)}} 
\end{align*}
where we used the fact that \ref{a:edge} is met, and for simplicity, we absorb the effect of $(1-\gamma)^{-1/2}$ into $\hat\lambda_{ik}$. Similar to the argument above, we have
\begin{align}\notag
    \mathbb P(X_n\leq -\epsilon\varrho_n n^2)
    &=\mathbb P(X_n^{(1)}+X_n^{(2)}+X_n^{(3)}\leq -\epsilon\varrho_n  n^2)\\ \label{eq:lemma3}
    &\leq \mathbb P(X_n^{(1)}\leq -\epsilon_1\varrho_n n^2) + \mathbb P(X_n^{(2)}\leq -\epsilon_2\varrho_n n^2) + \mathbb P(X_n^{(3)}\leq -(\epsilon-\epsilon_1-\epsilon_2)\varrho_n n^2).
\end{align}
We will show that each of these three probabilities goes to 0. First, by Lemma \ref{lemma:1}, for any $\epsilon_1>0$,
$$
    \mathbb P(X_n^{(1)}\leq -\epsilon_1\varrho_n n^2)\to 0.
$$
We next note that there could be an identifiability problem between $\hat\bc$ and $\bc^*$, i.e., community `1' in $\hat\bc$ could be different from community `1' in $\bc^*$. For notational simplicity, we assume that $\hat\bc$ and $\bc^*$ are ``aligned'' such that community labels correspond to the same blocks. Now, we separate the probability of interest into two cases, where $X_n^{(2)}$ is finite versus infinite:
$$
    \mathbb P(X_n^{(2)}\leq -\epsilon_2 n^2)
    =\mathbb P(X_n^{(2)}\leq -\epsilon_2 n^2| X_n^{(2)}>-\infty) \mathbb P(X_n^{(2)}>-\infty) + \mathbb P(X_n^{(2)}=-\infty).
$$
We break the probability up in this way because there is a small, but non-zero probability that $X_n^{(2)}=-\infty$. To see this, note that
$$
    X_n^{(2)}
    =\sum_{ij}Z_{ij}\{\log \hat \lambda_{i\hat c_j}-\log \varrho_n^{1/2} \lambda^*_{i\hat c_j}(\hat\bc)\}
    =\sum_{i=1}^n\sum_{k=1}^K M_{ik}\{\log \hat \lambda_{ik}-\log \varrho_n^{1/2} \lambda^*_{ik}(\hat\bc)\}
$$
where
$$
    M_{ik}
    =\sum_{j=1}^nZ_{ij}\mathds{1}(\hat c_j=k),
$$
i.e., the number of edges in $\bZ$ between node $i$ and nodes assigned to group $k$. It is important to remember that $M_{ik}$ is a function of $\bZ$, while $\hat \lambda_{ik}(\hat\bc)$ and $\hat\bc$ are estimated from $\bY$. We want to see when this term will equal negative infinity. First, $\lambda^*_{ik}(\hat\bc)>0$, for all $(i,k)$ by \ref{a:pmin}. If $M_{ik}=\hat\lambda_{ik}(\hat\bc)=0$, we can simply set this term in the sum to zero. The issue arises when $\hat \lambda_{ik}(\hat\bc)=0$ but $M_{ik}>0$. Then this term will be undefined, e.g., equal negative infinity. Thus, we need to compute
\begin{align*}
    \mathbb P\{\hat \lambda_{ik}(\hat\bc)=0, M_{ik}>0\}
    &=\mathbb P\{M_{ik}>0|\hat \lambda_{ik}(\hat\bc)=0)\mathbb P(\hat \lambda_{ik}(\hat\bc)=0\}\\
    &=\mathbb P\left\{\sum_{j=1}^nZ_{ij}\mathds{1}(\hat c_j=k)>0\Bigg|\sum_{j=1}^nY_{ij}\mathds{1}(\hat c_j=k)=0\right\}\\
    &\times \mathbb P\left\{\sum_{j=1}^nY_{ij}\mathds{1}(\hat c_j=k)=0\right\}
\end{align*}
Using the joint (conditional) distributions of $\bY$ and $\bZ$ found in Section \ref{supp:joint}, we can show
\begin{align*}
    &\mathbb P\left\{\sum_{j=1}^nZ_{ij}\mathds{1}(\hat c_j=k)>0\Bigg|\sum_{j=1}^nY_{ij}\mathds{1}(\hat c_j=k)=0\right\}\\  
    &=1-\mathbb P\left\{\sum_{j=1}^nZ_{ij}\mathds{1}(\hat c_j=k)=0\Bigg|\sum_{j=1}^nY_{ij}\mathds{1}(\hat c_j=k)=0\right\}\\
    &=1-\prod_{j=1}^n\left\{1-\frac{\gamma \lambda^*_{i\hat c_j}\lambda^*_{j\hat c_i}}{1-(1-\gamma)\lambda^*_{i\hat c_j}\lambda^*_{j\hat c_i}}\right\}^{\mathds{1}(\hat c_j=k)}\\
    &\leq 1.
\end{align*}
Additionally,
$$
    \mathbb P\left\{\sum_{j=1}^nY_{ij}\mathds{1}(\hat c_j=k)=0\right\}
    =\prod_{j=1}^n\{1-(1-\gamma) \lambda^*_{i\hat c_j}\lambda^*_{j\hat c_i}\}^{\mathds{1}(\hat c_j=k)}\}
    \leq (1- \delta)^{\hat n_k}
$$
where $\hat n_{k}=\sum_{j=1}^n \mathds{1}(\hat c_j=k)$ is the number of nodes assigned to community $k\in\{1,\dots,K\}$, $\delta=\min_{i,j,k,\ell}\{(1-\gamma)\lambda_{ik}^*\lambda_{j\ell}^*\}$ and $\delta>0$ by \ref{a:pmin}.
Thus,
$$
    \mathbb P\{\hat \lambda_{ik}(\hat\bc)=0, M_{ik}>0\}
    \leq (1- \delta)^{\hat n_{k}}
$$
such that
\begin{align*}
    \mathbb P(X_n^{(2)}=-\infty)
    &=\mathbb P\left(\bigcup_{i=1}^n\bigcup_{k=1}^K\{\hat \lambda_{ik}(\hat\bc)=0, M_{ik}>0\}\right)\\
    &\leq \sum_{i=1}^n\sum_{k=1}^K (1- \delta)^{\hat n_{k}}\\
    &\sim n(1-\delta)^{n}\\
    &\to 0
\end{align*}
for any $\delta>0$ as $n\to\infty$ where $\hat n_{k}=\mathcal O(n)$ by \ref{a:K}. Finally, while this probability is small and asymptotically goes to 0, it is non-zero for any $n$ which is why $\mathbb E\log E_n=-\infty$.\\

\noindent
Returning to the probability that we are currently interested in, we now have
$$
    \mathbb P(X_n^{(2)}\leq -\epsilon_2n^2)
    \leq \mathbb P(|X_n^{(2)}|\geq \epsilon_2 n^2\mid |X_n^{(2)}|<\infty)+ o(1).
$$
We showed above that
$$
    X_n^{(2)}
    =\sum_{i=1}^n\sum_{k=1}^K M_{ik}\{\log \hat \lambda_{ik}-\log \varrho_n^{1/2} \lambda^*_{ik}(\hat\bc)\}
$$
so by the Cauchy-Schwarz inequality,
$$
    (X_n^{(2)})^2
    \leq \left(\sum_{i=1}^n\sum_{k=1}^KM_{ik}^2\right)\left(\sum_{i=1}^n\sum_{k=1}^K\{\log \hat \lambda_{ik}-\log \varrho_n^{1/2} \lambda^*_{ik}(\hat\bc)\}^2\right).
$$
Recall that
$$
    M_{ik}=\sum_{i=1}^n\sum_{k=1}^K Z_{ij}\mathds{1}(\hat c_j=k).
$$
Since $\sum_{i=1}^n \mathds{1}(\hat c_j=k)=\mathcal O(n)$ for all $k\in\{1,\dots,K\}$ by \ref{a:K}, this implies
$$
    M_{ik}
    =\sum_{i=1}^n\sum_{k=1}^K Z_{ij}\mathds{1}(\hat c_j=k)
    \sim \sum_{i=1}^n Z_{ij}
    =D_i
$$
where $D_i$ is the degree of node $i$. In other words, $M_{ik}=\mathcal O_p(D_i)$ such that $\sum_{i=1}^n\sum_{k=1}^KM_{ik}^2\sim\sum_{i=1}^n D_i^2$. \cite{de1998upper} shows that
$$
    \sum_{i=1}^nD_i^2
    \leq m\left(\frac{2m}{n-1}+n+2\right)
$$
where $m$ is the total number of edges in the network, i.e., $m=\sum_{ij}Z_{ij}$. Clearly, $m=\mathcal O_p(\varrho_nn^2)$ such that
$$
    \sum_{i=1}^nD_i^2
    \leq m\left(\frac{2m}{n-1}+n+2\right)
    =\mathcal O_p(\varrho_n^2n^3)
$$
so
$$
    \sum_{i=1}^n\sum_{k=1}^KM_{ik}^2
    =\mathcal O_p(\varrho_n^2n^3).
$$
Next, we know that $\varrho_n^{-1/2}\hat\lambda_{ik}(\hat\bc)=\mathcal O_p(1)$ and is strictly positive since we conditioned on the fact that $|X_n^{(2)}|<\infty$ which implies $\hat\lambda_{ik}>0$. We also know that $\lambda^*_{ik}(\hat\bc)>0$ by \ref{a:pmin}. Therefore, $\log(x)$ is Lipschitz continuous on $[\delta,\infty)$ where $\delta=\min_{ik}\{\varrho_n^{-1/2}\hat\lambda_{ik}(\hat\bc), \lambda^*_{ik}(\hat\bc)\}>0$ such that
$$
    \sum_{i=1}^n\sum_{k=1}^K\{\log \hat \lambda_{ik}-\log \varrho_n^{1/2} \lambda^*_{ik}(\hat\bc)\}^2
    \lesssim \sum_{i=1}^n\sum_{k=1}^K\{ \hat \lambda_{ik}-  \varrho_n^{1/2}\lambda^*_{ik}(\hat\bc)\}^2
    =o_p(n)
$$
where the rate comes from the proof of Theorem \ref{thm:er} in \cite{senguptapabm} (with \ref{a:sparse}, \ref{a:K}, \ref{a:ident} and \ref{a:set}). Combining these two results implies that $X_n^{(2)}=o_p(\varrho_nn^2)$  such that for any $\epsilon_2>0$,
$$
    \mathbb P(X_n^{(2)}\leq-\epsilon_2\varrho_n n^2)
    \leq \mathbb P(|X_n^{(2)}|\geq \epsilon_2\varrho_n n^2\mid |X_n^{(2)}|<\infty) + \mathbb P(X_n^{(2)}=\infty)
    \to 0.
$$
For $X_n^{(3)}$, recall that
$$
    X_n^{(3)}
    =\sum_{ij}Z_{ij}\{\log\lambda^*_{i\hat c_j}(\hat\bc)-\log \lambda^*_{ic_j^*}\}.
$$
Notice that $\log\lambda^*_{i\hat c_j}(\hat\bc)-\log \lambda^*_{ic_j^*}=\mathcal O(1)$ and is non-zero if and only if $\hat c_j\neq c_j^*$. Thus,
$$
    X_n^{(3)}
    \sim \sum_{ij}Z_{ij}\mathds{1}(\hat c_j\neq c_j^*)
    =\sum_{i=1}^n D_i\mathds{1}(\hat c_i\neq c_i^*)
$$
where $D_i$ is again the degree of node $i$. Invoking Cauchy-Schwarz once more, 
$$
    (X_n^{(3)})^2
    \leq \left(\sum_{i=1}^nD_i^2\right)\left(\sum_{i=1}^n\mathds{1}(\hat c_i\neq c_i^*)\right).
$$
We already showed that $\sum_{i=1}^nD_i^2=\mathcal O_p(\varrho_n^2n^3)$, and, by Theorem 1 in \cite{senguptapabm} (with \ref{a:sparse}, \ref{a:K} and \ref{a:ident}), $\sum_{i=1}^n\mathds{1}(\hat c_i\neq c_i^*)=o_p(n)$, which implies $X_n^{(3)}=o_p(\varrho_nn^2)$.
Thus, for any $\epsilon_3>0$,
$$
    \mathbb P(X_n^{(3)}\leq -\epsilon_3\varrho_nn^2)
    \to 0
$$
as $n\to\infty$. $\square$\\

\noindent
{\bf Lemma \ref{lemma:4}.} {\it Let}
    $$
        f(\bu, \bv)
        =1-\frac{\sum_{i=1}^n (u_iv_i)^{1/2}}{\sum_{i=1}^n u_i}.
    $$
    {\it where $\sum_{i=1}^n u_i=\sum_{i=1}^n v_i$. Then $f(\bu,\bv)\geq0$ for all $\bu,\bv\geq0$ with equality if and only if $\bu$ and $\bv$ are linearly dependent.}\\

    \noindent
    {\it Proof.} By Cauchy-Schwarz,
    $$
        \sum_{i=1}^n (u_iv_i)^{1/2}
        \leq\left\{\left(\sum_{i=1}^n u_i\right)\left(\sum_{i=1}^n v_i\right)\right\}^{1/2}
        =\sum_{i=1}^n u_i
    $$
    since $\sum_{i=1}^n u_i=\sum_{i=1}^n v_i$ and with equality if and only if $\bu$ and $\bv$ are linearly dependent. The result follows immediately by dividing both sides of the inequality by $\sum_{i=1}^n u_i$. $\square$\\

\noindent
{\bf Lemma \ref{lemma:5}.} {\it Let $\bA\sim\varrho_n\bP(\btheta_n^*)$ where $\bP(\btheta^*)$ defines a popularity-adjusted block model (PABM), i.e., $P_{ij}(\btheta^*)=\lambda^*_{ic_j^*}\lambda_{jc_i^*}^*$. Let}
    $$
        X_n
        =\log \frac{\prod_{ij}\{\varrho_nP_{ij}(\btheta^*)\}^{A_{ij}}}{\prod_{ij}(\varrho_n\bar p)^{A_{ij}}}.
    $$
    {\it where $ \bar p =n^{-2}\sum_{ij}P_{ij}(\btheta^*)$, i.e., the average edge probability. Then there exists some $\epsilon>0$ such that}
$$
    \lim_{n\to\infty}\mathbb P(X_n\leq \epsilon\varrho_n n^2)=0.
$$

\noindent
{\it Proof.} We start by writing out $X_n$ as
$$
    X_n=\sum_{ij}A_{ij}\log\left(\frac{\varrho_nP_{ij}(\btheta^*)}{\varrho_n\bar p}\right)
$$
Recall that $A_{ij}\stackrel{\text{ind.}}{\sim}\mathsf{Bernoulli}(\varrho_nP_{ij}(\btheta^*))$.  If $Y\sim\mathsf{Bernoulli}(\theta)$, then the moment generating function, $M_Y(t)$, is
$$
    M_Y(t)
    =\mathbb E e^{Yt}
    =1-\theta+\theta e^t.
$$
Moreover, the moment generating function of $\alpha Y$ where $\alpha$ is a constant is
$$
    M_{\alpha Y}(t)
    =\mathbb{E} e^{(\alpha Y)t}
    =M_{Y}(\alpha t)
    =1-\theta+\theta e^{\alpha t}.
$$
Additionally, if $Y_1,\dots,Y_m$ are independent, then the moment generating function of their sum is simply the product, i.e.,
$$
    M_{\sum_{i=1}^M Y_i}(t)=\mathbb Ee^{t\sum_{i=1}^mY_i}=\prod_{i=1}^m\mathbb Ee^{tY_i}=\prod_{i=1}^m M_{Y_i}(t).
$$
Putting this together, the moment generating function of $X_n$ is
\begin{align*}
    M_{X_n}(t)
    &=\prod_{ij}\left\{1- \varrho_nP_{ij}(\btheta^*) +  \varrho_nP_{ij}(\btheta^*)e^{t\log\left(\frac{\varrho_nP_{ij}(\btheta^*)}{\varrho_n\bar p}\right)}\right\}\\
    &=\prod_{ij}\left[1+  \varrho_nP_{ij}(\btheta^*)\left\{\left(\frac{\varrho_nP_{ij}(\btheta^*)}{\varrho_n\bar p}\right)^t-1\right\}\right]\\
    &\leq \exp\left(\sum_{ij}\varrho_nP_{ij}(\btheta^*)\left\{\left(\frac{\varrho_nP_{ij}(\btheta^*)}{\varrho_n\bar p}\right)^t-1\right\}\right)
\end{align*}
where the last line is because $1+x\leq \exp(x)$ for all $x$. Then using a Chernoff bound argument, we have
\begin{align*}
    \mathbb P(X_n\leq \epsilon\varrho_n n^2)
    &\leq \inf_{t\leq 0}\exp(-t\epsilon\varrho_n n^2)M_{X_n}(t)\\
    &\leq \inf_{t\leq 0}\exp(-t\epsilon\varrho_n n^2)
    \exp\left(-\sum_{ij} \varrho_nP_{ij}(\btheta^*)\left\{1-\left(\frac{\varrho_nP_{ij}(\btheta^*)}{\varrho_n\bar p}\right)^t\right\}\right).
\end{align*}
Setting $t=-1/2$,
\begin{align*}
    \mathbb P(X_n\leq \epsilon\varrho_n n^2)
    &\leq \exp\left(\tfrac12\epsilon\varrho_n n^2 -  \sum_{ij}\left[ \varrho_nP_{ij}(\btheta^*)\left\{1-\left(\frac{\varrho_n\bar p}{\varrho_nP_{ij}(\btheta^*)}\right)^{1/2}\right\}\right]
    \right)\\
    &= \exp\left[\tfrac12\epsilon\varrho_n n^2 -  \sum_{ij}\left\{\varrho_nP_{ij}(\btheta^*)-\{\varrho_n^2\bar pP_{ij}(\btheta^*)\}^{1/2}\right\}
    \right]\\
    &= \exp\left(\bar p\varrho_nn^2\left[\frac{\epsilon}{2\bar p} -  \frac{1}{\bar pn^2}\sum_{ij}P_{ij}(\btheta^*)-\{\bar pP_{ij}(\btheta^*)\}^{1/2}\right]
    \right).
\end{align*}
Clearly,
$$
    \sum_{ij}P_{ij}(\btheta^*)=\sum_{ij}\bar p=\bar pn^2
$$
so by Lemma \ref{lemma:4},
$$
    \frac{1}{\bar pn^2}\sum_{ij}P_{ij}(\btheta^*)-\{\bar pP_{ij}(\btheta^*)\}^{1/2}
    =1 - \frac1{\bar p n^2}\sum_{ij} \{\bar pP_{ij}(\btheta^*)\}^{1/2}
    =C
$$
for some $C>0$ where the inequality is strict as the $P_{ij}(\btheta^*)$'s are linearly independent of (not all equal to) $\bar p$ by \ref{a:ident}. We can set $\epsilon>0$ small enough such that
$$
    \frac{\epsilon}{2\bar p}-C <0\implies \epsilon< 2 \bar pC
$$
where we recall that $\bar p$ does not depend on $n$ by \ref{a:ident}. Then we have
$$
    \mathbb P(X_n\leq\epsilon\varrho_n n^2)
    \lesssim \exp(- \varrho_nn^2)
    \to 0
$$
as $n\to\infty$ by \ref{a:sparse}. $\square$

\clearpage

\section{Proof of Proposition \ref{prop:ui}}\label{supp:ui}
{\it Let $X_1,\dots,X_{2n}\stackrel{\text{iid.}}{\sim}\mathsf{Bernoulli}(\theta)$ and $E_n$ be the split-likelihood Universal Inference statistic from \cite{wasserman2020universal}. Assume that we are testing}
$$
    H_0:\theta=\theta_0\text{ vs. }\theta\neq \theta_0.
$$
{\it where $0<\theta_0<1$. Then $\mathbb E(\log E_n)=-\infty$.}\\

\noindent
{\it Proof.} From \cite{wasserman2020universal}, we split $X_1,\dots,X_{2n}$ into two parts $D_0$ and $D_1$. For simplicity let $D_0=\{X_1,\dots,X_n\}$ and $D_1=\{X_{n+1},\dots,X_{2n}\}$. Then under the alternative hypothesis, the maximum likelihood estimate is
$$
    \hat\theta_1=\arg\max_{\theta}\mathcal L(\theta|D_1)=\frac1n\sum_{i=n+1}^{2n}X_i.
$$
Thus
$$
    \log E_n
    =\log\frac{\mathcal L(\hat\theta_1|D_0)}{\mathcal L(\theta_0|D_0)}
    =X\log\frac{\hat\theta_1}{\theta_0} + (n-X)\log\frac{1-\hat\theta_1}{1-\theta_0}
$$
where $X=\sum_{i=1}^n X_i$. Clearly, $\log E_n=-\infty$ if $\hat\theta_1=0$ and $X>0$. The same problem arises if $\hat\theta_1=1$ and $n-X>0$. Thus,
\begin{align*}
    \mathbb P(\log E_n=-\infty)
    &=\mathbb P(\hat\theta_1=0,X>0) + \mathbb P(\hat\theta_1=1,X<n)\\
    &=\mathbb P(\hat\theta_1=0)\{1-\mathbb P(X=0)\} + \mathbb P(\hat\theta_1=1)\{1-\mathbb P(X=n)\}\\
    &=(1-\theta)^n\{1-(1-\theta)^n\} + \theta^n(1-\theta^n)\\
    &>0.
\end{align*}
Since $\log E_n$ has a non-zero probability of being $-\infty$, its expectation necessarily is $-\infty$. $\square$

\clearpage

\section{Proof of Theorem \ref{thm:cl}}\label{supp:cl}
{\it Let $\bA_n\sim \varrho_n\bP(\btheta^*)$ under \ref{a:sparse}. Assume that we are testing 
$$
    H_0:\btheta^*\in\Theta_{0}\text{ vs. }\btheta^*\in\Theta_1
$$
where $\Theta_{0}$ is the parameter space corresponding to the Chung-Lu model, i.e., $P_{ij}(\btheta^*)=\psi_i\psi_j$, and $\Theta_1$ corresponds to the popularity-adjusted Block Model, i.e., $P_{ij}(\btheta^*)=\lambda_{ic_j^*}^*\lambda_{jc_i^*}^*$, under \ref{a:K}--\ref{a:cl}. If $\btheta^*\in\Theta_1$ such that $\bA_n$ is generated from the alternative model, then there exists some $\epsilon>0$ such that, as $n\to\infty$, }
$$
    \mathbb P(\log E_n>\epsilon\varrho_nn^2)\to 1.
$$

\subsection*{Main proof}
We follow a similar strategy as that of Theorem \ref{thm:er}, separating $\log E_n$ into the following terms:
\begin{align*}
    B_n&=\log\frac{\prod_{ij} e^{-\gamma \varrho_n P_{ij}(\btheta^*)}}
    {\prod_{ij} e^{-\hat \psi_i\hat\psi_j}}\\
    C_n&=\log \frac{\prod_{ij}\gamma\varrho_nP_{ij}(\btheta^*)^{Z_{ij}}}{\prod_{ij}(\gamma\varrho_n\tilde\psi^*_i\tilde\psi^*_j)^{Z_{ij}}}\\
    D_n&=\log \frac{\prod_{ij}(\gamma\varrho_n\tilde\psi^*_i\tilde\psi^*_j)^{Z_{ij}}}{\prod_{ij}(\hat \psi_i\hat\psi_j)^{Z_{ij}}}\\
    L_n&=\log\frac{\mathcal L(\tfrac{\gamma}{1-\gamma}\bP(\hat\btheta_1)|\bZ)}{\mathcal L(\gamma\varrho_n\bP(\btheta^*)|\bZ)}
\end{align*}
where 
$$
    \hat\psi_i
    =\frac{\sum_{j=1}^n Z_{ij}}{\sqrt{\sum_{ab}Z_{ab}}}
$$
and
$$
    \tilde\psi^*_i=\frac{\sum_{j=1}^nP_{ij}(\btheta^*)}{\sqrt{\sum_{ab}P_{ab}(\btheta^*)}}
$$
for $i\in\{1,\dots,n\}$. Then the result follows the same logic as the proof of Theorem \ref{thm:er} by invoking Lemmas \ref{lemma:1}, \ref{lemma:3}, \ref{lemma:6} and \ref{lemma:7}, where from Lemma \ref{lemma:7} we know that we can set $\epsilon$ to be small enough such that
$$
    \epsilon+\epsilon_1+\epsilon_2+\epsilon_3
    <2\gamma \bar pC
$$
where
$$
    C=1-\frac{1}{\bar pn^2}\sum_{ij}\{P_{ij}(\btheta^*)\tilde\psi_i^*\tilde\psi_j^*\}^{1/2}
    >0. \ \ \ \square
$$

\subsection*{Proof of lemmas}
\begin{lemma}\label{lemma:6}
   {\it Let $\bA\sim\varrho_n\bP(\btheta^*)$ where $\bP(\btheta^*)$ defines a popularity-adjusted block model (PABM), i.e., $P_{ij}(\btheta^*)=\lambda^*_{ic_j^*}\lambda_{jc_i^*}^*$. Furthermore, define $\hat\psi_i=\sum_{j=1}^n A_{ij}/\sqrt{\sum_{ab}A_{ab}}$ and\\ $\tilde\psi^*_i=\sum_{j=1}^nP_{ij}(\btheta^*)/\sqrt{\sum_{ab}P_{ab}(\btheta^*)}$. If}
$$
    X_n=\log \frac{\prod_{ij}(\varrho_n\tilde\psi_i^*\tilde\psi_j^*)^{A_{ij}}}{\prod_{ij}(\hat \psi_i\hat\psi_j)^{A_{ij}}},
$$
{\it then for any $\epsilon>0$,}
$$
    \lim_{n\to\infty}\mathbb P(X_n\leq -\epsilon \varrho_nn^2)=0.
$$ 
\end{lemma}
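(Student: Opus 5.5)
The plan mirrors the proof of Lemma \ref{lemma:2}: rewrite $X_n$ as a sum over nodes of degree-weighted log ratios and show it is $o_p(\varrho_n n^2)$, which in particular gives $\mathbb P(X_n\leq -\epsilon\varrho_n n^2)\to0$. Let $D_i=\sum_j A_{ij}$, $m=\sum_{ab}A_{ab}$, $d_i=\sum_j P_{ij}(\btheta^*)$ and $\bar d=\sum_{ab}P_{ab}(\btheta^*)=\bar p n^2$. Then $\hat\psi_i=D_i/\sqrt m$ and $\tilde\psi_i^*=d_i/\sqrt{\bar d}$. Since $\sum_{ij}A_{ij}\log(a_ia_j)=2\sum_i D_i\log a_i$, we get
\begin{equation*}
X_n=2\sum_i D_i\log\frac{\varrho_n^{1/2}\tilde\psi_i^*}{\hat\psi_i}
=2\sum_i D_i\log\frac{\varrho_n d_i}{D_i}+ m\log\frac{m}{\varrho_n\bar d}.
\end{equation*}

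\textbf{Second term.} This is exactly $-1$ times the quantity of Lemma \ref{lemma:2}, namely $n^2\hat p\log(\hat p/(\varrho_n\bar p))$. By the same continuous mapping argument it is $o_p(\varrho_n n^2)$: Lemma \ref{lemma:1} gives $m/(\varrho_n\bar d)\to1$ in probability. In fact this term is $\geq0$ up to $o_p$, which is the favorable direction.

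\textbf{First term.} Write $T_n=2\sum_i D_i\log(\varrho_n d_i/D_i)$. By Jensen's inequality, $T_n\leq 2m\log(\varrho_n\bar d/m)$, so $T_n$ can only be negative. We therefore need a lower bound. Use $\log x\geq 1-1/x$ with $x=\varrho_n d_i/D_i$ to obtain
\begin{equation*}
T_n\geq 2\sum_i \frac{D_i(\varrho_n d_i-D_i)}{\varrho_n d_i}
=2\sum_i(\varrho_n d_i-D_i)-2\sum_i\frac{(D_i-\varrho_n d_i)^2}{\varrho_n d_i}.
\end{equation*}
The first sum is $-(m-\mathbb E m)$, which is $\mathcal O_p((\varrho_n n^2)^{1/2})=o_p(\varrho_n n^2)$ by Lemma \ref{lemma:1}.

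For the second sum, note that $d_i\geq n\min_{ij}P_{ij}(\btheta^*)$, which is of order $n$ by \ref{a:pmin}. Also $\mathbb E(D_i-\varrho_n d_i)^2=\mathrm{Var}(D_i)\leq\varrho_n d_i$. Its expectation is therefore at most $n$, so by Markov's inequality it is $\mathcal O_p(n)$. Since $n=o(\varrho_n n^2)$ because $n\varrho_n\to\infty$ under \ref{a:sparse}, this is $o_p(\varrho_n n^2)$. Nodes with $D_i=0$ contribute $0$ to $X_n$ (convention $0\log0=0$), so no $-\infty$ issue arises, and the inequality $\log x\ge 1-1/x$ remains valid termwise.

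\textbf{Main obstacle.} The key step is controlling the degree-wise log terms. The quadratic lower bound handles it cleanly only because \ref{a:pmin} keeps $d_i$ of order $n$; with that assumption the argument goes through. Combining the two terms gives $X_n\geq -o_p(\varrho_n n^2)$, and hence the claim.
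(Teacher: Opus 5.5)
Your proof is correct, and it takes a genuinely different route from the paper's.

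\textbf{The paper's route.} It first replaces $\sum_{ab}A_{ab}$ in $\hat\psi_i$ by its mean $\bar p\varrho_n n^2$, a ``shortcut'' it justifies only heuristically. It then writes $X_n=2\sum_i D_i\{\log\varrho_n\bar P_i(\btheta^*)-\log\bar D_i\}$ with $\bar D_i=D_i/n$, and applies Cauchy--Schwarz. The factor $\sum_i D_i^2=\mathcal O_p(\varrho_n^2n^3)$ comes from de Caen's inequality. The factor $\sum_i\{\log\varrho_n\bar P_i(\btheta^*)-\log\bar D_i\}^2$ is controlled by a Lipschitz bound for $\log$ on $[\delta,\infty)$ plus Hoeffding and a union bound. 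The aim is the two-sided statement $X_n=o_p(\varrho_n n^2)$.

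\textbf{Your route.} You keep the random normalization exactly, peeling it off as $m\log\{m/(\varrho_n\bar d)\}$. You then handle the degree part with the one-sided inequality $\log x\ge 1-1/x$. This reduces it to $m-\mathbb E m$ and the Pearson-type sum $\sum_i(D_i-\varrho_n d_i)^2/(\varrho_n d_i)$, whose mean is at most $n$.

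\textbf{What this buys.} You need only $n\varrho_n\to\infty$: no shortcut, no lower bound on $\min_i D_i$, and no Lipschitz constant. In the paper's route that constant is of order $1/\varrho_n$, since both arguments of the logarithm are of order $\varrho_n$. To close that argument in the sparse regime, one must sharpen the Hoeffding-based $o_p(n)$ to the variance-based $\sum_i(\bar D_i-\mathbb E\bar D_i)^2=\mathcal O_p(\varrho_n)$, and also control $\min_i D_i$. Your normalization by $\varrho_n d_i$ builds in the right scaling automatically. You also recover the upper side from Jensen, though only the lower bound is needed for the lemma.

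\textbf{Two small remarks.} First, the bound $\mathbb E\{(D_i-\varrho_n d_i)^2/(\varrho_n d_i)\}\le 1$ requires only $d_i>0$. So \ref{a:pmin} plays a smaller role than your last paragraph suggests; $d_i$ need not be of order $n$. Second, you can bypass Lemma \ref{lemma:2} entirely. The same inequality gives $m\log\{m/(\varrho_n\bar d)\}\ge m-\varrho_n\bar d$, and hence directly
\[
X_n\ge-(m-\mathbb E m)-2\sum_i\frac{(D_i-\varrho_n d_i)^2}{\varrho_n d_i}.
\]
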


{\it Proof.}
First, we can replace $\sum_{ij}A_{ij}$ in the denominator of $\hat\psi_i$ with $\bar p\varrho_nn^2$ where
$$
    \bar p=\frac1{n^2}\sum_{ij}P_{ij}(\btheta^*)
$$
as a ``shortcut'' since this term converges much faster than other appropriately scaled random variables. This assumption has been previously made in the literature \citep{yanchenko2025statistical}. Then we can write $X_n$ as
$$
    X_n=\sum_{ij}A_{ij}\log\frac{\varrho_n\tilde\psi_i^*\tilde\psi_j^*}{\hat\psi_i\hat\psi_j}
    =2\sum_{i=1}^n D_i\{\log\varrho_n  \bar P_{i}(\btheta^*)-\log \bar D_i\}.
$$
where
$
    D_i=\sum_{j=1}^nA_{ij}
$
is the block modelof node $i$ for $i\in\{1,\dots,n\}$, $\bar D_i=D_i/n$, and
$
      \bar P_{i}(\btheta^*)=\frac1n\sum_{j=1}^nP_{ij}(\btheta^*)
$
. By the Cauchy-Schwarz Inequality, 
$$
    X_n^2\leq \left(\sum_{i=1}^n D_i^2\right)\left(\sum_{i=1}^n\{\log\varrho_n  \bar P_{i}(\btheta^*)-\log  \bar D_i\}^2\right).
$$
From the proof of Lemma \ref{lemma:3}, we know that
$
    \sum_{i=1}^n D_i^2=\mathcal O_p(\varrho_n^2n^3).
$
By \ref{a:pmin}, we have that $ \bar P_{i}(\btheta^*)>0$ and the probability that $ \bar D_i=0$ is vanishingly small as $n$ increases so by a similar argument to that of Lemma \ref{lemma:3}, we can set $\delta=\min_i\{ \bar D_i,  \bar P_{i}(\btheta^*)\}>0$ such that $\log(x)$ is Lipschitz for $x\in[\delta,\infty)$. Thus, 
$$
    \sum_{i=1}^n\{\log\varrho_n  \bar P_{i}(\btheta^*)-\log \bar D_i\}^2
    \lesssim \sum_{i=1}^n\{\varrho_n  \bar P_{i}(\btheta^*)- \bar D_i\}^2
    = \sum_{i=1}^n(\bar D_i-\mathbb E\bar D_i)^2.
$$
By Hoeffding's Inequality, for all $i$ and any $\varepsilon>0$,
$$
    \mathbb P\{(\bar D_i-\mathbb E\bar D_i)^2\geq \varepsilon\}
    =\mathbb P(|\bar D_i-\mathbb E\bar D_i|\geq \varepsilon^{1/2})
    =\mathbb P(|D_i-\mathbb ED_i|\geq n\varepsilon^{1/2})
    \lesssim \exp(-n\varepsilon).
$$
Taking a union bound over all $n$ nodes, we have
$$
    \mathbb P\left[\bigcup_i\{(\bar D_i-\mathbb E\bar D_i)^2\geq \varepsilon\}\right]
    \lesssim n\exp\left(-n\varepsilon\right)
    \to 0
$$
as $n\to\infty$. Therefore, $(D_i-\mathbb ED_i)^2=o_p(1)$ for all $i$ such that $\sum_{i=1}^n(\bar D_i-\mathbb E\bar D_i)^2=o_p(n)$ and the result we hoped to prove follows immediately by multiplying the rates. $\square$\\

\noindent
\begin{lemma}\label{lemma:7}
    {\it Let $\bA\sim\varrho_n\bP(\btheta_n^*)$ where $P_{ij}(\btheta^*)=\lambda_{ic_j^*}^*\lambda_{jc_i^*}^*$, i.e., generated from a PABM, under \ref{a:cl}. Let}
    $$
        X_n
        =\log \frac{\prod_{ij}\{\varrho_nP_{ij}(\btheta^*)\}^{A_{ij}}}{\prod_{ij}(\varrho_n\tilde\psi_i^*\tilde\psi_j^*)^{A_{ij}}}.
    $$
    {\it where $\tilde\psi^*_i=\sum_{j=1}^n P_{ij}(\btheta^*)/\sqrt{\sum_{ab}P_{ab}(\btheta^*)}$. Then there exists some $\epsilon>0$ such that}
$$
    \lim_{n\to\infty}\mathbb P(X_n\leq \epsilon\varrho_n n^2)=0.
$$
\end{lemma}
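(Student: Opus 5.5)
We will follow the proof of Lemma \ref{lemma:5} essentially verbatim, replacing the constant $\bar p$ by the Chung--Lu profile $Q_{ij}:=\tilde\psi^*_i\tilde\psi^*_j$. Write
$$
    X_n=\sum_{ij}A_{ij}\log\frac{P_{ij}(\btheta^*)}{Q_{ij}},
$$
a sum of independent, bounded-coefficient Bernoulli terms. The factor $\varrho_n$ cancels inside the logarithm. The coefficients are finite because $P_{ij}(\btheta^*)>0$ by \ref{a:pmin}, and hence $Q_{ij}>0$.

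\textbf{Chernoff bound.} The moment generating function computation from Lemma \ref{lemma:5}, together with $1+x\le e^x$, gives for every $t\le 0$
$$
    \mathbb P(X_n\le\epsilon\varrho_nn^2)\le \exp\Big(-t\epsilon\varrho_nn^2-\sum_{ij}\varrho_nP_{ij}(\btheta^*)\big\{1-(P_{ij}(\btheta^*)/Q_{ij})^{t}\big\}\Big).
$$
Taking $t=-1/2$, the exponent becomes
$$
    \varrho_n\Big[\tfrac12\epsilon n^2-\sum_{ij}P_{ij}(\btheta^*)+\sum_{ij}\{P_{ij}(\btheta^*)Q_{ij}\}^{1/2}\Big].
$$

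\textbf{Applying Lemma \ref{lemma:4}.} The key observation is that
$$
    \sum_{ij}Q_{ij}=\frac{\big(\sum_i\sum_jP_{ij}(\btheta^*)\big)^2}{\sum_{ab}P_{ab}(\btheta^*)}=\sum_{ij}P_{ij}(\btheta^*)=\bar pn^2.
$$
So the hypothesis of Lemma \ref{lemma:4} (equal sums) holds with $\bu=(P_{ij}(\btheta^*))_{ij}$ and $\bv=(Q_{ij})_{ij}$. The lemma then gives
$$
    C:=1-\frac1{\bar pn^2}\sum_{ij}\{P_{ij}(\btheta^*)Q_{ij}\}^{1/2}\ge 0,
$$
with equality iff $\bu$ and $\bv$ are linearly dependent. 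Since $Q_{ij}=n^2\bar P_i\bar P_j/(\bar pn^2)\propto\bar P_i\bar P_j$, \ref{a:cl} rules out equality, so $C>0$.

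The exponent is then $\bar p\varrho_nn^2\{\epsilon/(2\bar p)-C\}$. Choosing $\epsilon<2\bar pC$ makes the bound of order $\exp(-c\varrho_nn^2)$, which tends to $0$ by \ref{a:sparse}. We use that $\bar p$ is $n$-free by \ref{a:pbar}.

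\textbf{Main obstacle.} The delicate step is making $C$ a constant bounded away from zero uniformly in $n$, rather than merely positive for each $n$. Lemma \ref{lemma:4} alone gives only $C_n>0$, which is the same gap silently present in Lemma \ref{lemma:5}. To close it, I would exploit the PABM structure: $P$ is determined by finitely many community-level patterns, so $C_n$ is a continuous functional of the empirical distribution of $(\lambda^*_{i\cdot},c^*_i)$. I would then interpret \ref{a:cl} as holding in the limit, in the same spirit as \ref{a:pbar}, which makes $\liminf_n C_n>0$. With that in hand, $\epsilon$ can be fixed once and for all, and the Chernoff argument above completes the proof.
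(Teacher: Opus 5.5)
Your proposal is correct and follows the paper's proof essentially step for step. You use the Chernoff bound from Lemma~\ref{lemma:5} at $t=-1/2$, the identity $\sum_{ij}\tilde\psi_i^*\tilde\psi_j^*=\sum_{ij}P_{ij}(\btheta^*)=\bar p n^2$ to invoke Lemma~\ref{lemma:4}, strictness via \ref{a:cl}, and then $\epsilon<2\bar pC$.

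You also flag that $C=C_n$ must be bounded away from zero uniformly in $n$. This is a real subtlety that the paper's proof passes over: it treats $C$ as an $n$-free constant, as it also does in Lemma~\ref{lemma:5}. Reading \ref{a:cl} as a limiting condition in the spirit of \ref{a:pbar}, as you propose, is the natural way to make that step rigorous.
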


\noindent
{\it Proof.} We start by writing out $X_n$ as
$$
    X_n=\sum_{ij}A_{ij}\log\left(\frac{\varrho_nP_{ij}(\btheta^*)}{\varrho_n\tilde\psi_i^*\tilde\psi_j^*}\right)
$$
Similar to the proof of Lemma \ref{lemma:5}, we use a Chernoff bound argument and find
$$
    \mathbb P(X_n\leq \epsilon\varrho_n n^2)
    \leq \exp\left(\bar p\varrho_nn^2\left[\frac{\epsilon}{2\bar p} -  \frac{1}{\bar pn^2}\sum_{ij}P_{ij}(\btheta^*)-\{\tilde\psi_i^*\tilde\psi_j^*P_{ij}(\btheta^*)\}^{1/2}\right]
    \right).
$$
Additionally,
$$
    \sum_{ij}\tilde\psi_i^*\tilde\psi_j^*
    =\frac1{\bar pn^2}\sum_{ijk\ell}P_{ik}(\btheta^*)P_{j\ell}(\btheta^*)
    =\frac1{\bar pn^2}\left(\sum_{ik} P_{ik}(\btheta^*)\right)\left(\sum_{j\ell} P_{j\ell}(\btheta^*)\right)
    =\bar pn^2=\sum_{ij}P_{ij}(\btheta^*).
$$
Thus, by Lemma \ref{lemma:4}
$$
    \frac{1}{\bar pn^2}\sum_{ij}P_{ij}(\btheta^*)-\{\tilde\psi_i^*\tilde\psi_j^*P_{ij}(\btheta^*)\}^{1/2}
    = C
$$
for some $C>0$ where the inequality is strict by \ref{a:cl}. We can set $\epsilon>0$ small enough such that
$$
    \frac{\epsilon}{2\bar p}-C <0\implies \epsilon< 2\bar p C.
$$
Then we have
$$
    \mathbb P(X_n\leq\epsilon\varrho_n n^2)
    \lesssim \exp(-\varrho_nn^2)
    \to 0
$$
as $n\to\infty$ by \ref{a:sparse}. $\square$

\clearpage

\section{Proof of Theorem \ref{thm:sbm}}\label{supp:sbm}
{\it Let $\bA_n\sim \varrho_n\bP(\btheta^*)$ under \ref{a:sparse}. Assume that we are testing 
$$
    H_0:\btheta^*\in\Theta_{0}\text{ vs. }\btheta^*\in\Theta_1
$$
where $\Theta_{0}$ is the parameter space corresponding to the stochastic block model with $K-1>1$ blocks, and $\Theta_1$ corresponds to the stochastic block model with $K$ blocks under \ref{a:K}, \ref{a:pmin} and \ref{a:pbar} where the block probability matrix is also of full rank. If $\btheta^*\in\Theta_1$ such that $\bA_n$ is generated from the alternative model, then there exists some $\epsilon>0$ such that, as $n\to\infty$, }
$$
    \mathbb P(\log E_n>\epsilon\varrho_nn^2)\to 1.
$$

\subsection*{Main proof}
We sketch a proof leaning heavily on the results from \cite{wang2017likelihood, hu2020corrected}. Specifically, define $\bB^*\in(0,1)^{K\times K}$ as the true block-probability matrix and $\bc^*\in\{1,\dots,K\}^n$ as the true block labels which generated the network under the alternative model (with $K$ blocks). Moreover, let $n_k$ be the number of nodes in community $k$, i.e., $n_k=\sum_{i=1}^n \mathds{1}(c_i^*=k)$ for $k\in\{1,\dots,K\}$. \cite{wang2017likelihood} define a ``merging operator'' that merges two blocks in the (correctly specified) block-probability matrix to construct an (underfit) block-probability matrix $\tilde\bB^*\in(0,1)^{K-1\times K-1}$. Without loss of generality, we can assume that we merge blocks $K-1$ and $K$ such that $\tilde\bB^*$ is defined as
\begin{align*}
    \tilde B^*_{k\ell}&=B^*_{k\ell}\text{ for }k,\ell\in\{1,\dots,K-2\}\\
    \tilde B_{k,K-1}^*&=\frac{n_{k}n_{K-1}B^*_{k,K-1}+n_kn_KB_{k,K}^*}{n_kn_{K-1} + n_k n_{K}}\text{ for }k\in\{1,\dots,K-2\}\\
    \tilde B^*_{K-1,K-1}&=\frac{n^2_{K-1}B_{K-1,K-1}^* + 2n_{K-1}n_KB^*_{K-1,K}+ n^2_KB_{K,K}^*}{n^2_{K-1}+2n_{K-1}n_K + n^2_K}.
\end{align*}
The merged labels $\tilde\bc^*$ are then clearly defined as $\tilde c_i^*=c_i^*$ if $c_i^*\in\{1,\dots,K-1\}$ and $\tilde c_i^*=K-1$ if $c_i^*=K$. Moreover, the number of nodes in block $k$ of the underfit model, $\tilde n_k$, is defined as $\tilde n_k=n_k$ for $k\in\{1,\dots,K-2\}$ and $\tilde n_{K-1}=n_{K-1}+n_K$. \\

\noindent
Now, \cite{wang2017likelihood} and \cite{hu2020corrected} show that
$$
    \sup_{\btheta\in\Theta_0} \mathcal L(\bP(\btheta)|\bZ)
    =\max_{\bc\in\{1,\dots,K-1\}^n}\sup_{\bB\in(0,1)^{K-1\times K-1}}\mathcal L(\bP(\bc,\bB)|\bZ)
    =\mathcal L(\bP(\tilde\bc^*,\hat \bB)|\bZ)
$$
where $\tilde\bc^*$ are the merged labels defined above and $\hat\bB$ is defined as
\begin{align*}
    \hat B_{k\ell}
    &=\frac{S_{k\ell}}{ n_k n_\ell}\text{ for } k,\ell\in\{1,\dots,K-2\}\\
    \hat B_{k,K-1}&=\frac{S_{k,K-1}+S_{k,K}}{ n_k n_{K-1}+ n_k n_{K}}\text{ for }k\in\{1,\dots,K-2\}\\
    \hat B_{K-1,K-1}&=\frac{\sum_{k=K-1}^K\sum_{\ell=k}^KS_{k\ell}}{\sum_{k=K-1}^K\sum_{\ell=k}^K n_k n_\ell}
\end{align*}
with
$$
    S_{k\ell}
    =\sum_{ij}Z_{ij}\mathds{1}(c_i^*=k, c_j^*=\ell)\text{ for }k,\ell\in\{1,\dots,K\}.
$$
Note that both authors show this result for the Bernoulli model whereas we use a Poisson approximation in this work. For this reason, we refer to this as a proof ``sketch.''\\

\noindent
We can now follow the basic outline of the proofs for Theorems \ref{thm:er} and \ref{thm:cl}, writing $\log E_n$ as the sum of
\begin{align*}
    B_n&=\log\frac{\prod_{ij} e^{-\gamma \varrho_n P_{ij}(\btheta^*)}}
    {\prod_{ij} e^{-\hat B_{\tilde c_i^*,\tilde c_j^*}}}\\
    C_n&=\log \frac{\prod_{ij}\gamma\varrho_nP_{ij}(\btheta^*)^{Z_{ij}}}{\prod_{ij}(\varrho_n\tilde B^*_{\tilde c_i^*,\tilde c_j^*})^{Z_{ij}}}\\
    D_n&=\log \frac{\prod_{ij}(\gamma\varrho_n\tilde B^*_{\tilde c_i^*,\tilde c_j^*})^{Z_{ij}}}{\prod_{ij}(\hat B_{\tilde c_i^*,\tilde c_j^*})^{Z_{ij}}}\\
    L_n&=\log\frac{\mathcal L(\tfrac{\gamma}{1-\gamma}\bP(\hat\btheta_1)|\bZ)}{\mathcal L(\gamma\varrho_n\bP(\btheta^*)|\bZ)}
\end{align*}
where $\tilde B^*_{k\ell}$ is defined by the merging operator above. The result follows by using Lemmas 1, 3, 8 and 9 where $\epsilon$ is set to be small enough such that
$$
    \epsilon+\epsilon_1+\epsilon_2+\epsilon_3
    <2\gamma \bar pC
$$
where
$$
    C=1-\frac{1}{\bar pn^2}\sum_{ij}\{\tilde B^*_{\tilde c_i^*,\tilde c_j^*}P_{ij}(\btheta^*)\}^{1/2}
    > 0.\ \ \ \square
$$

\subsection*{Proof of lemmas}

\begin{lemma}\label{lemma:8}
    {\it Let $\bA\sim\varrho_n\bP(\btheta^*)$ where $\bP(\btheta^*)$ defines an SBM with $K\geq 2$ blocks. Consider $\{\tilde\bB^*, \tilde\bc^*\}$ and $\hat\bB$ defined above. If}
$$
    X_n
    =\log \frac{\prod_{ij}(\varrho_n\tilde B^*_{\tilde c_i^*,\tilde c_j^*})^{A_{ij}}}{\prod_{ij}(\hat B_{\tilde c_i^*,\tilde c_j^*})^{A_{ij}}},
$$
{\it then for any $\epsilon>0$,}
$$
    \lim_{n\to\infty}\mathbb P(X_n\leq -\epsilon\varrho_n n^2)=0.
$$
\end{lemma}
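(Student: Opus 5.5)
We will mimic the proof of Lemma \ref{lemma:2}, since $X_n$ is again a sum of $A_{ij}$ times a log-ratio that is constant on blocks. Write $\tilde n_k$ for the merged block sizes and $\tilde S_{k\ell}=\sum_{ij}A_{ij}\mathds{1}(\tilde c_i^*=k,\tilde c_j^*=\ell)$ for the merged edge counts, so that $\hat B_{k\ell}=\tilde S_{k\ell}/(\tilde n_k\tilde n_\ell)$. Grouping the node pairs by merged block pair then gives the exact identity
$$
    X_n=\sum_{k,\ell=1}^{K-1}\tilde S_{k\ell}\log\frac{\varrho_n\tilde B^*_{k\ell}}{\hat B_{k\ell}}
    =\sum_{k,\ell=1}^{K-1}\tilde n_k\tilde n_\ell\,\hat B_{k\ell}\log\frac{\varrho_n\tilde B^*_{k\ell}}{\hat B_{k\ell}}.
$$
Dividing by $\varrho_n n^2$, the $(k,\ell)$ summand becomes $(\tilde n_k\tilde n_\ell/n^2)\,f_{k\ell}(\hat B_{k\ell}/\varrho_n)$, where $f_{k\ell}(x)=x\log(\tilde B^*_{k\ell}/x)$. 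The weights $\tilde n_k\tilde n_\ell/n^2$ are bounded by \ref{a:K}, and $K$ is fixed, so it suffices to show each summand tends to $0$ in probability.

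The key step is to prove $\hat B_{k\ell}/\varrho_n\stackrel{p}{\to}\tilde B^*_{k\ell}$ for every merged pair $(k,\ell)$. By construction of the merging operator, $\mathbb E\tilde S_{k\ell}=\varrho_n\tilde n_k\tilde n_\ell\tilde B^*_{k\ell}$. Hence
$$
\hat B_{k\ell}/\varrho_n-\tilde B^*_{k\ell}=\frac{\tilde S_{k\ell}-\mathbb E\tilde S_{k\ell}}{\varrho_n\tilde n_k\tilde n_\ell}.
$$
The numerator is a sum of $\tilde n_k\tilde n_\ell=\mathcal O(n^2)$ independent centred Bernoullis, with variance $\mathcal O(\varrho_n n^2)$. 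Chebyshev (or the Lyapunov argument of Lemma \ref{lemma:1}, restricted to the block) therefore gives an error of order $\mathcal O_p\{(\varrho_n n^2)^{-1/2}\}\to0$ under \ref{a:sparse}.

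Since $\tilde B^*_{k\ell}>0$ by \ref{a:pmin} and does not depend on $n$ (by \ref{a:pbar} and the fixed block proportions), $f_{k\ell}$ is continuous at $\tilde B^*_{k\ell}$ with $f_{k\ell}(\tilde B^*_{k\ell})=0$. The continuous mapping theorem then gives $X_n/(\varrho_n n^2)\stackrel{p}{\to}0$, which yields $\mathbb P(X_n\le-\epsilon\varrho_n n^2)\to0$ for every $\epsilon>0$.

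The main obstacle is a degenerate event and the edge of the domain of $f_{k\ell}$. On the event $\hat B_{k\ell}=0$ we also have $\tilde S_{k\ell}=0$, so the summand is $0\cdot\log(\cdot)$, which we interpret as $0$; this matches $f_{k\ell}(0)=0$ by continuous extension, so no $-\infty$ arises here, unlike in Lemma \ref{lemma:3}. Continuity of $f_{k\ell}$ on $[0,\infty)$ makes the continuous mapping step valid.

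A remaining subtlety is that the block proportions $\tilde n_k/n$ may vary with $n$. To handle this, I would state the argument with $\tilde B^*_{k\ell}$ confined to a compact subset of $(0,1)$, where the family $\{f_{k\ell}\}$ is uniformly equicontinuous, rather than relying on a fixed limit.
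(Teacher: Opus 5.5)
Your proof is correct and takes essentially the paper's route. You group $X_n$ by merged block pairs, so that $X_n/(\varrho_n n^2)$ is a finite sum of terms $(\tilde n_k\tilde n_\ell/n^2)\,f_{k\ell}(\hat B_{k\ell}/\varrho_n)$ with $f_{k\ell}(x)=x\log(\tilde B^*_{k\ell}/x)$. You then conclude from consistency of $\hat B_{k\ell}/\varrho_n$ and continuous mapping.

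The differences are only in detail:
\begin{itemize}
\item You prove the consistency step directly by Chebyshev. This is valid because the merged labels $\tilde\bc^*$ are deterministic, so $\hat B_{k\ell}$ is a plain block average. The paper instead cites Wang and Bickel and Hu et al.
\item You handle the $n$-dependence of $\tilde B^*_{k\ell}$ explicitly. Its confinement to a compact subset of $(0,1)$ is in fact automatic: each $\tilde B^*_{k\ell}$ is a convex combination of the fixed entries of $\bB^*$, which are bounded away from zero by \ref{a:pmin}.
\end{itemize}
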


\noindent
{\it Proof.} 
First,
\begin{align*}
    X_n
    =-\sum_{ij}A_{ij}(\log \hat B_{\tilde c_i^*,\tilde c_j^*}-\log \varrho_n\tilde B^*_{\tilde c_i^*,\tilde c_j^*})
\end{align*}
Then,
$$
    \mathbb P(X_n\leq -\epsilon\varrho_n n^2)
    \leq \mathbb P(|X_n|\geq \epsilon\varrho_n n^2).
$$
We write the (scaled) absolute value as 
$$
    \frac{|X_n|}{\varrho_nn^2}
    \leq \frac1{\varrho_nn^2}\sum_{ij}A_{ij}|\log \hat B_{\tilde c_i^*,\tilde c_j^*}-\log \varrho_n\tilde B^*_{\tilde c_i^*,\tilde c_j^*}|
    = \sum_{k=1}^{K-1}\sum_{\ell=1}^{K-1} \frac{S_{k\ell}}{\varrho_nn^2}\left|\log \frac{\hat B_{k\ell}}{\varrho_n\tilde B^*_{k\ell}}\right|
    \sim \sum_{k=1}^{K-1}\sum_{\ell=1}^{K-1} \frac{\hat B_{k\ell}}{\varrho_n}\left|\log \frac{\hat B_{k\ell}/\varrho_n}{\tilde B^*_{k\ell}}\right|
$$
where
$$
    S_{k\ell}
    =\sum_{ij}A_{ij}\mathds{1}(\tilde c_i^*=k,\tilde c_j^*=\ell).
$$
If we define $f(x)=x|\log (x/\delta)|$ where $\delta=\min_{k,\ell}\{\tilde B^*_{k\ell}\}$ with $\delta>0$ by \ref{a:pmin}, then by the continuous mapping theorem
$$
    \frac{\hat B_{k\ell}}{\varrho_n}\left|\log \frac{\hat B_{k\ell}/\varrho_n}{\tilde B^*_{k\ell}}\right|
    =o_p(1)
$$ 
since \cite{wang2017likelihood} and \cite{hu2020corrected} show that $\hat B_{k\ell}/\varrho_n\stackrel{p}{\to}\tilde B_{k\ell}$ for all $(k,\ell)$.
The boundedness of $x\log(x/\tilde B^*_{k\ell})$ on $[\delta,1]$ also allows us to use the Dominated Convergence Theorem  to show
\begin{align*}
    \lim_{n\to\infty}\frac{X_n}{\varrho_nn^2}
    &\sim \lim_{n\to\infty} \sum_{k=1}^{K-1}\sum_{\ell=1}^{K-1} \frac{\hat B_{k\ell}}{\varrho_n}\left|\log \frac{\hat B_{k\ell}/\varrho_n}{\tilde B^*_{k\ell}}\right|\\
    &\sim  \sum_{k=1}^{K-1}\sum_{\ell=1}^{K-1}\lim_{n\to\infty} \frac{\hat B_{k\ell}}{\varrho_n}\left|\log \frac{\hat B_{k\ell}/\varrho_n}{\tilde B^*_{k\ell}}\right|\\
    &=\sum_{k=1}^{K-1}\sum_{\ell=1}^{K-1}o_p(1)\\
    &=o_p(1) 
\end{align*}
since $K$ does not depend on $n$. Thus, for any $\epsilon>0$,
$$
    \mathbb P(X_n\leq-\epsilon\varrho_n n^2)
    \leq \mathbb P(|X_n|\geq \epsilon\varrho_n n^2) 
    \to 0
$$
as $n\to\infty$. $\square$

\begin{lemma}\label{lemma:9}
    {\it Let $\bA\sim\varrho_n\bP(\btheta_n^*)$ where $P_{ij}(\btheta^*)=B_{c_i^*,c_j^*}^*$, i.e., generated from an SBM with $K\geq 2$ blocks where the block probability matrix is of full rank. Let}
    $$
        X_n
        =\log \frac{\prod_{ij}P_{ij}(\btheta^*)^{A_{ij}}}{\prod_{ij}(\tilde B^*_{\tilde c_i^*,\tilde c_j^*})^{A_{ij}}}.
    $$
    {\it where $\tilde \bB$ is defined by the merging operator above. Then there exists some $\epsilon>0$ such that}
$$
    \lim_{n\to\infty}\mathbb P(X_n\leq \epsilon\varrho_n n^2)=0.
$$    
\end{lemma}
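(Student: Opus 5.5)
We will mirror the argument of Lemma \ref{lemma:5} and Lemma \ref{lemma:7}, namely a Chernoff bound at $t=-1/2$ followed by a Cauchy--Schwarz step via Lemma \ref{lemma:4}. First write
$$
X_n=\sum_{ij}A_{ij}\log\frac{P_{ij}(\btheta^*)}{\tilde B^*_{\tilde c_i^*,\tilde c_j^*}},
$$
noting that the factors $\varrho_n$ cancel if we insert them in numerator and denominator. The summands are independent, and each coefficient $\log\{P_{ij}(\btheta^*)/\tilde B^*_{\tilde c_i^*,\tilde c_j^*}\}$ is finite by \ref{a:pmin}. Using the Bernoulli moment generating function and $1+x\le e^x$ exactly as in Lemma \ref{lemma:5}, we obtain for $t\le 0$
$$
\mathbb P(X_n\le \epsilon\varrho_n n^2)\le \exp\Big(-t\epsilon\varrho_nn^2-\sum_{ij}\varrho_nP_{ij}(\btheta^*)\Big[1-\big(P_{ij}(\btheta^*)/\tilde B^*_{\tilde c_i^*,\tilde c_j^*}\big)^{t}\Big]\Big).
$$
Taking $t=-1/2$ gives the exponent
$$
\bar p\varrho_nn^2\Big[\tfrac{\epsilon}{2\bar p}-C\Big],\qquad C=1-\frac{1}{\bar pn^2}\sum_{ij}\{\tilde B^*_{\tilde c_i^*,\tilde c_j^*}P_{ij}(\btheta^*)\}^{1/2}.
$$

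The key structural fact needed is the mass identity $\sum_{ij}\tilde B^*_{\tilde c_i^*,\tilde c_j^*}=\sum_{ij}P_{ij}(\btheta^*)=\bar pn^2$. This follows directly from the merging operator. Each merged entry is the $n_kn_\ell$-weighted average of the original block entries over the cells it replaces, so summing over the corresponding node-pair rectangle reproduces $\sum_{ij}B^*_{c_i^*c_j^*}$ block by block. Treating the sum over all ordered pairs $(i,j)$, as in the rest of the proof, the $2n_{K-1}n_K$ cross term is exactly accounted for. With this identity, Lemma \ref{lemma:4} applies to $u=(P_{ij})_{ij}$ and $v=(\tilde B^*_{\tilde c_i^*\tilde c_j^*})_{ij}$ and gives $C\ge 0$.

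The main obstacle is showing $C>0$ strictly, and moreover bounded away from zero uniformly in $n$. Equality in Lemma \ref{lemma:4} requires $u$ to be proportional to $v$; equal total mass then forces $u=v$. Now $u=v$ would mean $B^*_{k,K-1}=B^*_{k,K}$ for all $k$, since on the merged blocks $\tilde B^*$ is constant across the former blocks $K-1$ and $K$. That would make rows $K-1$ and $K$ of $\bB^*$ identical, contradicting the full-rank assumption.

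For uniformity, note that $C$ depends on $n$ only through the proportions $n_k/n$. By \ref{a:K} these are bounded away from $0$; we will assume they converge, or else pass to a compact set of proportions. $C$ is then a continuous function of these proportions that is positive at every admissible point, so it is bounded below by some $C_0>0$. Concretely, a lower bound can be extracted from the squared Hellinger-type quantity $\frac{1}{2\bar pn^2}\sum_{ij}(\sqrt{P_{ij}}-\sqrt{\tilde B^*_{ij}})^2$. This quantity equals $C$ by the mass identity, and it is at least a constant times $n_{K-1}n_K/n^2$ times $\max_k(\sqrt{B^*_{k,K-1}}-\sqrt{B^*_{k,K}})^2$ (or a similar positive quantity).

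Finally, choosing $\epsilon<2\bar pC_0$, with $\bar p$ fixed by \ref{a:pbar}, gives $\mathbb P(X_n\le\epsilon\varrho_nn^2)\lesssim\exp(-c\varrho_nn^2)\to0$ by \ref{a:sparse}.
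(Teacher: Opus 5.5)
Your proposal is correct and follows the paper's proof of Lemma~\ref{lemma:9} essentially step for step. It uses the same Chernoff bound at $t=-1/2$, the same mass identity $\sum_{ij}\tilde B^*_{\tilde c_i^*,\tilde c_j^*}=\bar pn^2$ coming from the merging operator, Lemma~\ref{lemma:4}, and strictness from the full rank of $\bB^*$. Your additional step goes beyond the paper: you bound $C$ away from zero uniformly in $n$, either via the block proportions or via the identity $C=\frac{1}{2\bar pn^2}\sum_{ij}(\sqrt{P_{ij}}-\sqrt{\tilde B^*_{\tilde c_i^*,\tilde c_j^*}})^2$, which fills in a point the paper leaves implicit when it treats $C$ as a constant.
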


\noindent
{\it Proof.} We start by writing out $X_n$ as
$$
    X_n=\sum_{ij}A_{ij}\log\left(\frac{P_{ij}(\btheta^*)}{\tilde B^*_{\tilde c_i^*,\tilde c_j^*}}\right)
$$
Similar to the proof of Lemma \ref{lemma:5}, we use a Chernoff bound argument and find
$$
    \mathbb P(X_n\leq \epsilon\varrho_n n^2)
    \leq \exp\left(\bar p\varrho_nn^2\left[\frac{\epsilon}{2\bar p} -  \frac{1}{\bar pn^2}\sum_{ij}P_{ij}(\btheta^*)-\{\tilde B^*_{\tilde c_i^*,\tilde c_j^*}P_{ij}(\btheta^*)\}^{1/2}\right]
    \right).
$$
Additionally, 
\begin{align*}
    \sum_{ij}\tilde B^*_{\tilde c_i^*,\tilde c_j^*}
    &=\sum_{k=1}^{K-1}\sum_{\ell=1}^{K-1} n_kn_\ell\tilde B^*_{k\ell}\\
    &=\sum_{k=1}^{K-2}\sum_{\ell=1}^{K-2} n_kn_\ell B^*_{k\ell}
    +\sum_{k=1}^{K-1}(n_kn_{K-1}B_{k,K-1}^*+n_kn_KB_{k,K}^*) + \sum_{k=1}^{K-1}(n_kn_{K-1}B_{K-1,k}^*+n_kn_KB_{K,k}^*)\\
    &+n^2_{K-1}B_{K-1,K-1}^*+ 2n_{K-1}n_KB^*_{K-1,K}+ n^2_KB_{K,K}^*\\
    &=\sum_{ij} P_{ij}(\btheta^*)=\bar p n^2.
\end{align*}
Thus, by Lemma \ref{lemma:4}
$$
    \frac{1}{\bar pn^2}\sum_{ij}P_{ij}(\btheta^*)-\{\tilde B^*_{\tilde c_i^*,\tilde c_j^*}P_{ij}(\btheta^*)\}^{1/2}
    = C
$$
for some $C>0$ where the inequality is strict since $\bB^*$ is full rank. In other words, $\bB^*$ cannot be collapsed to a smaller model such that it is truly distinct from $\tilde \bB^*$. We can set $\epsilon>0$ small enough such that
$$
    \frac{\epsilon}{2\bar p}-C <0\implies \epsilon< 2\bar p C.
$$
Then we have
$$
    \mathbb P(X_n\leq\epsilon\varrho_n n^2)
    \lesssim \exp(-\varrho_nn^2)
    \to 0
$$
as $n\to\infty$ by \ref{a:sparse}. $\square$

\clearpage

\section{Additional simulation results: DCBM versus PABM}\label{supp:sim}
We provide an additional simulation setting to demonstrate the flexibility of the proposed method. Similar to the SBM vs.~DCBM test, we test two models which both exhibit community structure but have different levels of flexibility. Namely, we test the null hypothesis that the network was generated from a DCBM against the alternative model of a PABM. To generate PABM networks, we follow a similar approach as in \cite{senguptapabm}. For node $i\in\{1,\dots,n\}$ and community labels $\bc\in\{1,\dots,K\}^n$,
$$
    \lambda_{ij}
    =
    \begin{cases}
        a/\sqrt{\beta/(1+\beta)}&c_i=c_j\\
        b/\sqrt{1/(1+\beta)}&c_i\neq c_j.
    \end{cases}
$$
We fix $K=2$ and assign $\delta=0.25$ proportion of the nodes to community one and the remaining $1-\delta$ nodes to community two. In each community we select half of the nodes as group 1 and the other half as group 2. We then set $a=\frac12(1+\nu)$ and $b=\frac12(1-\nu)$ for nodes in group 1 and $a=\frac12(1-\nu)$ and $b=\frac12(1+\nu)$ for nodes in group 2. In words,
for a group 1 and group 2 node in community one, the group 1 node is more popular in community one while the group 2 node is more popular in community two. When $\nu=0$, this reduces to the null hypothesis of the network coming from a DCBM. We set $\beta=3$ and vary $\nu\in\{0.10, 0.11, \dots, 0.16\}$. Since $\nu>0$, the networks are generated from a proper PABM so the test should reject the null hypothesis. Neither ECV nor NETCROP are described for the PABM, so we only consider the proposed method with $\gamma\in\{0.2, 0.4, 0.5, 0.6\}$. The results are in Figure \ref{fig:sbm_pabm} averaged over 100 MC repetitions. All values of $\gamma$ yield a non-decreasing power with $\gamma=0.4$ yielding the best results.

\begin{figure}
    \centering
    \includegraphics[width=0.55\linewidth]{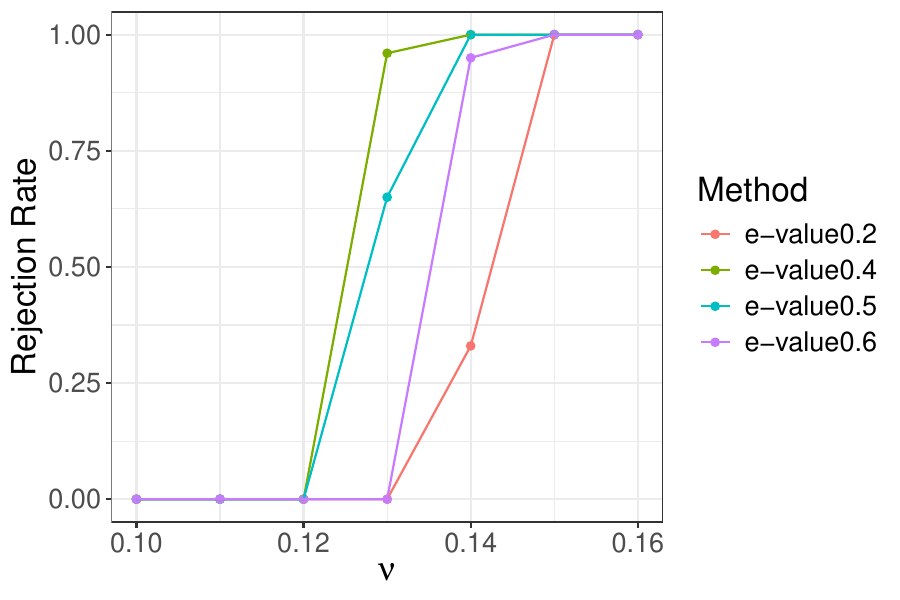}
    \caption{$H_0: \nu=0$ (DCBM) vs. $H_1: \nu>0$ (PABM) with increasing $\nu\in[0.10,0.16]$ and $K=2$. All networks are generated under the alternative model.}
    \label{fig:sbm_pabm}
\end{figure}

\end{document}